\documentclass[11pt]{article}
\usepackage{geometry}
\geometry{a4paper}
\usepackage{amsmath, amssymb, amsthm}
\usepackage{graphicx}
\usepackage{bm}
\usepackage[linesnumbered,ruled]{algorithm2e}
\newtheorem{theorem}{Theorem}
\newtheorem{lemma}{Lemma}

\SetKwRepeat{Do}{do}{while}
\textheight 24.4cm
\textwidth 16.4cm
\baselineskip 14pt
\topmargin 0.0cm
\headheight 0.0cm
\headsep 0.0cm
\footskip 1.0cm
\oddsidemargin 0.0cm
\evensidemargin 0.0cm

\title{Strongly Stable Matchings under Matroid Constraints}
\author{Naoyuki Kamiyama%
\thanks{Institute of Mathematics for Industry, Kyushu University, Fukuoka, Japan.}
\thanks{{\ttfamily kamiyama@imi.kyushu-u.ac.jp}}
\thanks{This work was supported by JSPS KAKENHI Grant Number JP20H05795.}
}
\date{}

\begin{document}

\maketitle

\begin{abstract}
We consider a many-to-one variant of 
the stable matching problem. 
More concretely, we consider the variant of 
the stable matching problem where one side has a matroid constraint. 
Furthermore, we consider the situation where 
the preference of each agent may contain ties. 
In this setting, we consider the problem of checking
the existence of a strongly stable matching, and 
finding a strongly stable matching if a strongly stable matching
exists. 
We propose a polynomial-time algorithm for this problem. 
\end{abstract}

\section{Introduction}

The topic of this paper is a matching problem between 
two groups. Especially, we focus on a model where
each agent has a preference over agents in the other group. 
The stable matching problem, which was proposed by 
Gale and Shapley~\cite{GS62}, is 
one of the most fundamental models. 
In the setting considered in \cite{GS62}, the 
preference of each agent is strict. 
In contrast, we consider the situation where 
the preference of each agent 
may contain ties. 
That is, each agent may be indifferent between 
potential partners. 
It is known that if 
the preferences may contain ties, then the situation 
dramatically changes (see, e.g., \cite{IM08} and 
\cite[Chapter~3]{M13}). 
Under preferences with ties, there exist three 
stability concepts. 
The first concept is called weak stability. 
This property guarantees that 
there does not exist an unmatched pair $x,y$ such that 
$x$ (resp.\ $y$) prefers $y$ (resp.\ $x$) to the current partner.
Irving~\cite{I94} proved 
that there always exists a weakly stable matching, and 
we can find a weakly stable matching
in polynomial time by slightly modifying the algorithm 
of Gale and Shapley~\cite{GS62}. 
The second concept is called super-stability.
This property guarantees
that there does not exist an unmatched pair $x,y$ 
such that $x$ (resp.\ $y$) prefers $y$ (resp.\ $x$) to the current partner, or is 
indifferent between $y$ (resp.\ $x$) and the current partner.
The last concept is called strong stability,
which 
is the main topic of this paper. 
This property guarantees that 
there does not exist an unmatched pair $x,y$ such that
(i) $x$ prefers $y$ to 
the current partner, and (ii) $y$ 
prefers $x$ to the current partner, or is indifferent between 
$x$ and the current partner. 
It is known that 
a super-stable matching and a strongly stable matching 
may not exist (see \cite{I94}). 

Furthermore, we consider a matroid constraint.   
Matroids can express not only capacity constraints 
but also more complex constraints including hierarchical 
capacity constraints.
Thus, matroid constraints are important from not only the theoretical 
viewpoint but also the practical viewpoint. 
Matroid approaches to 
the stable matching problem 
have been extensively 
studied (see, e.g., \cite{K15,K19,K20,K22,F03,FT07,FK16,Y17,MY15,IwataY20,KTY18}).
Under matroid constraints, 
Kamiyama~\cite{K22} proposed a polynomial-time 
algorithm for the problem of checking
the existence of a super-stable matching, and 
finding a super-stable matching if 
a super-stable matching exists. 
(Precisely speaking, in \cite{K22}, 
the many-to-many variant was considered.) 

{\bf Our contribution.}
In this paper, we consider the problem of checking
the existence of a strongly stable matching, and 
finding a strongly stable matching if a strongly stable matching
exists in the many-to-one variant 
of the stable matching problem with 
matroid constraints. 
We propose a polynomial-time algorithm for this problem. 
Although special cases of this problem were considered in \cite{K15,K19}, 
the polynomial-time solvability of this problem
has been open.
In this paper, we affirmatively settle this question. 

Our algorithm is based on 
the algorithm proposed by 
Olaosebikan and Manlove~\cite{OM20} for 
the student-project allocation problem with ties.
The extension needs non-trivial observations 
about matroid constraints. 
Furthermore, when we express the 
student-project allocation problem with ties 
in our matroid model, the definition of 
a blocking pair is slightly different from that 
in \cite{OM20}. 
See Section~\ref{appendix:spa} for details. 

{\bf Related work.} 
In the one-to-one setting, 
Irving~\cite{I94} 
proposed polynomial-time algorithms for 
the super-stable matching problem and 
the strongly stable matching problem 
(see also \cite{M99}). 
Kunysz, Paluch, and Ghosal~\cite{KunyszPG16}
considered 
characterization of the set of all strongly stable matchings. 
Kunysz~\cite{Kunysz18} considered 
the weighted version of the strongly stable matching
problem. 
In the many-to-one setting, 
Irving, Manlove, and Scott~\cite{IMS00} proposed 
a polynomial-time algorithm for 
the super-stable matching problem.
Irving, Manlove, and Scott~\cite{IMS03} and 
Kavitha, Mehlhorn, Michail, and Paluch~\cite{KMMP07}
proposed polynomial-time algorithms for the 
strongly stable matching
problem. 
In the many-to-many setting, 
Scott~\cite{S05} considered 
the super-stable matching problem, 
and the papers~\cite{Malhotra04,ChenG10,Kunysz19} considered 
the strongly stable matching problem.
Furthermore, Olaosebikan and Manlove~\cite{OlaosebikanM22,OM20}
considered super-stability and strong stability in the 
student-project allocation problem. 

For the situation where a master list is given, 
Irving, Manlove, and Scott~\cite{IMS08} gave 
simple polynomial-time algorithms for the 
super-stable matching problem 
and the strongly stable matching problem. 
O'Malley~\cite{O07} gave 
polynomial-time algorithms for 
the 
super-stable matching problem 
and the strongly stable matching problem
in the many-to-one setting.
Under matroid constraints, 
Kamiyama~\cite{K15,K19} gave 
polynomial-time algorithms for 
the super-stable matching problem 
and the strongly stable matching problem
in the many-to-one and many-to-many settings. 

\section{Preliminaries} 

We denote by $\mathbb{Z}$ and $\mathbb{Z}_+$ 
the sets of integers and non-negative integers, respectively.  
For each positive integer $z$, we define 
$[z] := \{1,2,\ldots,z\}$. 
Define $[0] := \emptyset$. 
For each finite set $X$ and each element $u$, we define 
$X + u := X \cup \{u\}$ and $X - u := X \setminus \{u\}$.

An ordered pair ${\bf M} = (U, \mathcal{I})$ of a finite set $U$ and 
a non-empty family $\mathcal{I}$ of subsets of $U$ 
is called a 
{\it matroid} if 
the following conditions are satisfied 
for every pair of 
subsets $I,J \subseteq U$. 
\begin{itemize}
\item[\bf (I1)]
If $I \subseteq J$ and $J \in \mathcal{I}$, then 
$I \in \mathcal{I}$. 
\item[\bf (I2)]
If $I, J \in \mathcal{I}$ and $|I| < |J|$, then 
there exists an element $u \in J \setminus I$ such that 
$I + u \in \mathcal{I}$. 
\end{itemize} 
If ${\bf M} = (U, \mathcal{I})$ is a matroid, then 
we say that \emph{${\bf M}$ is defined on a ground set $U$}. 
In addition, an element in 
$\mathcal{I}$ is called an {\em independent set of ${\bf M}$}. 

Let $U$ be a finite set. 
A function $\rho \colon 2^U \to \mathbb{Z}$ 
is said to be \emph{submodular} if 
$\rho(X) + \rho(Y) \ge \rho(X \cup Y) + \rho(X \cap Y)$
for every pair of subsets $X,Y \subseteq U$. 
For each function $\rho \colon 2^U \to \mathbb{Z}$, 
a subset $X \subseteq U$ is called a \emph{minimizer of $\rho$} if 
$X$ minimizes $\rho(X)$ 
among all the subsets of $U$. 
For each 
function $\rho \colon 2^U \to \mathbb{Z}$, 
a minimizer $X$ of $\rho$ is said to be 
\emph{minimal} if 
there does not exist a minimizer $Y$ of $\rho$ such that 
$Y \subsetneq X$. 
For each submodular function 
$\rho \colon 2^U \to \mathbb{Z}$, it is known that if 
$\rho(X)$ can be evaluated 
in time bounded by a 
polynomial in $|U|$ for every subset $X \subseteq U$, then 
we can find a minimizer $X$ of $\rho$
in time bounded by a 
polynomial in $|U|$ (see, e.g., \cite{S00,IFF01}). 
Furthermore, under the same condition, it is known 
that a minimal minimizer of $\rho$ is uniquely determined, and 
we can find the minimal minimizer 
of $\rho$ in time bounded by a polynomial in $|U|$
(see, e.g., \cite[Note~10.12]{M03}). 

In a finite simple
directed graph, an arc from a vertex $u$ to a vertex $v$ 
is denoted by $uv$. 
A directed cycle $P$ in a finite simple directed graph is said to be 
\emph{simple} if $P$ passes through each vertex at most once. 
We may not distinguish between 
a simple directed cycle $P$ and the set of arcs 
that $P$ passes through. 
For each simple directed cycle $P$ 
in a finite simple directed graph, an arc $a$ that $P$ does not pass through 
is called a \emph{shortcut arc for $P$} if 
$P + a$ contains a simple directed cycle that is different 
from $P$. 
It is not difficult to see that if 
there exists a directed cycle in a finite simple directed graph, then 
there exists a simple directed cycle for which 
there does not exist a shortcut arc. 

\subsection{Problem formulation} 

Throughout this paper, let 
$G$ be a finite simple undirected bipartite graph. 
We denote by $V$ and $E$ the vertex set of $G$ and 
the edge set of $G$, respectively. 
We assume that $V$ is partitioned into $D$ and $H$, and 
every edge in $E$ connects a vertex in $D$ and a vertex in $H$. 
We call a vertex in $D$ (resp.\ $H$) 
a {\it doctor} (resp.\ {\it hospital}). 
For each doctor $d \in D$ and each hospital 
$h \in H$, if there exists the edge in $E$ 
between $d$ and $h$, then we denote by $(d,h)$ this edge. 
For each subset $F \subseteq E$ and each 
subset $X \subseteq D$, 
we denote by $F(X)$ the set of edges $(d, h) \in F$ 
such that $d \in X$.
For each subset $F \subseteq E$ and 
each doctor $d \in D$, 
we write $F(d)$ instead of $F(\{d\})$.  

For each doctor $d \in D$, we are given a transitive binary 
relation $\succsim_d$ on $E(d) \cup \{\emptyset\}$ such that, for every pair of 
elements $e,f \in E(d) \cup \{\emptyset\}$, at least one of $e \succsim_d f$ and 
$f \succsim_d e$ holds. 
For each doctor $d \in D$ and each pair of 
elements $e,f \in E(d) \cup \{\emptyset\}$, 
if $e \succsim_d f$ and 
$f \not \succsim_d e$ 
(resp.\ $e \succsim_d f$ and 
$f \succsim_d e$), then 
we write $e \succ_d f$
(resp.\ $e \sim_d f$). 
In this paper, 
we assume that,
for every doctor $d \in D$ and every edge $e \in E(d)$, 
we have 
$e \succ_d \emptyset$. 

We are given a matroid ${\bf H} = (E, \mathcal{F})$
such that $\{e\} \in \mathcal{F}$ for every edge $e \in E$.
In addition, we are given 
a transitive binary 
relation $\succsim_H$ on $E$ such that, for every pair of 
edges $e,f \in E$, at least one of $e \succsim_H f$ and 
$f \succsim_H e$ 
holds. 
For each pair of 
edges $e,f \in E$, 
if $e \succsim_H f$ and 
$f \not \succsim_H e$ 
(resp.\ $e \succsim_H f$ and 
$f \succsim_H e$), then 
we write $e \succ_H f$
(resp.\ $e \sim_H f$). 
We assume that, 
for every subset $F \subseteq E$, we can 
determine whether $F \in \mathcal{F}$ in time bounded by a polynomial 
in the input size of $G$.  
(That is, we consider the independence oracle model.) 

A subset $\mu \subseteq E$ is called a {\it matching in $G$}
if the following conditions are satisfied. 
\begin{itemize}
\item[\bf (M1)]
$|\mu(d)| \le 1$ for every doctor $d \in D$. 
\item[\bf (M2)]
$\mu \in \mathcal{F}$. 
\end{itemize} 
For each matching $\mu$ in $G$ and each doctor 
$d \in D$ such that 
$\mu(d) \neq \emptyset$, 
we do not distinguish between $\mu(d)$ and the unique element in 
$\mu(d)$. 

Let $\mu$ be a matching in $G$. 
We say that 
an edge $e = (d,h) \in E \setminus \mu$ 
\emph{weakly} (resp.\ \emph{strongly}) \emph{blocks $\mu$ on $d$} if 
$e \succsim_d \mu(d)$
(resp.\ $e \succ_d \mu(d)$). 
In addition, 
we say that 
an edge $e \in E \setminus \mu$ 
\emph{weakly} (resp.\ \emph{strongly}) \emph{blocks $\mu$ on ${\bf H}$} if 
one of the following conditions is satisfied. 
\begin{itemize}
\item[\bf (H1)]
$\mu + e \in \mathcal{F}$. 
\item[\bf (H2)] 
$\mu + e \notin \mathcal{F}$, and 
there exists an edge $f \in \mu$ such that 
$e \succsim_H f$
(resp.\ $e \succ_H f$)
and 
$\mu + e - f \in \mathcal{F}$. 
\end{itemize}
Then an edge $e = (d,h) \in E \setminus \mu$ 
is called a \emph{blocking edge for $\mu$} if the following conditions are 
satisfied. 
\begin{itemize}
\item[\bf (B1)]
$e$ weakly blocks $\mu$ on
$d$ and ${\bf H}$. 
\item[\bf (B2)]
$e$ strongly blocks $\mu$ on
at least one of $d$ and ${\bf H}$. 
\end{itemize}
Finally, $\mu$ is called a \emph{strongly 
stable matching in $G$} if there does not exist a 
blocking edge $e \in E \setminus \mu$ for $\mu$. 

Here we give a concrete example of our abstract model. 
For each hospital $h \in H$, we are given a  
family $\mathcal{P}_h$ of subsets of $E(h)$
such that, 
for every pair of distinct elements $X,Y \in \mathcal{P}_h$, 
$X \cap Y = \emptyset$ or  
$X \subseteq Y$ or $Y \subseteq X$. 
That is, $\mathcal{P}_h$ is a laminar family. 
In addition, for each hospital $h \in H$, we are given 
a capacity function $c_h \colon \mathcal{P}_h \to \mathbb{Z}_+$. 
Define 
$\mathcal{F}$ as the set of subsets $F \subseteq E$ 
such that,
for every 
hospital $h \in H$ and every element $P \in \mathcal{P}_h$, 
we have $|F(h) \cap P| \le c_ h(P)$. 
It is well known that the ordered pair 
${\bf H} = (E, \mathcal{F})$ defined in this way is a matroid. 
For example, laminar constraints in the stable matching 
problem are considered in
\cite{H10}. 

\subsection{Basics of matroids} 

Throughout this subsection, 
let ${\bf M} = (U, \mathcal{I})$ be a matroid.

A subset $B \subseteq U$ is called a {\em base of ${\bf M}$} if 
$B$ is an inclusion-wise maximal independent set of ${\bf M}$. 
Notice that (I2) implies that 
all the bases of ${\bf M}$ have the same size. 
We can find a base of ${\bf M}$ 
by adding elements in $U$ one by one and checking whether 
the resulting subset is an independent set. 
A subset $I \subseteq U$ such that $I \notin \mathcal{I}$ is called a 
{\em dependent set of ${\bf M}$}.
Furthermore, a subset $C \subseteq U$ is called a {\em circuit of ${\bf M}$} if 
$C$ is an inclusion-wise minimal dependent set of ${\bf M}$. 
Notice that the definition of a circuit implies that, 
for every pair of distinct circuits $C_1,C_2$ of ${\bf M}$, 
we have 
$C_1 \setminus C_2 \neq \emptyset$ and 
$C_2 \setminus C_1 \neq \emptyset$.  

\begin{lemma}[See, e.g., {\cite[page 15, Exercise 14]{O11}}] \label{lemma:elimination}
Let $C_1,C_2$ be distinct circuits 
of ${\bf M}$ such that $C_1 \cap C_2 \neq \emptyset$.
Then for 
every element $u \in C_1 \cap C_2$
and every element $v \in C_1 \setminus C_2$, 
there exists a circuit $C$ of ${\bf M}$ 
such that $v \in C \subseteq (C_1 \cup C_2) - u$. 
\end{lemma}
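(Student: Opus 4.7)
The plan is to prove strong circuit elimination by a rank argument: show that $v$ lies in the closure of $(C_1 \cup C_2) - u - v$ inside ${\bf M}$, and then extract the desired circuit as a fundamental circuit. Write $Y := (C_1 \cup C_2) - u$ and let $r$ denote the rank function of ${\bf M}$. Since $u \in C_1 \cap C_2$ and $v \in C_1 \setminus C_2$, we have $v \neq u$ and $v \notin C_2$, while $C_1 - v$ and $C_2 - u$ are independent as proper subsets of circuits, with ranks $|C_1|-1$ and $|C_2|-1$.

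The key step is a single application of submodularity of $r$ to the pair $C_2$ and $Y - v$, chosen so that $v$ appears on only one side. Using $v \notin C_2$, one checks $C_2 \cap (Y - v) = C_2 - u$ and $C_2 \cup (Y - v) = (C_1 \cup C_2) - v$; since $r(C_2) = r(C_2 - u) = |C_2| - 1$, submodularity reduces to
\[
  r(Y - v) \;\ge\; r\bigl((C_1 \cup C_2) - v\bigr).
\]
On the other hand, because $C_1 - v$ is independent but $C_1$ is a circuit, $v$ lies in the closure of $C_1 - v$, and hence in the closure of $(C_1 \cup C_2) - v$; this yields $r((C_1 \cup C_2) - v) = r(C_1 \cup C_2) \ge r(Y)$ by monotonicity. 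Chaining the two inequalities with the trivial $r(Y - v) \le r(Y)$ gives $r(Y - v) = r(Y)$.

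From this equality I would extend the independent set $C_2 - u \subseteq Y - v$ to a maximal independent subset $B$ of $Y - v$, so that $|B| = r(Y - v) = r(Y)$. Then $B + v \subseteq Y$ is a dependent set and hence contains a circuit $C$; since $B$ itself is independent, any circuit contained in $B + v$ must include $v$, giving $v \in C \subseteq Y = (C_1 \cup C_2) - u$, which is exactly the required circuit.

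The main obstacle I anticipate is choosing the right submodular inequality: one must exploit the asymmetry $v \in C_1 \setminus C_2$ by pairing $Y - v$ with $C_2$ (the circuit avoiding $v$) rather than with $C_1$; applying submodularity with $C_1$ would leave $v$ on both sides of the inequality and fail to isolate its contribution to the rank.
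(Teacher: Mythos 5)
Your argument is correct and complete. Note that the paper does not prove this lemma at all; it is quoted as a known fact (strong circuit elimination) with a citation to Oxley, so there is no in-paper proof to compare against. Your rank-function derivation is the standard self-contained proof: the submodular inequality applied to $C_2$ and $Y-v$ is set up correctly (the identities $C_2\cap(Y-v)=C_2-u$ and $C_2\cup(Y-v)=(C_1\cup C_2)-v$ both use $v\notin C_2$ and $u\in C_2$), the cancellation $r(C_2)=r(C_2-u)=|C_2|-1$ is valid because proper subsets of circuits are independent, and the closure step $r((C_1\cup C_2)-v)=r(C_1\cup C_2)$ follows from $v\in\mathrm{cl}(C_1-v)$. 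The final extraction of the circuit from $B+v$ is also sound, since $|B+v|=r(Y)+1>r(B+v)$ forces dependence and independence of $B$ forces $v$ to lie in any circuit of $B+v$. Your closing remark about why one must pair $Y-v$ with $C_2$ rather than $C_1$ correctly identifies the only delicate choice in the argument.
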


Let $I$ be an independent set of ${\bf M}$.
Define 
${\sf sp}_{{\bf M}}(I)$ as the set of elements 
$u \in U \setminus I$ such that $I + u \notin \mathcal{I}$. 
Then it is not difficult to see that, 
for every element $u \in {\sf sp}_{{\bf M}}(I)$, 
$I + u$ contains a circuit of ${\bf M}$ as  
a subset, and (I1) implies that $u$ belongs to this circuit.
Notice that Lemma~\ref{lemma:elimination}
implies that such a circuit is uniquely determined. 
We call this circuit the {\it fundamental circuit of $u$ 
for $I$ and ${\bf M}$}, and we denote by 
${\sf C}_{{\bf M}}(u,I)$ this circuit.
It is well known that, 
for every element $u \in {\sf sp}_{{\bf M}}(I)$, 
${\sf C}_{{\bf M}}(u,I)$ is the set of 
elements $v \in I  + u$ such that  
$I + u - v \in \mathcal{I}$
(see, e.g., \cite[p.20, Exercise 5]{O11}). 
For each element $u \in {\sf sp}_{{\bf M}}(I)$, 
we define ${\sf D}_{{\bf M}}(u,I) := {\sf C}_{{\bf M}}(u,I) - u$. 
Then (H2) can be rewritten as follows. 
\begin{itemize}
\item
$e \in {\sf sp}_{\bf H}(\mu)$, and 
there exists an edge $f \in {\sf D}_{\bf H}(e,\mu)$ such that 
$e \succsim_H f$
(resp.\ $e \succ_H f$).
\end{itemize}

Let $X$ be a subset of $U$.
We define $\mathcal{I}|X$ as the family of 
subsets $ I\subseteq X$ such that $I \in \mathcal{I}$. 
Furthermore, we define ${\bf M}|X := (X, \mathcal{I}|X)$.
It is known that 
${\bf M}|X$ is a matroid
(see, e.g., \cite[p.20]{O11}). 
Define ${\sf r}_{{\bf M}}(X)$ as the size of a base of ${\bf M}|X$. 
For each subset $I \subseteq U \setminus X$, 
we define 
\begin{equation*}
{\sf p}_{{\bf M}}(I; X) := {\sf r}_{{\bf M}}(I \cup X) - {\sf r}_{{\bf M}}(X).
\end{equation*}
Furthermore, we define 
\begin{equation*}
\mathcal{I}/X := \{I \subseteq U \setminus X \mid {\sf p}_{{\bf M}}(I; X) = |I|\}, \ \ \ 
{\bf M}/X := (U \setminus X, \mathcal{I}/X).
\end{equation*}  
It is known that ${\bf M}/X$ is a matroid
(see, e.g., \cite[Proposition~3.1.6]{O11}).

\begin{lemma}[{See, e.g., \cite[Proposition~3.1.25]{O11}}] \label{lemma:minor}
For every pair of disjoint subsets $X,Y \subseteq U$, 
\begin{equation*}
({\bf M}/X) / Y = {\bf M}/ (X \cup Y), \ \ \ 
({\bf M}/X) | Y = ({\bf M}| (X \cup Y)) / X.
\end{equation*}  
\end{lemma}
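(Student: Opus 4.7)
The plan is to show, for each identity, that both sides are matroids on the same ground set with the same family of independent sets. Since $X$ and $Y$ are disjoint, the two sides of the first identity are matroids on $U \setminus (X \cup Y)$, and those of the second identity are matroids on $Y$, so the ground sets agree by inspection; only the independent-set families need argument.

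The main auxiliary step is to establish the rank formula
\begin{equation*}
{\sf r}_{{\bf M}/X}(Z) = {\sf r}_{{\bf M}}(Z \cup X) - {\sf r}_{{\bf M}}(X) \qquad \text{for every } Z \subseteq U \setminus X.
\end{equation*}
To prove this, I would pick a base $B_X$ of ${\bf M}|X$ and, using (I2) repeatedly, extend it to a base $B$ of ${\bf M}|(Z \cup X)$; then $B \setminus B_X$ lies in $Z$, has size ${\sf r}_{{\bf M}}(Z \cup X) - {\sf r}_{{\bf M}}(X)$, and belongs to $\mathcal{I}/X$ because the inclusions $B \subseteq (B \setminus B_X) \cup X \subseteq Z \cup X$ force ${\sf r}_{\bf M}((B \setminus B_X) \cup X) = {\sf r}_{\bf M}(Z \cup X)$. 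Conversely, any $I \in \mathcal{I}/X$ with $I \subseteq Z$ satisfies $|I| = {\sf r}_{\bf M}(I \cup X) - {\sf r}_{\bf M}(X) \le {\sf r}_{\bf M}(Z \cup X) - {\sf r}_{\bf M}(X)$ by monotonicity of ${\sf r}_{\bf M}$. An analogous, essentially trivial, observation gives ${\sf r}_{{\bf M}|W}(Z) = {\sf r}_{{\bf M}}(Z)$ for every $Z \subseteq W \subseteq U$, since restriction preserves both independence and its witnesses.

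Given these two formulas, both identities reduce to rank arithmetic. For the first, $I \subseteq U \setminus (X \cup Y)$ is independent in $({\bf M}/X)/Y$ iff ${\sf r}_{{\bf M}/X}(I \cup Y) - {\sf r}_{{\bf M}/X}(Y) = |I|$, and applying the contraction-rank formula twice collapses this to ${\sf r}_{\bf M}(I \cup X \cup Y) - {\sf r}_{\bf M}(X \cup Y) = |I|$, which is exactly membership in $\mathcal{I}/(X \cup Y)$. For the second, $I \subseteq Y$ is independent in $({\bf M}|(X \cup Y))/X$ iff ${\sf r}_{{\bf M}|(X \cup Y)}(I \cup X) - {\sf r}_{{\bf M}|(X \cup Y)}(X) = |I|$, which by the restriction formula becomes ${\sf r}_{\bf M}(I \cup X) - {\sf r}_{\bf M}(X) = |I|$; this is precisely the condition for $I \in \mathcal{I}/X$, and since $I \subseteq Y$ it coincides with independence in $({\bf M}/X)|Y$. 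The only step requiring any real care is the contraction-rank formula, since the excerpt defines $\mathcal{I}/X$ through the condition ${\sf p}_{\bf M}(I;X) = |I|$ rather than through a rank function on ${\bf M}/X$ directly; once this bridge is in place, the lemma is a two-line substitution for each identity.
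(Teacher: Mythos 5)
Your argument is correct. Note that the paper does not prove this lemma at all---it is stated as a known fact with a citation to Oxley's \emph{Matroid Theory}---so there is no in-paper proof to compare against; your write-up is essentially the standard textbook argument, reducing both identities to the rank formula ${\sf r}_{{\bf M}/X}(Z) = {\sf r}_{\bf M}(Z \cup X) - {\sf r}_{\bf M}(X)$ and the trivial restriction formula, and then doing the arithmetic. The only point worth making explicit is that $B \setminus B_X \subseteq Z$ because $B \cap X = B_X$ (which follows from the maximality of $B_X$ as an independent subset of $X$); with that observation filled in, every step checks out against the paper's definitions of ${\sf p}_{\bf M}(\cdot\,;X)$ and $\mathcal{I}/X$.
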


\begin{lemma}[{See, e.g., \cite[Proposition~3.1.7]{O11}}] \label{lemma:contraction}
Let $X$ and $B$ be a subset of $U$ and 
a base of ${\bf M}|X$, respectively. 
Then for every subset $I \subseteq U \setminus X$, 
$I$ is an independent set 
{\rm (}resp.\ a base{\rm )}
of ${\bf M}/ X$
if and only if 
$I \cup B$ is an independent set 
{\rm (}resp.\ a base{\rm )}
of ${\bf M}$. 
\end{lemma}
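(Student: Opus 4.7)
The plan is to reduce both halves of the biconditional to statements purely about the rank function ${\sf r}_{{\bf M}}$. Since $B$ is a base of ${\bf M}|X$, we have $B \in \mathcal{I}$, $B \subseteq X$, and $|B| = {\sf r}_{{\bf M}}(X)$. The key auxiliary fact I would establish first is the identity ${\sf r}_{{\bf M}}(I \cup B) = {\sf r}_{{\bf M}}(I \cup X)$ for every subset $I \subseteq U \setminus X$. To see this, process the elements of $X \setminus B$ one at a time: for each such element $x$, the set $B + x$ is dependent (otherwise $B$ would not be a maximal independent subset of $X$), and (I1) together with Lemma~\ref{lemma:elimination} implies that $x$ cannot enlarge the rank of any set already containing $B$. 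Iterating this observation yields the equality.

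For the independent-set half, I would simply unfold the definitions. By construction, $I \in \mathcal{I}/X$ means ${\sf p}_{{\bf M}}(I;X) = |I|$, i.e., ${\sf r}_{{\bf M}}(I \cup X) - {\sf r}_{{\bf M}}(X) = |I|$. Substituting the auxiliary identity together with $|B| = {\sf r}_{{\bf M}}(X)$, this is equivalent to ${\sf r}_{{\bf M}}(I \cup B) = |I| + |B|$. Because $I \subseteq U \setminus X$ and $B \subseteq X$ are disjoint we have $|I \cup B| = |I| + |B|$, so the condition becomes ${\sf r}_{{\bf M}}(I \cup B) = |I \cup B|$, which is exactly the statement that $I \cup B$ is an independent set of ${\bf M}$.

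For the base half, I would combine the independent-set equivalence with a cardinality comparison. Taking $I$ to be any base of ${\bf M}/X$ and applying the independent-set equivalence shows that $I \cup B$ is independent in ${\bf M}$, and moreover ${\sf r}_{{\bf M}}(U) = {\sf r}_{{\bf M}}(U \setminus X \cup B) = |I| + |B|$ (using the auxiliary identity with $I$ replaced by a maximal such $I$), hence $|I| = {\sf r}_{{\bf M}}(U) - {\sf r}_{{\bf M}}(X)$. Thus a base $I$ of ${\bf M}/X$ yields $|I \cup B| = {\sf r}_{{\bf M}}(U)$, so $I \cup B$ is a base of ${\bf M}$. Conversely, if $I \cup B$ is a base of ${\bf M}$, the first half gives $I \in \mathcal{I}/X$, and the same size count shows $|I|$ equals the rank of ${\bf M}/X$, so $I$ is a base.

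The main (if mild) obstacle is proving the rank identity ${\sf r}_{{\bf M}}(I \cup B) = {\sf r}_{{\bf M}}(I \cup X)$ cleanly using only the tools introduced so far in the paper, namely (I1), (I2), and Lemma~\ref{lemma:elimination}, rather than invoking submodularity or the closure operator, neither of which is defined in the excerpt. I expect to handle this by an induction on $|X \setminus B|$: given the inductive step for a single element $x \in X \setminus B$, one extends any base of ${\bf M}|(I \cup B)$ to a base of ${\bf M}|(I \cup B + x)$ and uses that $B + x$ contains a circuit of ${\bf M}$ (via Lemma~\ref{lemma:elimination}) to conclude that no such extension actually enlarges the base, giving the required equality.
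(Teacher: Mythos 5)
The paper does not prove this lemma at all --- it is quoted directly from Oxley \cite[Proposition~3.1.7]{O11} --- so there is no in-paper argument to compare against; your proof is the standard one and is correct: the rank identity ${\sf r}_{{\bf M}}(I \cup B) = {\sf r}_{{\bf M}}(I \cup X)$ reduces the definition of $\mathcal{I}/X$ via ${\sf p}_{{\bf M}}$ to independence of $I \cup B$, and the cardinality count handles the base case. One minor remark: in establishing that rank identity you do not need Lemma~\ref{lemma:elimination}; extending $B$ to a maximal independent subset $J$ of any $Y \supseteq B$ and noting via (I1) that $J + x$ independent would force $B + x$ independent already suffices.
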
 

\begin{lemma}[{Iri and Tomizawa~\cite[Lemma~2]{IriT76}}] \label{lemma:IriT76_sequence}
Let $I$ be an independent set of ${\bf M}$.
We are given 
distinct elements $u_1,u_2,\ldots,u_{\ell} \in U \setminus I$ 
and $v_1,v_2,\ldots,v_{\ell} \in I$ satisfying the following 
conditions.
\begin{itemize}
\item
$u_i \in {\sf sp}_{\bf M}(I)$ 
for every integer $i \in [\ell]$.
\item
$v_i \in {\sf C}_{\bf M}(u_i,I)$
for every integer $i \in [\ell]$. 
\item
$v_i \notin {\sf C}_{\bf M}(u_j,I)$
holds for every pair of integers $i,j \in [\ell]$ such that $i < j$.
\end{itemize}
Define $J := (I \cup \{u_1,u_2,\ldots,u_{\ell}\}) \setminus \{v_1,v_2,\ldots,v_{\ell}\}$.
Then $J \in \mathcal{I}$ and 
${\sf sp}_{\bf M}(I) \cup I = {\sf sp}_{\bf M}(J) \cup J$. 
\end{lemma}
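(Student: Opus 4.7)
For the independence of $J$, I plan to induct on $\ell$ via circuit elimination (Lemma~\ref{lemma:elimination}). Set $J_k := (I \cup \{u_1, \ldots, u_k\}) \setminus \{v_1, \ldots, v_k\}$ for $0 \le k \le \ell$, so $J_0 = I$ and $J_\ell = J$, and prove $J_k \in \mathcal{I}$ by induction on $k$; the base case $k = 0$ is the hypothesis on $I$. For the inductive step, assume $J_{k-1} \in \mathcal{I}$ and suppose for contradiction that $J_k = J_{k-1} + u_k - v_k$ is dependent. Then $J_k$ contains a circuit $C$, which must contain $u_k$ since $C - u_k \subseteq J_{k-1}$. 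The circuits $C_k := {\sf C}_{{\bf M}}(u_k, I)$ and $C$ are distinct (they share $u_k$, while $v_k \in C_k \setminus C$, because $J_k$ omits $v_k$), so applying Lemma~\ref{lemma:elimination} with $u := u_k$ and $v := v_k$ yields a circuit $C'$ with $v_k \in C' \subseteq (C_k \cup C) - u_k$.

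The main obstacle I expect is to establish $C' \subseteq J_{k-1}$, which delivers the desired contradiction; this is the one place where the asymmetric hypothesis ``$v_i \notin {\sf C}_{{\bf M}}(u_j, I)$ for $i < j$'' is used. On the $C_k$ side, $C_k - u_k \subseteq I$ contains no $u_i$, and the hypothesis applied with $j = k$ shows $C_k$ avoids $v_1, \ldots, v_{k-1}$, so $C_k - u_k \subseteq I \setminus \{v_1, \ldots, v_{k-1}\} \subseteq J_{k-1}$. On the $C$ side, $C - u_k \subseteq J_k - u_k \subseteq J_{k-1}$. Combining, $(C_k \cup C) - u_k \subseteq J_{k-1}$, so $C'$ is a circuit inside $J_{k-1}$, contradicting $J_{k-1} \in \mathcal{I}$.

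For the second conclusion, I will argue that $I$ and $J$ have the same ``closure'' ${\sf sp}_{{\bf M}}(\cdot) \cup (\cdot)$, using that the closure operator of a matroid is monotone and idempotent. Since each $u_i \in {\sf sp}_{{\bf M}}(I)$ by hypothesis, $I \cup \{u_1, \ldots, u_\ell\} \subseteq I \cup {\sf sp}_{{\bf M}}(I)$; in particular $J \subseteq I \cup {\sf sp}_{{\bf M}}(I)$, which yields $J \cup {\sf sp}_{{\bf M}}(J) \subseteq I \cup {\sf sp}_{{\bf M}}(I)$. Conversely, $|J| = |I|$ and $J$ is an independent subset of $I \cup \{u_1, \ldots, u_\ell\}$, whose rank is $|I|$ (as it lies in the flat $I \cup {\sf sp}_{{\bf M}}(I)$), so $J$ is a base of $I \cup \{u_1, \ldots, u_\ell\}$; hence every element of $I \setminus J$ lies in ${\sf sp}_{{\bf M}}(J)$, giving $I \cup {\sf sp}_{{\bf M}}(I) \subseteq J \cup {\sf sp}_{{\bf M}}(J)$.
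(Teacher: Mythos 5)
Your proof is correct. The paper does not prove this lemma at all --- it is quoted from Iri and Tomizawa \cite[Lemma~2]{IriT76} as a black box --- so there is nothing to compare against; your induction on $k$ with circuit elimination, using the asymmetric hypothesis exactly where you flag it (to show ${\sf C}_{\bf M}(u_k,I)-u_k \subseteq J_{k-1}$), is the standard self-contained derivation, and the closure argument for ${\sf sp}_{\bf M}(I)\cup I={\sf sp}_{\bf M}(J)\cup J$ (both sets equal the closure of $I$, of which $J$ is a base) is sound.
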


Although the following lemmas 
easily follow from 
Lemma~\ref{lemma:elimination},  
we give the proofs of these lemmas for reader's convenience.

\begin{lemma}[{See, e.g., \cite[Lemma~2.4]{K22}}] \label{lemma:circuit_union}
Let $C, C_1,C_2,\ldots,C_{\ell}$ be circuits of ${\bf M}$.
We assume that 
distinct elements $u_1,u_2,\ldots,u_{\ell}, v \in U$ satisfy the following 
conditions. 
\begin{itemize}
\item[\bf (U1)]
$u_i \in C \cap C_i$  
for every integer $i \in [\ell]$. 
\item[\bf (U2)]
$u_{i} \notin C_{j}$ holds for every 
pair of distinct integers $i,j \in [\ell]$. 
\item[\bf (U3)]
$v \in C \setminus (C_1 \cup C_2 \cup \cdots \cup C_{\ell})$.  
\end{itemize}
Then there exists a circuit $C^{\prime}$ of ${\bf M}$ 
such that 
\begin{equation*}
C^{\prime} \subseteq \big(C \cup C_1 \cup C_2 \cup \cdots \cup C_{\ell}\big) 
\setminus \{u_1,u_2,\ldots,u_{\ell}\}.
\end{equation*} 
\end{lemma}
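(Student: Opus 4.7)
The plan is to prove the lemma by induction on $\ell$, building a sequence of circuits $C^{(0)}, C^{(1)}, \ldots, C^{(\ell)}$ where each $C^{(i)}$ contains $v$ and lies inside $(C \cup C_1 \cup \cdots \cup C_i) \setminus \{u_1, \ldots, u_i\}$. Setting $C^{(0)} := C$ gives the base case, since $v \in C$ by (U3). The final circuit $C' := C^{(\ell)}$ then satisfies the required inclusion.

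For the inductive step, given $C^{(i)}$ with $v \in C^{(i)}$, I distinguish two cases. If $u_{i+1} \notin C^{(i)}$, then $C^{(i)}$ already avoids $u_{i+1}$, and since by (U2) we have $u_j \notin C_{i+1}$ for each $j \le i$, the circuit $C^{(i+1)} := C^{(i)}$ trivially works. Otherwise, $u_{i+1} \in C^{(i)} \cap C_{i+1}$ by (U1), and because $v \notin C_{i+1}$ by (U3) while $v \in C^{(i)}$, the two circuits $C^{(i)}$ and $C_{i+1}$ are distinct with $v \in C^{(i)} \setminus C_{i+1}$. Then Lemma~\ref{lemma:elimination} applied to these two circuits with the distinguished elements $u_{i+1}$ and $v$ produces a circuit $C^{(i+1)}$ with $v \in C^{(i+1)} \subseteq (C^{(i)} \cup C_{i+1}) - u_{i+1}$.

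It remains to verify that $C^{(i+1)}$ avoids all of $u_1, \ldots, u_{i+1}$. The element $u_{i+1}$ is removed by the elimination step. For each $j \le i$, the inductive hypothesis gives $u_j \notin C^{(i)}$, and (U2) gives $u_j \notin C_{i+1}$, so $u_j \notin C^{(i)} \cup C_{i+1} \supseteq C^{(i+1)}$. Thus $C^{(i+1)}$ has all the inductive properties, and iterating $\ell$ times completes the proof.

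The step I expect to require the most care is bookkeeping the avoidance of the previously-eliminated elements $u_1, \ldots, u_i$ across iterations; this is precisely where hypothesis (U2) is used in an essential way, ensuring that reintroducing the circuit $C_{i+1}$ in the union at step $i+1$ does not bring back any $u_j$ that was removed in an earlier step. The presence of $v$ is needed only as a ``witness'' element to guarantee that the circuits $C^{(i)}$ and $C_{i+1}$ are distinct so that Lemma~\ref{lemma:elimination} is applicable; beyond that, no further structure on $v$ is used.
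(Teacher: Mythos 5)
Your proof is correct and follows essentially the same route as the paper's: induction on $i$ maintaining a circuit containing $v$ inside $(C \cup C_1 \cup \cdots \cup C_i) \setminus \{u_1,\ldots,u_i\}$, using the circuit elimination lemma with $u_{i+1}$ as the common element and $v$ as the witness of distinctness, and using (U2) together with the inductive hypothesis to keep earlier $u_j$'s excluded. No gaps.
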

\begin{proof}
In order to prove this lemma,
it suffices to prove that, for every integer $i \in \{0\} \cup [{\ell}]$, 
there exists a circuit $C^{\prime}$ of ${\bf M}$ 
such that 
\begin{equation} \label{eq1:lemma:circuit_union}
v \in C^{\prime} \subseteq \big(C \cup C_1 \cup C_2 \cup \dots \cup C_i\big) 
\setminus \{u_1,u_2,\ldots,u_i\} 
\tag{$\star$}
\end{equation}
by induction on $i$. 
When $i = 0$, the circuit $C$ satisfies 
\eqref{eq1:lemma:circuit_union}. 

Let $z$ be an integer in $[{\ell}]$. 
Assume that \eqref{eq1:lemma:circuit_union} holds
when $i = z -1$. 
Let $C^{\prime}$ be a circuit of ${\bf M}$ 
satisfying \eqref{eq1:lemma:circuit_union} when $i = z -1$. 
If $u_z \notin C^{\prime}$, then 
$C^{\prime}$ satisfies \eqref{eq1:lemma:circuit_union} when $i = z$. 
Thus, we assume that $u_z \in C^{\prime}$. 
In this case, $u_z \in C_z \cap C^{\prime}$ and
$v \in C^{\prime} \setminus C_z$. 
Thus, Lemma~\ref{lemma:elimination} implies that 
there exists a circuit $C^{\circ}$ of ${\bf M}$ such that 
$v \in C^{\circ} \subseteq (C^{\prime} \cup C_z) - u_z$.
Furthermore, (U2) implies that 
$u_1,u_2,\ldots,u_{z-1} \notin C^{\circ}$. 
These imply that 
$C^{\circ}$ satisfies \eqref{eq1:lemma:circuit_union} when 
$i = z$. This completes the proof. 
\end{proof} 

Lemma~\ref{lemma:IriT76_sequence} implies that, 
for every independent set $I$ of ${\bf M}$, 
every element $u \in {\sf sp}_{\bf M}(I)$,
every element $v \in {\sf C}_{\bf M}(u,I)$, and 
every element $x \in {\sf sp}_{\bf M}(I) - u$, 
we have $x \in {\sf sp}_{\bf M}(I + u - v)$. 

\begin{lemma}[{See, e.g., \cite[Lemma~3]{IriT76}}] \label{lemma:IriT76_auxiliary}
Let $I$ be an independent set of ${\bf M}$.
Let $u$ and $v$ 
be an element 
in ${\sf sp}_{\bf M}(I)$ and 
an element in ${\sf C}_{\bf M}(u,I)$, respectively. 
Define $J := I + u - v$. 
Furthermore, let $x$ and $y$ be 
an element in ${\sf sp}_{{\bf M}}(I) - u$
and an element in $I - v$, respectively. 
Then the following statements hold.
\begin{enumerate}
\item[(1)]
If $v \in {\sf C}_{\bf M}(x, I)$
{\rm (}resp.\ $v \notin {\sf C}_{\bf M}(x, I)${\rm )}, 
then $u \in {\sf C}_{\bf M}(x, J)$
{\rm (}resp.\ $u \notin {\sf C}_{\bf M}(x, J)${\rm )}. 
\item[(2)]
Assume that $y \in {\sf C}_{\bf M}(x,I)$
{\rm (}resp.\ $y \notin {\sf C}_{\bf M}(x,I)${\rm )}.
Then if at least one of 
$y \notin {\sf C}_{\bf M}(u,I)$ and 
$v \notin {\sf C}_{\bf M}(x,I)$ 
holds, then 
$y \in {\sf C}_{\bf M}(x,J)$
{\rm (}resp.\ $y \notin {\sf C}_{\bf M}(x,J)${\rm )}.
\end{enumerate}
\end{lemma}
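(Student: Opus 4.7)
The plan is to reduce both parts of the lemma to the fundamental-circuit characterization recalled just before the statement (namely ${\sf C}_{\bf M}(w,L) = \{z \in L + w : L + w - z \in \mathcal{I}\}$), together with one application of Lemma~\ref{lemma:IriT76_sequence} with $\ell = 1$, $u_1 = u$, $v_1 = v$, to guarantee both $J \in \mathcal{I}$ and $x \in {\sf sp}_{\bf M}(J)$, so that the object ${\sf C}_{\bf M}(x, J)$ is meaningful. Throughout I write $C_x := {\sf C}_{\bf M}(x, I)$ and $C_u := {\sf C}_{\bf M}(u, I)$.

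For statement (1), the key identity is $J + x - u = I + x - v$. Hence
\[
u \in {\sf C}_{\bf M}(x, J) \iff J + x - u \in \mathcal{I} \iff I + x - v \in \mathcal{I} \iff v \in C_x,
\]
which is exactly the assertion of (1), together with its \emph{resp.}\ companion.

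For statement (2), I would split on whether $v \in C_x$. In the subcase $v \notin C_x$, the inclusion $C_x \subseteq (I + x) - v \subseteq J + x$, combined with $x \in C_x$ and the uniqueness of the fundamental circuit of $x$ in $J$, forces ${\sf C}_{\bf M}(x, J) = C_x$, so both the ``$y \in$'' and ``$y \notin$'' conclusions follow at once. The remaining subcase that the hypothesis of (2) still allows is $v \in C_x$ together with $y \notin C_u$.

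In that subcase, for the direction $y \in C_x \Rightarrow y \in {\sf C}_{\bf M}(x, J)$, set $I^{\prime} := I + x - y \in \mathcal{I}$. Because $y \notin C_u$ and $x \notin C_u$ (as $C_u \subseteq I + u$), we have $C_u \subseteq I^{\prime} + u$; uniqueness of the fundamental circuit then gives ${\sf C}_{\bf M}(u, I^{\prime}) = C_u$, and since $v \in C_u$ this yields $I^{\prime} + u - v = J + x - y \in \mathcal{I}$. For the direction $y \notin C_x \Rightarrow y \notin {\sf C}_{\bf M}(x, J)$, I would apply Lemma~\ref{lemma:elimination} to the distinct circuits $C_u \ni u$ and $C_x \ni x$, both of which contain $v$: eliminating $v$ while keeping $x \in C_x \setminus C_u$ yields a circuit $C^{\prime}$ with $x \in C^{\prime} \subseteq (C_u \cup C_x) - v$. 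Since $y \notin C_u$ and $y \notin C_x$, also $y \notin C^{\prime}$, so $C^{\prime} \subseteq J + x - y$, witnessing $J + x - y \notin \mathcal{I}$. The main obstacle will be precisely this last step: certifying a circuit inside $J + x - y$ without any explicit description, and it is here that both clauses of the ``or'' in the hypothesis are used together---$v \in C_x$ enables the elimination at $v$, while $y \notin C_u$ prevents $y$ from being reintroduced.
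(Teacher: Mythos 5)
Your proof is correct, but it takes a genuinely different route from the paper's. The paper proves both parts by contradiction: assuming the conclusion fails, it repeatedly invokes the circuit-elimination property (Lemma~\ref{lemma:elimination}) to manufacture a circuit contained in the independent set $I$ or $J$. You instead work directly from the exchange characterization ${\sf C}_{\bf M}(w,L)=\{z\in L+w \mid L+w-z\in\mathcal{I}\}$ (recorded in the paper just before the lemma) together with uniqueness of the fundamental circuit. This turns statement (1) into the one-line identity $J+x-u=I+x-v$, and in statement (2) the subcase $v\notin{\sf C}_{\bf M}(x,I)$ collapses to the stronger observation ${\sf C}_{\bf M}(x,J)={\sf C}_{\bf M}(x,I)$, leaving only the subcase $v\in{\sf C}_{\bf M}(x,I)$ and $y\notin{\sf C}_{\bf M}(u,I)$, which you settle with one direct independence computation (via ${\sf C}_{\bf M}(u,I)\subseteq (I+x-y)+u$) and one application of Lemma~\ref{lemma:elimination}. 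All steps check out: the well-definedness of ${\sf C}_{\bf M}(x,J)$ follows from Lemma~\ref{lemma:IriT76_sequence} as you note (and as the paper itself records just before the statement), each containment you invoke is valid, and your two subcases exhaust the disjunctive hypothesis of (2). Your argument is shorter and more transparent, and it isolates the useful extra fact that the fundamental circuit of $x$ is entirely unchanged when $v\notin{\sf C}_{\bf M}(x,I)$; what the paper's contradiction-style argument buys is uniformity, since the same elimination template handles every case, at the cost of more bookkeeping.
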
 
\begin{proof}
Define 
$C_{x,I} := {\sf C}_{\bf M}(x,I)$ and
$C_{x,J} := {\sf C}_{\bf M}(x,J)$. 
Then we prove (1).
Assume that 
$v \in C_{x,I}$ and $u \notin C_{x,J}$
(resp.\ $v \notin C_{x,I}$ and $u \in C_{x,J}$). 
Then $C_{x,J} - x \subseteq I$
(resp.\ $C_{x,I} - x \subseteq J$). 
Since 
$v \in C_{x,I}$ and 
$v \notin C_{x,J}$
(resp.\ $u \notin C_{x,I}$ and $u \in C_{x,J}$), 
$C_{x,I} \neq C_{x,J}$. 
Thus, since $x \in C_{x,I} \cap C_{x,J}$, 
Lemma~\ref{lemma:elimination} implies that 
there exists a circuit $C$ of ${\bf M}$ such that
$C \subseteq (C_{x,I} \cup C_{x,J}) - x \subseteq I$
(resp.\ $\subseteq J$). 
However, this contradicts the fact that $I \in \mathcal{I}$
(resp.\ $J \in \mathcal{I}$).

In addition, we define $C_{u,I} := {\sf C}_{\bf M}(u,I)$.
Then we prove (2). 
We first assume that 
$y \notin C_{u,I}$ and 
$y \notin C_{x,J}$
(resp.\ $y \in C_{x,J}$). 
Recall that 
$v \in C_{u,I}$. 
Since 
$y \in C_{x,I}$ 
and $y \notin C_{x,J}$
(resp.\ $y \notin C_{x,I}$ and $y \in C_{x,J}$), 
$C_{x,I} \neq C_{x,J}$. 
Thus, Lemma~\ref{lemma:elimination} implies that 
there exists a circuit $C$ of ${\bf M}$ such that
$C \subseteq (C_{x,I} \cup C_{x,J}) - x$. 
If $u \notin C_{x,J}$, then 
$C \subseteq I$. 
This contradicts the fact that $I \in \mathcal{I}$. 
Thus, $u \in C_{x,J}$. 
Since $v \in C_{u,I}$ and $v \notin C_{x,J}$,  
$C_{x,J} \neq C_{u,I}$. 
Thus, since $u \in C_{x,J} \cap C_{u,I}$ and 
$y \notin C_{x,J} \cup C_{u,I}$ 
(resp.\ $y \in C_{x,J} \setminus C_{u,I}$), 
Lemma~\ref{lemma:elimination} implies that 
there exists a circuit $C^{\prime}$ of ${\bf M}$ such that 
$v \in C^{\prime} \subseteq I + x$ and 
$y \notin C^{\prime}$
(resp.\ $y \in C^{\prime}$).
If $x \notin C^{\prime}$, then 
$C^{\prime} \subseteq I$. 
Thus, $x \in C^{\prime}$.  
Since $y \in C_{x,I}$ and 
$y \notin C^{\prime}$ 
(resp.\ $y \notin C_{x,I}$ and $y \in C^{\prime}$), 
$C_{x,I} \neq C^{\prime}$. 
Thus, since $x \in C_{x,I} \cap C^{\prime}$, 
it follows from Lemma~\ref{lemma:elimination} that 
there exists a circuit $C^{\circ}$ of ${\bf M}$ such that
$C^{\circ} \subseteq I$. 
This is a contradiction.  

Assume that $v \notin C_{x,I}$ 
and $y \notin C_{x,J}$ 
(resp.\ $y \in C_{x,J}$). 
Then $C_{x,I} \subseteq I + x - v$
holds. 
Since $y \in C_{x,I}$
(resp.\ $y \notin C_{x,I}$), 
$C_{x,I} \neq C_{x,J}$. 
Thus, Lemma~\ref{lemma:elimination} implies that 
there exists a circuit $C$ of ${\bf M}$ such that
$C \subseteq J$. This is a contradiction. 
Thus, this completes the proof of (2). 
\end{proof} 

\subsection{Additional facts on matroids} 

Let ${\bf M}_1 = (U, \mathcal{I}_1), {\bf M}_2 = (U, \mathcal{I}_2)$ 
be matroids. 
An element in $\mathcal{I}_1 \cap \mathcal{I}_2$ is called a 
\emph{common independent set of ${\bf M}_1$ and ${\bf M}_2$}. 
Then it is known that if, for every integer $i \in \{1,2\}$ and every subset $X \subseteq U$, 
we can determine whether $X \in \mathcal{I}_i$ in time bounded by a 
polynomial in $|U|$, then  
we can find a maximum-size 
common independent set of 
${\bf M}_1$ and ${\bf M}_2$ in time bounded by a 
polynomial in $|U|$ (see, e.g., \cite{C86,Bli21}) . 

Let ${\bf M}, {\bf N}$ 
be matroids defined on the same ground set $U$. 
Then for each common independent set $I$ of ${\bf M}$ and ${\bf N}$,
we define a simple directed graph 
${\sf G}_{{\bf M},{\bf N}}(I) = 
(U, {\sf A}_{{\bf M},{\bf N}}(I))$ 
as follows. 
\begin{itemize}
\item
For 
each element $u \in U \setminus I$
and 
each element $v \in I$, 
$vu \in {\sf A}_{{\bf M},{\bf N}}(I)$ 
if and only if $u \in {\sf sp}_{\bf M}(I)$ and 
$v \in {\sf D}_{{\bf M}}(u,I)$. 
\item
For 
each element $u \in U \setminus I$ 
and 
each element $v \in I$, 
$uv \in {\sf A}_{{\bf M},{\bf N}}(I)$ 
if and only if $u \in {\sf sp}_{\bf N}(I)$ and 
$v \in {\sf D}_{{\bf N}}(u,I)$. 
\end{itemize}
Let $I$ be  
a common independent set of ${\bf M}$ and ${\bf N}$, 
and let $P$ be a simple directed cycle in 
${\sf G}_{{\bf M},{\bf N}}(I)$. 
In addition, let $X$ (resp.\ $Y$) be 
the set of elements in $U \setminus I$ (resp.\ 
$I$) that $P$ passes through. 
Then we define 
$I \ominus P := (I \cup X) \setminus Y$. 

\begin{lemma} \label{lemma:IriT76_intersection}
Let ${\bf M}, {\bf N}$ 
be matroids defined on the same ground set, and  
let $I$ be a common independent set of ${\bf M}$ and ${\bf N}$. 
Furthermore, let $P$ be a simple directed cycle in 
${\sf G}_{{\bf M},{\bf N}}(I)$ such that 
there does not exist a shortcut arc in ${\sf A}_{{\bf M},{\bf N}}(I)$ for  
$P$. 
Then $I \ominus P$ is a common independent set of ${\bf M}$ and ${\bf N}$. 
\end{lemma}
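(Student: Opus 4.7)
The plan is to apply Lemma~\ref{lemma:IriT76_sequence} separately to ${\bf M}$ and to ${\bf N}$ with exchange sequences read off from $P$. Since every arc of ${\sf G}_{{\bf M},{\bf N}}(I)$ has exactly one endpoint in $I$ and one in $U \setminus I$, the simple directed cycle $P$ alternates, and I would label it as $u_1 \to v_1 \to u_2 \to v_2 \to \cdots \to u_k \to v_k \to u_1$ with $u_i \in U \setminus I$ and $v_i \in I$. By the definition of ${\sf A}_{{\bf M},{\bf N}}(I)$, each arc $u_i \to v_i$ certifies $v_i \in {\sf C}_{{\bf N}}(u_i,I)$, and each arc $v_i \to u_{i+1}$ (indices taken cyclically modulo $k$) certifies $v_i \in {\sf C}_{{\bf M}}(u_{i+1},I)$. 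With this labeling, $I \ominus P = (I \cup \{u_1, \ldots, u_k\}) \setminus \{v_1, \ldots, v_k\}$.

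The key step is to use the shortcut-free hypothesis to establish, for every pair of distinct indices $i,j \in [k]$, the two non-recurrence properties $v_i \notin {\sf C}_{{\bf M}}(u_{j+1}, I)$ and $v_j \notin {\sf C}_{{\bf N}}(u_i, I)$. Indeed, suppose that $v_i \in {\sf C}_{{\bf M}}(u_{j+1}, I)$ with $i \ne j$. Then the arc $v_i \to u_{j+1}$ lies in ${\sf A}_{{\bf M},{\bf N}}(I) \setminus P$, and together with the subpath $u_{j+1} \to v_{j+1} \to u_{j+2} \to \cdots \to u_i \to v_i$ of $P$ it closes into a simple directed cycle of strictly smaller length than $P$, hence different from $P$; this contradicts the absence of shortcut arcs. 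The second statement is symmetric via the arc $u_i \to v_j$.

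Granted these two properties, Lemma~\ref{lemma:IriT76_sequence} applies directly. For ${\bf M}$, I would feed the sequence of outside elements $u_2, u_3, \ldots, u_k, u_1$ paired with inside elements $v_1, v_2, \ldots, v_{k-1}, v_k$; its three hypotheses are exactly $u_{i+1} \in {\sf sp}_{{\bf M}}(I)$, $v_i \in {\sf C}_{{\bf M}}(u_{i+1}, I)$, and, for $i < j$, $v_i \notin {\sf C}_{{\bf M}}(u_{j+1}, I)$, each verified above. The conclusion is that $I \ominus P$ is an independent set of ${\bf M}$. The same scheme with the pairs $(u_i, v_i)$ and the matroid ${\bf N}$ yields that $I \ominus P$ is an independent set of ${\bf N}$. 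The only genuinely non-routine part is the shortcut argument of the previous paragraph; once one observes that any initial segment of a simple directed cycle is a simple directed path, the closing arc produces a strictly shorter simple cycle and the contradiction is immediate.
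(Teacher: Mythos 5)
Your proof is correct and follows the same route the paper intends: the paper's proof is just the one-line remark that the lemma "immediately follows from Lemma~\ref{lemma:IriT76_sequence}," and you have supplied exactly the missing details — the alternating labeling of $P$, the observation that a recurrence $v_i \in {\sf C}_{{\bf M}}(u_{j+1},I)$ or $v_j \in {\sf C}_{{\bf N}}(u_i,I)$ with $i \neq j$ would produce a shortcut arc, and the two applications of Lemma~\ref{lemma:IriT76_sequence} with the appropriately shifted pairings. No gaps.
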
 
\begin{proof} 
This lemma immediately follows from Lemma~\ref{lemma:IriT76_sequence}. 
\end{proof} 

Let ${\bf M}_1 = (U_1, \mathcal{I}_1), {\bf M}_2 = (U_2, \mathcal{I}_2),\dots,
{\bf M}_z = (U_z, \mathcal{I}_z)$ 
be matroids such that 
$U_1,U_2,\dots,U_z$ are pairwise disjoint. 
Then the \emph{direct sum} ${\bf M} = (U, \mathcal{I})$ of 
${\bf M}_1, {\bf M}_2, \dots, {\bf M}_z$
is defined as follows. 
We define $U$ as $U_1 \cup U_2 \cup \dots \cup U_z$. 
We define $\mathcal{I}$ as the family of subsets 
$I \subseteq U$ such that 
$I \cap U_i \in \mathcal{I}_i$ for every integer $i \in [z]$. 
It is known that 
${\bf M}$ is a matroid
(see, e.g., \cite[Proposition~4.2.12]{O11}).

\section{Algorithm} 

For each doctor $d \in D$ and 
each subset $F \subseteq E(d)$, 
we define 
\begin{equation*}
{\sf head}_d(F) := \{e \in F \mid \mbox{$e \succsim_d f$
for every edge $f \in F$}\}. 
\end{equation*} 
For each subset $F \subseteq E$, we define 
\begin{equation*}
\begin{split}
{\sf head}_H(F) & := \{e \in F \mid \mbox{$e \succsim_H f$
for every edge $f \in F$}\},\\
{\sf tail}_H(F) & := \{e \in F \mid \mbox{$f \succsim_H e$
for every edge $f \in F$}\}. 
\end{split}
\end{equation*} 
Furthermore, for each matching $\mu$ in $G$, we define 
${\sf block}_H(\mu)$ as the set of 
edges $e = (d,h) \in {\sf sp}_{\bf H}(\mu)$
such that $e \succsim_d \mu(d)$ and 
one of the 
following conditions is satisfied.
\begin{itemize}
\item
$e \succ_d \mu(d)$, and 
there exists an edge $f \in {\sf D}_{\bf H}(e,\mu)$ such that 
$e \succsim_H f$.  
\item
$e \sim_d \mu(d)$, and 
there exists an edge $f \in {\sf D}_{\bf H}(e,\mu)$ such that 
$e \succ_H f$.  
\end{itemize}
For each subset $F \subseteq E$, 
we define the subset ${\rm Ch}_D(F) \subseteq F$ by 
\begin{equation*}
{\rm Ch}_D(F) := 
\bigcup_{d \in D} {\sf head}_d(F(d)). 
\end{equation*}

Define $\mathcal{U}$ as the family of subsets 
$F \subseteq E$ such that 
$|F(d)| \le 1$ for every doctor $d \in D$. 
Define the matroid ${\bf D}$ 
by ${\bf D} := (E, \mathcal{U})$.

\subsection{Subroutine} 

Throughout this subsection, let $F$ be a subset of 
$E$. 
Then we define the matroid ${\bf H}\langle F \rangle$ as the 
matroid defined by Algorithm~\ref{alg:sub_matroid}. 

\begin{algorithm}[h]
If $F = \emptyset$, then define 
${\bf H}\langle F \rangle := (\emptyset, \{\emptyset\})$ and 
$i_F := 0$, and halt.\\
Set $t := 0$.\\
Define $F_0 := F$ and ${\bf H}_0 := {\bf H}$. \\
\While{$F_t \neq \emptyset$}
{
    Set $t := t + 1$.\\
    Define $T_t := {\sf head}_H(F_{t-1})$ and $F_t := F_{t-1} \setminus T_t$.\\
    Define 
    ${\bf H}_t := {\bf H}_{t-1} / T_t$ and 
    ${\bf T}_t := {\bf H}_{t-1}|T_t$.\\
}
Define $i_F := t$.\\
Define ${\bf H}\langle F \rangle$ as 
the direct sum of ${\bf T}_1, {\bf T}_2, \dots,
{\bf T}_{i_F}$, and halt. 
\caption{Algorithm for defining ${\bf H}\langle F \rangle$}
\label{alg:sub_matroid}
\end{algorithm}

For each integer $i \in [i_F]$, we define 
$T_{1,i} := T_1 \cup T_2 \cup \dots \cup T_i$. 
In addition, we define $T_{1,0} := \emptyset$. 
Notice that $T_{1,i_F} = F$. 

Notice that, for every integer $i \in [i_F]$ and 
every pair of edges $e,f \in T_i$, 
we have $e \sim_H f$. 
Lemma~\ref{lemma:contraction} implies that, 
for every subset $I \subseteq F$, 
we can determine whether $I$ is an independent set of 
${\bf H}\langle F \rangle$ 
in polynomial time as follows.
Since ${\bf T}_1 = {\bf H}|T_1$, 
we can determine whether $I \cap T_1$ is an independent set
of ${\bf T}_1$ by checking whether $I \cap T_1 \in \mathcal{F}$. 
Let $i$ be an integer in $[i_F]$ such that $i \ge 2$.
Then we compute a base $B$ of ${\bf H}|T_{1,i-1}
= ({\bf H}|T_{1,i})|T_{1,i-1}$ in polynomial time.
Lemma~\ref{lemma:minor} implies that
${\bf T}_i = ({\bf H}/T_{1,i-1})|T_i = 
({\bf H}|T_{1,i})/T_{1,i-1}$.
Thus, it follows from Lemma~\ref{lemma:contraction} that 
we can determine whether $I \cap T_i$ is an independent 
set of ${\bf T}_i$ by checking 
whether $(I \cap T_i) \cup B \in \mathcal{F}$. 
Then we do this for each integer $i \in [i_F]$. 

\begin{lemma} \label{lemma:new_matroid_independent}
Let $B$ be a base of ${\bf H}\langle F \rangle$.
Then for every integer $i \in [i_F]$, 
$B \cap T_{1,i}$ is a base of ${\bf H}|T_{1,i}$.
Especially, $B$ is a base of ${\bf H}|F$.
\end{lemma}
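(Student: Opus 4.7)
The plan is to proceed by induction on $i \in [i_F]$, using the two matroid-minor identities already developed in the paper. The direct-sum definition of ${\bf H}\langle F \rangle$ gives us, for free, that $B \cap T_j$ is a base of ${\bf T}_j$ for every $j \in [i_F]$, since a base of a direct sum is exactly the disjoint union of bases of the summands. The goal is then to combine these local bases into a base of ${\bf H}|T_{1,i}$.

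For the base case $i = 1$, we have $T_{1,1} = T_1$ and ${\bf T}_1 = {\bf H}_0 | T_1 = {\bf H}|T_1$, so $B \cap T_{1,1} = B \cap T_1$ is already a base of ${\bf H}|T_1 = {\bf H}|T_{1,1}$ by the observation above.

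For the inductive step, suppose $B \cap T_{1,i-1}$ is a base of ${\bf H}|T_{1,i-1}$. I would first use Lemma~\ref{lemma:minor} to rewrite
\begin{equation*}
{\bf T}_i \;=\; ({\bf H}/T_{1,i-1}) | T_i \;=\; ({\bf H}|T_{1,i}) / T_{1,i-1},
\end{equation*}
which is the key identity that moves from the contraction of the ambient matroid down to a minor of the restricted matroid ${\bf H}|T_{1,i}$. Since $B \cap T_i$ is a base of ${\bf T}_i$, it is therefore a base of $({\bf H}|T_{1,i})/T_{1,i-1}$. Next, applying Lemma~\ref{lemma:contraction} inside ${\bf H}|T_{1,i}$, with the set $T_{1,i-1} \subseteq T_{1,i}$ and the base $B \cap T_{1,i-1}$ of $({\bf H}|T_{1,i})|T_{1,i-1} = {\bf H}|T_{1,i-1}$ supplied by the inductive hypothesis, we conclude that $(B \cap T_i) \cup (B \cap T_{1,i-1}) = B \cap T_{1,i}$ is a base of ${\bf H}|T_{1,i}$. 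Taking $i = i_F$ and using $T_{1,i_F} = F$ yields the final assertion.

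The only subtlety is keeping track of which matroid each restriction or contraction is happening in: the direct-sum structure of ${\bf H}\langle F \rangle$ lives on the disjoint union of the sets $T_j$, while the claim is about bases in ${\bf H}$ itself. Lemma~\ref{lemma:minor} is precisely the bridge between these two viewpoints, and once it is applied, Lemma~\ref{lemma:contraction} does the rest routinely. I do not foresee a genuine obstacle beyond being careful with this bookkeeping.
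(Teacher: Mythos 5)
Your proposal is correct and follows essentially the same route as the paper's proof: induction on $i$, using the fact that a base of the direct sum restricts to a base of each ${\bf T}_j$, the identity ${\bf T}_i = ({\bf H}/T_{1,i-1})|T_i = ({\bf H}|T_{1,i})/T_{1,i-1}$ from Lemma~\ref{lemma:minor}, and then Lemma~\ref{lemma:contraction} to glue $B \cap T_{1,i-1}$ and $B \cap T_i$ into a base of ${\bf H}|T_{1,i}$. No substantive differences.
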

\begin{proof}
For each integer $i \in [i_F]$, we define 
$B_i := B \cap T_{1,i}$. 
Since $B$ is a base of ${\bf H}\langle F \rangle$, 
$B \cap T_i$ is a base of ${\bf T}_i$
for every integer $i \in [i_F]$. 
Thus, since ${\bf T}_1 = {\bf H}| T_{1,1}$, 
$B_1$ is a base of ${\bf H}|T_{1,1}$. 

Assume that $i_F \ge 2$. 
Let $i$ be an integer in $[i_F-1]$.  
In addition, we assume that $B_i$ is a 
base of ${\bf H}|T_{1,i} = ({\bf H}|T_{1,i+1})|T_{1,i}$. 
Recall that 
$B \cap T_{i+1}$ is a base of ${\bf T}_{i+1}$.
Furthermore, 
Lemma~\ref{lemma:minor} implies that 
\begin{equation*}
{\bf T}_{i+1} = {\bf H}_i | T_{i+1} 
= ({\bf H} / T_{1,i}) |T_{i+1}
= ({\bf H} | T_{1,i+1}) /T_{1,i}. 
\end{equation*}
Thus, 
Lemma~\ref{lemma:contraction} 
implies that 
$B_{i} \cup (B \cap T_{i+1}) = B_{i+1}$ is a
base 
of ${\bf H}|T_{1,i+1}$. 
This completes the proof. 
\end{proof} 

Lemma~\ref{lemma:new_matroid_independent} implies that 
${\sf r}_{\bf H}(F) = {\sf r}_{{\bf H}|F}(F) = {\sf r}_{{\bf H}\langle F \rangle}(F)$.
In addition, 
Lemma~\ref{lemma:new_matroid_independent} implies that,
for every base $B$ of ${\bf H}\langle F \rangle$ and 
every edge $e \in F \setminus B$, 
we have $e \in {\sf sp}_{{\bf H}|F}(B)$ and 
$e \in {\sf sp}_{\bf H}(B)$. 

\begin{lemma} \label{lemma:new_matroid_circuit}
Let $B$ be a base of ${\bf H}\langle F \rangle$, and 
let $e$ be an edge in $F \setminus B$.  
Then there exists an integer $z \in [i_F]$ 
such that 
${\sf C}_{{\bf H}\langle F \rangle}(e,B)$ is 
a circuit of ${\bf T}_z$. 
Furthermore, 
${\sf C}_{{\bf H}\langle F \rangle}(e,B) \subseteq 
{\sf C}_{\bf H}(e,B)$, and 
$g \succ_H f$ for every pair of edges 
$f \in {\sf C}_{{\bf H}\langle F \rangle}(e,B)$ and 
$g \in {\sf C}_{\bf H}(e,B) \setminus {\sf C}_{{\bf H}\langle F \rangle}(e,B)$. 
\end{lemma}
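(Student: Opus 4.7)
The plan is to decompose ${\sf C}_{{\bf H}\langle F \rangle}(e,B)$ via the direct-sum structure of ${\bf H}\langle F \rangle$, relate it to ${\sf C}_{\bf H}(e,B)$ through the contraction description of the layers, and then read the tie-breaking claim off the definition of the algorithm. Throughout I will write $B_i := B \cap T_{1,i}$; by Lemma~\ref{lemma:new_matroid_independent} each $B_i$ is a base of ${\bf H}|T_{1,i}$, and in particular the unique circuit of ${\bf H}$ in $B+e$ equals the unique circuit of ${\bf H}$ in $B_z+e$, which I denote by $C$, where $z \in [i_F]$ is the unique index with $e \in T_z$.

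The first sentence of the lemma is essentially immediate: since ${\bf H}\langle F \rangle$ is the direct sum of ${\bf T}_1,\ldots,{\bf T}_{i_F}$, every circuit of ${\bf H}\langle F \rangle$ is contained in exactly one layer $T_i$; as $e$ lies in its own fundamental circuit and $e \in T_z$, the circuit ${\sf C}_{{\bf H}\langle F \rangle}(e,B)$ lies entirely in $T_z$ and is a circuit of ${\bf T}_z$. For the inclusion ${\sf C}_{{\bf H}\langle F \rangle}(e,B) \subseteq C$, I would apply Lemma~\ref{lemma:minor} to rewrite ${\bf T}_z = ({\bf H}|T_{1,z})/T_{1,z-1}$, then use Lemma~\ref{lemma:contraction} with the base $B_{z-1}$ of ${\bf H}|T_{1,z-1}$ to get that a set $I \subseteq T_z$ is independent in ${\bf T}_z$ iff $I \cup B_{z-1}$ is independent in ${\bf H}$. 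Combined with the standard characterization ${\sf C}_{\bf M}(u,I) = \{v \in I+u : I+u-v \in \mathcal{I}\}$ already quoted in the excerpt, this yields
\[
{\sf C}_{{\bf H}\langle F \rangle}(e,B)
= {\sf C}_{{\bf T}_z}(e, B \cap T_z)
= \{v \in (B \cap T_z) + e : B_z + e - v \in \mathcal{F}\}
= C \cap T_z,
\]
where the last equality uses $C \subseteq B_z + e$ and $e \in T_z$.

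For the strict-preference claim I would exploit the construction of the layers directly: for every $i < j$, any $g \in T_i = {\sf head}_H(F_{i-1})$ and $f \in T_j \subseteq F_{i-1} \setminus T_i$ satisfy $g \succsim_H f$ by definition of the head, while $f \not\succsim_H g$ (otherwise $f$ would itself have been placed into $T_i$), so $g \succ_H f$. Since ${\sf C}_{\bf H}(e,B) \setminus {\sf C}_{{\bf H}\langle F \rangle}(e,B) = C \setminus (C \cap T_z) \subseteq (B_z+e) \setminus T_z \subseteq T_{1,z-1}$ while ${\sf C}_{{\bf H}\langle F \rangle}(e,B) \subseteq T_z$, the desired inequality is immediate. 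The main obstacle I anticipate is the middle step, namely identifying ${\sf C}_{{\bf T}_z}(e, B \cap T_z)$ with $C \cap T_z$ through the contraction; it requires carefully aligning Lemmas~\ref{lemma:minor} and~\ref{lemma:contraction} with the fundamental-circuit characterization, rather than attempting a direct circuit-elimination argument in ${\bf H}$ itself.
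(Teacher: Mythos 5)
Your proof is correct. It uses the same toolbox as the paper -- the direct-sum/layer structure, Lemma~\ref{lemma:new_matroid_independent}, and the contraction description ${\bf T}_z = ({\bf H}|T_{1,z})/T_{1,z-1}$ via Lemmas~\ref{lemma:minor} and~\ref{lemma:contraction} -- but organizes the argument differently. The paper proves the two nontrivial claims separately: the inclusion ${\sf C}_{{\bf H}\langle F \rangle}(e,B) \subseteq {\sf C}_{\bf H}(e,B)$ follows from a base-exchange argument ($B+e-f$ is a base of ${\bf H}\langle F \rangle$, hence independent in ${\bf H}$ by Lemma~\ref{lemma:new_matroid_independent}), and the preference claim is proved by contradiction, assuming a bad $f \in {\sf C}_{\bf H}(e,B)$ exists and using Lemma~\ref{lemma:contraction} to exhibit ${\sf C}_{{\bf H}\langle F \rangle}(e,B)$ inside an independent set of ${\bf T}_z$. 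You instead compute the fundamental circuit exactly, obtaining the identity ${\sf C}_{{\bf H}\langle F \rangle}(e,B) = {\sf C}_{\bf H}(e,B) \cap T_z$ by pushing the characterization ${\sf C}_{\bf M}(u,I) = \{v \in I+u : I+u-v \in \mathcal{I}\}$ through the contraction; both assertions of the lemma (and the observation that the leftover elements lie in $T_{1,z-1}$, hence in strictly earlier layers) then drop out at once. Your route yields a slightly stronger and arguably more transparent statement at the cost of one extra verification, namely that the fundamental circuit of $e$ for $B$ and for $B_z$ in ${\bf H}$ coincide, which is justified since $B_z$ is a base of ${\bf H}|T_{1,z}$ and the circuit in $B_z + e$ is also the unique circuit in $B + e$.
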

\begin{proof}
The first statement follows from the fact that 
${\bf H}\langle F \rangle$ is the direct sum of 
${\bf T}_1,{\bf T}_2,\dots,{\bf T}_{i_F}$. 
In what follows, let $z$ be the integer satisfying the condition of the first
statement. 

Let $f$ be an edge in ${\sf C}_{{\bf H}\langle F \rangle}(e,B)$, and 
we define $B^{\prime} := B + e - f$. Then 
$B^{\prime}$ is a base of ${\bf H}\langle F \rangle$. 
Thus, Lemma~\ref{lemma:new_matroid_independent} implies that $B^{\prime}$ 
is an independent set of ${\bf H}$. 
This implies that $f \in {\sf C}_{\bf H}(e,B)$. 

Next, we assume that there exists an edge $f \in {\sf C}_{\bf H}(e,B)$
such that $f \in T_z \setminus {\sf C}_{{\bf H}\langle F \rangle}(e,B)$
or $g \succ_H f$ for an edge $g \in T_z$, and we 
define $B^{\prime} := B + e - f$.
Then $B^{\prime}$ is an independent set 
of ${\bf H}$. 
Lemma~\ref{lemma:new_matroid_independent}
implies that 
$B^{\prime} \cap T_{1,z-1}$ ($= B \cap T_{1,z-1}$) is a base of 
${\bf H}|T_{1,z-1}$. 
Thus, Lemma~\ref{lemma:contraction} implies 
that $B^{\prime} \cap T_z$ is an independent set of 
${\bf T}_z$. 
However, ${\sf C}_{{\bf H}\langle F \rangle}(e,B) \subseteq B^{\prime} \cap T_z$.
This is a contradiction.
\end{proof} 

\subsection{Description of the algorithm} 

For each subset $F \subseteq E$, we define 
the subset $D[F] \subseteq D$ as the set of doctors $d \in D$ 
such that 
$F(d) \neq \emptyset$. 
For each subset $F \subseteq E$, we define 
the function $\rho_F \colon 2^{D[F]} \to \mathbb{Z}$ 
by 
\begin{equation*}
\rho_F(X) := {\sf r}_{{\bf H} \langle F \rangle}(F(X)) - |X|. 
\end{equation*}
This kind of function was also used in \cite{K21}. 

\begin{lemma}[Rado~\cite{R42}] \label{lemma:R42}
Let $F$ be a subset of $E$.
Then there 
exists a common independent set $I$ of 
${\bf D}|F$ and ${\bf H}\langle F \rangle$ such that 
$|I| = |D[F]|$
if and only if 
$\rho_F(X) \ge 0$
for every subset $X \subseteq D[F]$. 
\end{lemma}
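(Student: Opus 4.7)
The plan is to treat this as the classical theorem of Rado: necessity is a one-line restriction argument, and sufficiency reduces, after a bookkeeping step, to the matroid intersection min-max theorem applied to ${\bf D}|F$ and ${\bf H}\langle F\rangle$.

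For necessity, I would assume that a common independent set $I$ of size $|D[F]|$ exists. Since $|I(d)| \le 1$ for every doctor $d \in D$ by the definition of ${\bf D}$, the cardinality constraint forces $|I(d)| = 1$ for every $d \in D[F]$. For any $X \subseteq D[F]$, the set $\bigcup_{d \in X} I(d) \subseteq F(X)$ then has cardinality exactly $|X|$ and, by (I1), is independent in ${\bf H}\langle F \rangle$; hence ${\sf r}_{{\bf H}\langle F\rangle}(F(X)) \ge |X|$, i.e., $\rho_F(X) \ge 0$.

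For sufficiency, I would invoke the matroid intersection min-max theorem applied to ${\bf D}|F$ and ${\bf H}\langle F\rangle$ on ground set $F$:
\[
\max\bigl\{\,|I| : I \in (\mathcal{U}|F) \cap (\text{indep.\ sets of } {\bf H}\langle F\rangle)\,\bigr\}
=
\min_{Y \subseteq F}\bigl(\,{\sf r}_{{\bf D}|F}(Y) + {\sf r}_{{\bf H}\langle F\rangle}(F \setminus Y)\,\bigr).
\]
Because ${\bf D}$ is the partition matroid that groups edges by doctor, one has ${\sf r}_{{\bf D}|F}(Y) = |D[Y]|$. The crux is to rewrite the minimum over all $Y \subseteq F$ as a minimum over $X \subseteq D[F]$. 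For any $Y \subseteq F$, set $X := D[F] \setminus D[Y]$; then no doctor in $X$ has an edge in $Y$, so $F(X) \subseteq F \setminus Y$, and monotonicity of the rank yields
\[
|D[Y]| + {\sf r}_{{\bf H}\langle F\rangle}(F \setminus Y)
\;\ge\;
(|D[F]| - |X|) + {\sf r}_{{\bf H}\langle F\rangle}(F(X))
\;=\; |D[F]| + \rho_F(X).
\]
Conversely, for any $X \subseteq D[F]$, taking $Y := F \setminus F(X) = F(D[F] \setminus X)$ makes the inequality tight. Combining both directions, the maximum common independent set has size $|D[F]| + \min_{X \subseteq D[F]} \rho_F(X)$, and since this maximum is bounded above by ${\sf r}_{{\bf D}|F}(F) = |D[F]|$, a common independent set of size $|D[F]|$ exists if and only if $\rho_F(X) \ge 0$ for every $X \subseteq D[F]$.

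The only delicate point is the reduction from the min over arbitrary $Y \subseteq F$ to the min over $X \subseteq D[F]$; once that identification is carried out, the two easy identities ${\sf r}_{{\bf D}|F}(Y) = |D[Y]|$ and $D[F \setminus F(X)] = D[F] \setminus X$ make the rest of the argument mechanical, and no property specific to the subroutine matroid ${\bf H}\langle F\rangle$ is needed beyond the fact that it is a matroid.
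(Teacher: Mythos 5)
Your proof is correct. Note that the paper does not actually prove this statement: it cites it as Rado's theorem \cite{R42} and uses it as a black box, so there is no in-paper argument to compare against. Your derivation --- necessity by restricting the size-$|D[F]|$ common independent set to the edges of the doctors in $X$, and sufficiency by applying Edmonds' matroid intersection min--max formula to ${\bf D}|F$ and ${\bf H}\langle F\rangle$ and converting the minimum over subsets $Y \subseteq F$ into the minimum over subsets $X \subseteq D[F]$ via the correspondence $Y = F \setminus F(X)$ --- is a standard and fully valid way to obtain Rado's theorem on independent transversals, and the two bookkeeping identities ${\sf r}_{{\bf D}|F}(Y) = |D[Y]|$ and $D[F \setminus F(X)] = D[F] \setminus X$ that you flag as the only delicate points are both easily verified. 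You are also right that nothing about the subroutine construction of ${\bf H}\langle F\rangle$ is used beyond its being a matroid, which is exactly why the paper can invoke the classical result directly.
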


Although the following lemma easily follows from the submodularity of 
${\sf r}_{{\bf H} \langle F \rangle}$, we give a proof for completeness. 

\begin{lemma} \label{lemma:submodular_rho}
For every subset $F \subseteq E$, 
$\rho_F$ is submodular. 
\end{lemma}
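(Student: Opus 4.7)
The plan is to reduce submodularity of $\rho_F$ to the well-known submodularity of the matroid rank function ${\sf r}_{{\bf H}\langle F \rangle}$. Unpacking $\rho_F(X) + \rho_F(Y) \ge \rho_F(X \cup Y) + \rho_F(X \cap Y)$, the term $|X|$ is modular, so the inequality cancels on the cardinality side and reduces to
\begin{equation*}
{\sf r}_{{\bf H}\langle F \rangle}(F(X)) + {\sf r}_{{\bf H}\langle F \rangle}(F(Y)) \ge {\sf r}_{{\bf H}\langle F \rangle}(F(X \cup Y)) + {\sf r}_{{\bf H}\langle F \rangle}(F(X \cap Y)).
\end{equation*}

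First I would verify the set-theoretic identities $F(X \cup Y) = F(X) \cup F(Y)$ and $F(X \cap Y) = F(X) \cap F(Y)$. Both follow immediately from the definition $F(Z) = \{(d,h) \in F \mid d \in Z\}$: an edge $(d,h) \in F$ lies in $F(X) \cap F(Y)$ precisely when $d \in X \cap Y$, and similarly for the union.

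With these identities in hand, the inequality becomes
\begin{equation*}
{\sf r}_{{\bf H}\langle F \rangle}(F(X)) + {\sf r}_{{\bf H}\langle F \rangle}(F(Y)) \ge {\sf r}_{{\bf H}\langle F \rangle}(F(X) \cup F(Y)) + {\sf r}_{{\bf H}\langle F \rangle}(F(X) \cap F(Y)),
\end{equation*}
which is exactly the submodularity of the rank function of the matroid ${\bf H}\langle F \rangle$, a standard fact. Since the modular correction $-|X|$ preserves the submodular inequality, we conclude that $\rho_F$ is submodular.

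There is no real obstacle here; this lemma is included only for completeness, as the authors note, and the only thing one must be careful about is observing that the doctor-projection operator $Z \mapsto F(Z)$ commutes with both union and intersection, so that matroid submodularity applies directly.
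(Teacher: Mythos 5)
Your proof is correct and follows essentially the same route as the paper's: both reduce the claim to the submodularity of the rank function ${\sf r}_{{\bf H}\langle F \rangle}$, using that $Z \mapsto F(Z)$ commutes with union and intersection and that $|X|+|Y| = |X \cup Y| + |X \cap Y|$. No issues.
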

\begin{proof}
Let $F$ be a subset of $E$. 
For notational simplicity, 
we define ${\sf r}^{\ast} := {\sf r}_{{\bf H} \langle F \rangle}$. 
Then it is known that 
${\sf r}^{\ast}$ is submodular
(see, e.g., \cite[Lemma~1.3.1]{O11}). 
Thus, for every pair of subsets $X,Y \subseteq D[F]$, 
\begin{equation*}
\begin{split}
& \rho_F(X) + \rho_F(Y) 
= {\sf r}^{\ast}(F(X)) + {\sf r}^{\ast}(F(Y)) 
- (|X| + |Y|) \\
& \ge 
{\sf r}^{\ast}(F(X) \cup F(Y)) + {\sf r}^{\ast}(F(X) \cap F(Y)) 
- (|X \cup Y| + |X \cap Y|) \\
& = 
{\sf r}^{\ast}(F(X \cup Y)) + {\sf r}^{\ast}(F(X \cap Y)) 
- (|X \cup Y| + |X \cap Y|) \\
& = \rho_F(X \cup Y) + \rho_F(X \cap Y).
\end{split} 
\end{equation*}
This completes the proof. 
\end{proof} 

We are now ready to propose our algorithm, which 
is described in Algorithm~\ref{alg:main}. 
See Section~\ref{section:example} 
for a concrete example of execution of 
Algorithm~\ref{alg:main}.  

\begin{algorithm}[h]
Set $t := 0$.
Define ${\sf R}_0 := \emptyset$.\\
\Do{${\sf R}_t \neq {\sf P}_{t,i_t}$}
{
    Set $t := t+1$ and $i := 0$.\\
    Define ${\sf P}_{t,0} := {\sf R}_{t-1}$.\\
    \Do{${\sf P}_{t,i} \neq {\sf P}_{t,i-1}$}
    {
        Set $i := i + 1$.\\
        Define $K_{t,i} := {\rm Ch}_D(E \setminus {\sf P}_{t,i-1})$.\\
        \If{${\sf r}_{\bf H}(K_{t,i}) > |D[E \setminus {\sf P}_{t,i-1}]|$}
        {
            Define ${\sf R}_t := {\sf P}_{t,i-1}$ and $i_t := i -1$.\\
            Output {\bf null}, and halt. 
        }
        Find a maximum-size common independent set $\kappa_{t,i}$ 
        of ${\bf D}|K_{t,i}$ and ${\bf H}\langle K_{t,i} \rangle$.\\
        \uIf{$|\kappa_{t,i}| < |D[E \setminus {\sf P}_{t,i-1}]|$}
        {
             Find the minimal minimizer $Z_{t,i}$
             of $\rho_{K_{t,i}}$. \\
             Define ${\sf P}_{t,i} := {\sf P}_{t,i-1} \cup K_{t,i}(Z_{t,i})$.
        }
        \Else
        {
            Define ${\sf P}_{t,i} := {\sf P}_{t,i-1}$.
        }
    }
    Define $i_t := i$.\\
    \uIf{${\sf P}_{t,i_t} \cap {\sf block}_H(\kappa_{t,i_t}) \neq \emptyset$}
    {
        Define $b_{t}$ as an edge in 
        ${\sf P}_{t,i_t} \cap {\sf block}_H(\kappa_{t,i_t})$.\\
        Define 
        ${\sf R}_{t} := {\sf P}_{t,i_t} \cup {\sf tail}_H({\sf C}_{\bf H}(b_{t}, \kappa_{t,i_t}))$.\\
    }
    \Else
    {
        Define ${\sf R}_{t} := {\sf P}_{t,i_t}$. 
    }
}
Define $n := t$ and $\mu_{\rm o} := \kappa_{n,i_n}$.\\
\uIf{there exists an edge $e_{\rm R} \in {\sf R}_n$ 
such that $\mu_{\rm o} + e_{\rm R} \in \mathcal{F}$}
{
   Output ${\bf null}$, and halt. 
}
\Else
{
   Output $\mu_{\rm o}$, and halt.
}
\caption{Proposed algorithm}
\label{alg:main}
\end{algorithm}

Lemma~\ref{lemma:R42} implies that, in Step~14, 
since there exists a subset $X \subseteq D[K_{t,i}]$ such that 
$\rho_{K_{t,i}}(X) < 0$, we have $Z_{t,i} \neq \emptyset$. 
Thus, in the course of Algorithm~\ref{alg:main}, 
if $|\kappa_{t,i}| < |D[E \setminus {\sf P}_{t,i-1}]|$, then 
${\sf P}_{t,i-1} \subsetneq {\sf P}_{t,i}$. 
If
${\sf P}_{t,i_t} \cap {\sf block}_H(\kappa_{t,i_t}) \neq \emptyset$, then 
since ${\sf tail}_H({\sf C}_{\bf H}(b_{t}, \kappa_{t,i_t})) \neq \{b_t\}$, 
we have
${\sf R}_{t-1} \subsetneq {\sf R}_t$. 
In Step~21, if we define 
$r_{t,i_t} := {\sf r}_{{\bf H}\langle K_{t,i_t} \rangle}(K_{t,i_t})$, 
then Lemma~\ref{lemma:new_matroid_independent} implies that 
\begin{equation*}
r_{t,i_t}
\ge 
|\kappa_{t,i_t}| \ge |D[E \setminus {\sf P}_{t,i_t-1}]| 
\ge {\sf r}_{\bf H}(K_{t,i_t}) 
= r_{t,i_t}.
\end{equation*} 
Thus, $|\kappa_{t,i_t}| = |D[E \setminus {\sf P}_{t,i_t-1}]|$
and $\kappa_{t,i_t}$ is a base of ${\bf H}\langle K_{t,i_t} \rangle$.

We prove the correctness of Algorithm~\ref{alg:main} in 
Section~\ref{section:correctness}. 
Here we prove that 
Algorithm~\ref{alg:main} is a polynomial-time 
algorithm. 
This follows from the following lemma. 
Since, for every subset $F \subseteq E$, we can 
determine whether $F \in \mathcal{F}$ in time bounded by a polynomial 
in the input size of $G$, 
for every subset $F \subseteq E$ and every subset $X \subseteq D[F]$, 
${\sf r}_{{\bf H} \langle F \rangle}(F(X))$ 
can be evaluated in polynomial time
by finding a base of ${\bf H} \langle F \rangle|F(X)$.
Thus, Step~14 can be done in polynomial time 
(see, e.g., \cite[Note~10.12]{M03}). 

\begin{lemma} \label{lemma:iteration}
The following statements hold.
\begin{enumerate}
\item[(1)]
In each iteration of Steps~2 to 27, 
the number of iterations of Steps~5 to 19 is $O(|E|)$. 
\item[(2)]
The number of iterations of Steps~2 to 27 is $O(|E|)$. 
\end{enumerate}
\end{lemma}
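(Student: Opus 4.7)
The plan is to establish both parts by exhibiting, in each case, a subset of $E$ that strictly grows between consecutive iterations, so that the iteration count is bounded by $|E|+1$.

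For part~(1), I would fix an iteration $t$ of the outer loop and argue that the sequence ${\sf P}_{t,0},{\sf P}_{t,1},\ldots$ of subsets of $E$ is non-decreasing and that every non-terminating pass enlarges it strictly. In each pass through Steps~5--19, either the else-branch at Step~18 is taken, in which case ${\sf P}_{t,i}={\sf P}_{t,i-1}$ and the inner loop terminates, or the if-branch at Step~12 is taken. In the latter case $|\kappa_{t,i}|<|D[E\setminus {\sf P}_{t,i-1}]|$, so by Lemma~\ref{lemma:R42} the function $\rho_{K_{t,i}}$ attains a strictly negative value; since $\rho_{K_{t,i}}(\emptyset)=0$, the minimal minimizer $Z_{t,i}$ is nonempty, and $Z_{t,i}\subseteq D[K_{t,i}]$ forces $K_{t,i}(Z_{t,i})$ to be a nonempty subset of $K_{t,i}\subseteq E\setminus {\sf P}_{t,i-1}$. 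Hence ${\sf P}_{t,i-1}\subsetneq {\sf P}_{t,i}$, so strict enlargement can occur at most $|E|$ times and the inner loop runs at most $|E|+1$ times. The early-halt branch at Step~10 only improves matters.

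For part~(2), I would show that ${\sf R}_{t-1}\subsetneq {\sf R}_t$ whenever the outer loop does not halt after iteration~$t$. When the inner loop terminates one has ${\sf P}_{t,i_t}={\sf P}_{t,i_t-1}$, and since $\kappa_{t,i_t}\subseteq K_{t,i_t}\subseteq E\setminus {\sf P}_{t,i_t-1}=E\setminus {\sf P}_{t,i_t}$, we get $\kappa_{t,i_t}\cap {\sf P}_{t,i_t}=\emptyset$. If the else-branch of Step~21 is taken, ${\sf R}_t={\sf P}_{t,i_t}$ and the outer loop halts; so only the if-branch can continue the loop. There, pick any $b_t\in {\sf P}_{t,i_t}\cap {\sf block}_H(\kappa_{t,i_t})$; the definition of ${\sf block}_H$ supplies $f\in {\sf D}_{\bf H}(b_t,\kappa_{t,i_t})$ with $b_t\succsim_H f$ (and even $b_t\succ_H f$ when the $\sim_d$ subcase applies). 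By totality and transitivity of $\succsim_H$, the tail ${\sf tail}_H({\sf C}_{\bf H}(b_t,\kappa_{t,i_t}))$ cannot equal $\{b_t\}$: otherwise every other edge of the circuit would satisfy $\succ_H b_t$, contradicting $b_t\succsim_H f$. Hence some $f'\neq b_t$ lies in the tail, and $f'\in {\sf C}_{\bf H}(b_t,\kappa_{t,i_t})-b_t\subseteq \kappa_{t,i_t}$, so $f'\notin {\sf P}_{t,i_t}$. Therefore ${\sf R}_t\supsetneq {\sf P}_{t,i_t}\supseteq {\sf R}_{t-1}$, and since ${\sf R}_t\subseteq E$ the outer loop runs at most $|E|+1$ times.

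The main subtlety lies in part~(2): one must pinpoint an element of ${\sf tail}_H({\sf C}_{\bf H}(b_t,\kappa_{t,i_t}))$ distinct from $b_t$ and, crucially, outside ${\sf P}_{t,i_t}$. The first follows from unpacking the two subcases in the definition of ${\sf block}_H$ together with the totality of the preorder; the second rests on the structural identity $\kappa_{t,i_t}\cap {\sf P}_{t,i_t}=\emptyset$, which exploits the inner-loop termination condition ${\sf P}_{t,i_t}={\sf P}_{t,i_t-1}$.
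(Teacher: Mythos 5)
Your proof is correct and follows essentially the same route as the paper: both parts rest on showing that a subset of $E$ (namely ${\sf P}_{t,i}$ for the inner loop and ${\sf R}_t$ for the outer loop) grows strictly on every non-terminating pass, using $Z_{t,i}\neq\emptyset$ via Lemma~\ref{lemma:R42} for part (1) and ${\sf tail}_H({\sf C}_{\bf H}(b_t,\kappa_{t,i_t}))\neq\{b_t\}$ together with $\kappa_{t,i_t}\cap{\sf P}_{t,i_t}=\emptyset$ for part (2). The paper states these strict inclusions in the discussion preceding the lemma and the proof simply invokes them; your write-up supplies the same facts with the justifications made explicit.
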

\begin{proof}
(1) 
If Algorithm~\ref{alg:main} does not go to 
Step~20 when $i = z$, then we have 
${\sf P}_{t,z-1} \subsetneq {\sf P}_{t,z}$. 
This completes the proof. 

(2)
If Algorithm~\ref{alg:main} does not go to 
Step~28 when $t = z$, then we have 
${\sf R}_{z-1} \subsetneq {\sf R}_{z}$. 
This completes the proof. 
\end{proof}

\section{Correctness} 
\label{section:correctness} 

In this section, we prove the following theorem. 

\begin{theorem} \label{theorem:main}
If Algorithm~\ref{alg:main} outputs {\bf null}, then 
there does not exist a strongly stable matching in $G$. 
Otherwise, the output of Algorithm~\ref{alg:main} is 
a strongly stable matching in $G$. 
\end{theorem}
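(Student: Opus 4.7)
My plan is to prove a single invariant that implies the theorem: for any strongly stable matching $\mu$ in $G$ and every pair $(t,i)$ reached by Algorithm~\ref{alg:main}, both $\mu \cap {\sf P}_{t,i} = \emptyset$ and $\mu \cap {\sf R}_t = \emptyset$. From this, the two clauses of the theorem will follow by showing that the two null-outputs each produce a matroidal obstruction to any strongly stable matching, and that the positive output $\mu_{\rm o}$ itself has no blocking edge.

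I would prove the invariant by induction on the iterations of the two loops, with base case ${\sf R}_0 = \emptyset$. The inner-loop step assumes $\mu \cap {\sf P}_{t,i-1} = \emptyset$ and shows $\mu \cap K_{t,i}(Z_{t,i}) = \emptyset$: every matched doctor $d$ has $\mu(d) \in E(d) \setminus {\sf P}_{t,i-1}$, so strong stability forces $\mu(d) \succsim_d f$ for the head edge $f \in K_{t,i}(d)$; if some edge of $\mu$ were to lie in $K_{t,i}(Z_{t,i})$, a matroid exchange around the rank deficiency $\rho_{K_{t,i}}(Z_{t,i}) < 0$, combined with the minimality of $Z_{t,i}$ (well-defined thanks to Lemma~\ref{lemma:submodular_rho}), would produce a blocking edge for $\mu$. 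The outer-loop step assumes $\mu \cap {\sf P}_{t,i_t} = \emptyset$ and shows $\mu \cap {\sf tail}_H({\sf C}_{\bf H}(b_t, \kappa_{t,i_t})) = \emptyset$; here I would use that $b_t \in {\sf block}_H(\kappa_{t,i_t})$ together with Lemma~\ref{lemma:new_matroid_circuit} and circuit exchanges from Lemmas~\ref{lemma:elimination} and \ref{lemma:IriT76_auxiliary} to transfer $b_t$'s blocking property onto $\mu$ whenever $\mu$ meets a tail edge of ${\sf C}_{\bf H}(b_t, \kappa_{t,i_t})$.

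With the invariant in hand, the three output cases divide as follows. For null at Step~12, Lemma~\ref{lemma:R42} yields $X \subseteq D[E \setminus {\sf P}_{t,i-1}]$ with $\rho_{K_{t,i}}(X) < 0$, but the invariant combined with the head-edge property forced by strong stability of any hypothetical $\mu$ would supply a common independent set of ${\bf D}|K_{t,i}$ and ${\bf H}\langle K_{t,i}\rangle$ matching every doctor in $X$, contradicting the rank deficiency. For null at Step~30, an $e_R \in {\sf R}_n$ with $\mu_{\rm o} + e_R \in \mathcal{F}$ would transfer via a fundamental-circuit argument into a blocking edge for any candidate $\mu$ (since such $\mu$ still must match all doctors in $D[E \setminus {\sf P}_{n,i_n-1}]$ to head edges, so $\mu$ and $\mu_{\rm o}$ agree up to matroidal exchanges within $K_{n,i_n}$). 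For the positive output, a candidate blocking edge $e = (d,h)$ for $\mu_{\rm o}$ must satisfy either $e \in {\sf P}_{n,i_n-1} \subseteq {\sf R}_n$ (since $\mu_{\rm o}(d)$ is a head edge in $E(d) \setminus {\sf P}_{n,i_n-1}$), which contradicts the Step~28 null-check together with the loop-exit condition ${\sf P}_{n,i_n} \cap {\sf block}_H(\kappa_{n,i_n}) = \emptyset$, or $e \sim_d \mu_{\rm o}(d)$ with $e \in K_{n,i_n}(d)$, in which case Lemma~\ref{lemma:new_matroid_circuit} rules out strong blocking of $\mu_{\rm o}$ on ${\bf H}$ by $e$, so condition (B2) fails.

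The main obstacle I expect is the outer-loop step of the invariant: transferring the blocking property of $b_t$, which is defined relative to $\kappa_{t,i_t}$, onto an arbitrary strongly stable $\mu$ that may differ considerably from $\kappa_{t,i_t}$. The crucial lever will be Lemma~\ref{lemma:new_matroid_circuit}, which aligns the $\sim_H$-classes of ${\bf H}\langle K_{t,i_t}\rangle$ with the tail structure of the fundamental circuits in ${\bf H}$, ensuring that any tail edge chosen by $\mu$ admits a $\succsim_H$-exchange with $b_t$. A secondary technical point is the Step~30 case, where one must ensure that no strongly stable matching can reroute around the obstacle identified by $e_R$; this again reduces to a careful circuit-exchange analysis inside ${\bf H}\langle K_{n,i_n}\rangle$.
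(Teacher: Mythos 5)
Your overall architecture coincides with the paper's: your invariant is exactly Lemma~\ref{lemma:reject} (every strongly stable $\sigma$ avoids ${\sf R}_k$), your inner-loop and outer-loop steps correspond to the paper's Case~A and Case~B, and your treatment of the three terminal cases matches Lemmas~\ref{lemma:null_1}, \ref{lemma:null_2}, and \ref{lemma:output}. The problem is that the two induction steps of the invariant --- which carry essentially all of the technical weight of the paper --- are asserted rather than proved, and in both places the mechanism you gesture at is not the one that actually closes the argument.

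For the inner-loop step, the contradiction is not that ``a matroid exchange around the rank deficiency would produce a blocking edge for $\mu$.'' What is actually shown is this: letting $L$ be the set of doctors $d \in Z_{t,i}$ with $\sigma(d) \in K_{t,i}(Z_{t,i})$, one proves ${\sf r}_{{\bf H}\langle K_{t,i}\rangle}(K_{t,i}(Z_{t,i})) - {\sf r}_{{\bf H}\langle K_{t,i}\rangle}(K_{t,i}(Z_{t,i}\setminus L)) \ge |L|$, so that $Z_{t,i}\setminus L$ is again a minimizer of $\rho_{K_{t,i}}$, contradicting the minimality of $Z_{t,i}$. Establishing that rank inequality requires taking an edge $\eta \in \sigma(L)$ spanned by a suitable base, comparing its fundamental circuits in ${\bf H}\langle K_{t,i}\rangle$ and in ${\bf H}$ via Lemma~\ref{lemma:new_matroid_circuit}, and using strong stability of $\sigma$ together with Lemma~\ref{lemma:circuit_union} to pack a circuit of ${\bf H}$ inside $\sigma$; none of this is in your sketch. (Also, your sentence ``strong stability forces $\mu(d)\succsim_d f$ for the head edge $f$'' is backwards: the head edge satisfies $f \succsim_d \mu(d)$ by construction, and strong stability is then exploited on the hospital side.) For the outer-loop step you correctly identify the transfer of $b_t$'s blocking property from $\kappa_{t,i_t}$ to $\sigma$ as the crux, but you leave it unresolved. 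The missing idea is an intermediate object: a common independent set $\mu$ of ${\bf D}|K$ and ${\bf H}\langle K\rangle$ of the same size as $\kappa_{t,i_t}$ satisfying conditions (P1)--(P4) of Lemma~\ref{lemma_2:case_2} --- in particular $e \succ_d \sigma(d)$ for every $e = (d,h) \in \mu\setminus\sigma$ --- obtained by repeatedly augmenting along shortcut-free simple directed cycles in the exchange graph ${\sf G}_{{\bf D}|K,{\bf H}\langle K\rangle}(\mu)$ (Lemma~\ref{lemma:IriT76_intersection}) so as to decrease the number of ``indifferent'' edges, while a delicate case analysis via Lemmas~\ref{lemma:new_matroid_circuit} and \ref{lemma:IriT76_auxiliary} shows that ${\sf tail}_H({\sf C}_{\bf H}(b_t,\cdot))$ continues to meet $\sigma$ throughout the exchanges. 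Without this construction the circuit-packing argument cannot be run against $\sigma$, so as written the proposal does not yet constitute a proof.
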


In what follows, we prove lemmas needed to prove 
Theorem~\ref{theorem:main}. 

\begin{lemma} \label{lemma:output}
If Algorithm~\ref{alg:main} 
does not output {\bf null}, then 
the output $\mu_{\rm o}$ of Algorithm~\ref{alg:main} is 
a strongly stable matching in $G$. 
\end{lemma}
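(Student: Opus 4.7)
The plan is to verify (M1) and (M2) for $\mu_{\rm o} = \kappa_{n,i_n}$ and then rule out every candidate blocking edge. Because $\mu_{\rm o}$ is a common independent set of ${\bf D}|K_{n,i_n}$ and ${\bf H}\langle K_{n,i_n} \rangle$, (M1) is immediate from the definition of ${\bf D}$, and (M2) follows from Lemma~\ref{lemma:new_matroid_independent}. Two invariants at termination will drive everything: (i) $\mu_{\rm o} \cap {\sf P}_{n,i_n} = \emptyset$, which I would derive from $\mu_{\rm o} \subseteq K_{n,i_n} \subseteq E \setminus {\sf P}_{n,i_n-1}$ together with the observation that the inner do-while can only exit via the ``else'' branch at Step~17, forcing ${\sf P}_{n,i_n} = {\sf P}_{n,i_n-1}$; and (ii) ${\sf P}_{n,i_n} \cap {\sf block}_H(\mu_{\rm o}) = \emptyset$. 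For (ii), I would show that if the outer loop terminated via the ``if'' branch, then ${\sf tail}_H({\sf C}_{\bf H}(b_n, \mu_{\rm o})) \subseteq {\sf P}_{n,i_n}$ would be required; but the definition of ${\sf block}_H$ gives some $f \in {\sf D}_{\bf H}(b_n, \mu_{\rm o}) \subseteq \mu_{\rm o}$ with $b_n \succsim_H f$, and a small case split on whether $b_n$ is itself a $\succsim_H$-minimum of ${\sf C}_{\bf H}(b_n, \mu_{\rm o})$ then shows ${\sf tail}_H({\sf C}_{\bf H}(b_n, \mu_{\rm o})) \cap \mu_{\rm o} \neq \emptyset$, contradicting (i).

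I would then fix an arbitrary $e = (d,h) \in E \setminus \mu_{\rm o}$ and rule out blocking in two cases. In the case $e \notin {\sf P}_{n,i_n}$, since $e \notin {\sf P}_{n,i_n-1}$ we have $d \in D[K_{n,i_n}]$, so $\mu_{\rm o}(d) \in K_{n,i_n}(d) = {\sf head}_d(E(d) \setminus {\sf P}_{n,i_n-1})$. If $e \notin K_{n,i_n}$, then some head element strictly dominates $e$ under $\succsim_d$, so $\mu_{\rm o}(d) \succ_d e$ and (B1) fails on $d$. If $e \in K_{n,i_n}$, then $e \sim_d \mu_{\rm o}(d)$ and blocking would require a strong ${\bf H}$-block; (H1) fails because $\mu_{\rm o}$ is a base of ${\bf H}|K_{n,i_n}$ by Lemma~\ref{lemma:new_matroid_independent}, and Lemma~\ref{lemma:new_matroid_circuit} ensures that every $f \in {\sf D}_{\bf H}(e, \mu_{\rm o})$ satisfies $f \succsim_H e$ (edges of ${\sf C}_{{\bf H}\langle K_{n,i_n}\rangle}(e,\mu_{\rm o})$ are $\sim_H$-equivalent to $e$, while the outliers in ${\sf C}_{\bf H}(e,\mu_{\rm o})$ are strictly preferred), so no witness to the strong (H2) exists.

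In the case $e \in {\sf P}_{n,i_n} = {\sf R}_n$, the Step~28 emptiness check gives $\mu_{\rm o} + e \notin \mathcal{F}$, excluding (H1). Assuming $e$ is blocking, (B1) then provides weak (H2): $e \in {\sf sp}_{\bf H}(\mu_{\rm o})$ with $e \succsim_H f$ for some $f \in {\sf D}_{\bf H}(e, \mu_{\rm o})$; and (B2) provides either $e \succ_d \mu_{\rm o}(d)$ or a strong (H2) witness. These two subcases match exactly the two clauses defining ${\sf block}_H$, so $e \in {\sf block}_H(\mu_{\rm o}) \cap {\sf P}_{n,i_n}$, contradicting invariant (ii).

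The main obstacle I expect is invariant (ii): tying the termination condition ${\sf R}_n = {\sf P}_{n,i_n}$ to the absence of any block edge inside ${\sf P}_{n,i_n}$ requires analyzing how ${\sf tail}_H$ interacts with the fundamental circuit ${\sf C}_{\bf H}(b_n, \mu_{\rm o})$ and exploiting (i) to show that the circuit's tail always contains an edge of $\mu_{\rm o}$. Once (ii) is secured, the remaining case analysis is essentially mechanical, powered by Lemma~\ref{lemma:new_matroid_circuit} together with the Step~28 check.
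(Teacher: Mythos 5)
Your proposal is correct and follows essentially the same route as the paper's proof: it reduces everything to the termination invariants (${\sf P}_{n,i_n} = {\sf P}_{n,i_n-1}$, ${\sf P}_{n,i_n} \cap {\sf block}_H(\mu_{\rm o}) = \emptyset$, and the Step~28 check), and then disposes of candidate blocking edges inside $K_{n,i_n}$ via Lemma~\ref{lemma:new_matroid_circuit} and of those inside ${\sf P}_{n,i_n}$ via the ${\sf block}_H$ emptiness. The only cosmetic difference is that you organize the case split by membership in ${\sf P}_{n,i_n}$ rather than by whether $e \succ_d \mu_{\rm o}(d)$ or $e \sim_d \mu_{\rm o}(d)$, and you spell out in more detail why the outer loop must exit through the ``else'' branch; the substance is identical.
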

\begin{proof}
In order to prove this lemma by contradiction, 
we assume that $\mu_{\rm o}$ 
is not a strongly stable matching in $G$. 
For notational simplicity, 
we define $K := K_{n,i_n}$.
Recall that $\mu_{\rm o}$ is a common independent set of 
${\bf D}|K$ and ${\bf H}\langle K \rangle$.
Since $\mu_{\rm o}$ is a base of ${\bf H}\langle K \rangle$, Lemma~\ref{lemma:new_matroid_independent}
implies that $\mu_{\rm o}$ is a matching in $G$. 
This implies that 
$\mu_{\rm o}$ is not strongly stable, and 
there exists a blocking edge 
$e = (d,h) \in E \setminus \mu_{\rm o}$ 
for $\mu_{\rm o}$. 

First, we assume that 
$e \succ_d \mu_{\rm o}(d)$. 
Since $\mu_{\rm o} \subseteq K$, 
we have $e \in {\sf P}_{n,i_n}$. 
Since Algorithm~\ref{alg:main} does not 
output {\bf null} in Step~30, 
$e \in {\sf P}_{n,i_n} \cap {\sf block}_H(\mu_{\rm o})$. 
This contradicts the fact that 
${\sf R}_n = {\sf P}_{n,i_n}$.

Next, we assume that 
$e \sim_d \mu_{\rm o}(d)$. 
If $e \not \in K$, then 
$e \in {\sf P}_{n,i_n}$.
Since Algorithm~\ref{alg:main} does not 
output {\bf null} in Step~30, 
$e \in {\sf P}_{n,i_n} \cap {\sf block}_H(\mu_{\rm o})$. 
This contradicts the fact that 
${\sf R}_n = {\sf P}_{n,i_n}$.
Thus, we can assume that $e \in K$. 
Since $\mu_{\rm o}$ is a base of ${\bf H}\langle K \rangle$, 
$e \in {\sf sp}_{{\bf H}\langle K \rangle}(\mu_{\rm o})$.  
Thus, Lemma~\ref{lemma:new_matroid_circuit} 
implies that 
$f \succsim_H e$  
for every edge $f \in {\sf D}_{\bf H}(e, \mu_{\rm o})$. 
This contradicts the fact that 
$e$ is a blocking edge for $\mu_{\rm o}$.
\end{proof}

\begin{lemma} \label{lemma:reject}
Assume that Algorithm~\ref{alg:main} halts when 
$t = k$. Then 
for every strongly stable matching $\sigma$ in $G$, 
we have $\sigma \cap {\sf R}_k = \emptyset$. 
\end{lemma}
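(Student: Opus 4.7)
The plan is to prove the lemma by induction on $k$. The base case $k=0$ is immediate from ${\sf R}_0=\emptyset$. For the inductive step, I fix a strongly stable matching $\sigma$ in $G$ and assume $\sigma\cap {\sf R}_s=\emptyset$ for every $s<k$. Because ${\sf R}_k$ equals either ${\sf P}_{k,i_k}$ alone or ${\sf P}_{k,i_k}\cup {\sf tail}_H({\sf C}_{\bf H}(b_k,\kappa_{k,i_k}))$ (and the early-halt Step~9 case reduces to stopping the inner argument one step sooner), it suffices to establish $\sigma\cap {\sf P}_{k,i_k}=\emptyset$ and, when relevant, $\sigma\cap {\sf tail}_H({\sf C}_{\bf H}(b_k,\kappa_{k,i_k}))=\emptyset$.

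For the first equality I would run an inner induction on $i\in\{0,1,\dots,i_k\}$. The base case $i=0$ uses ${\sf P}_{k,0}={\sf R}_{k-1}$ together with the outer hypothesis. For the inductive step, assume $\sigma\cap {\sf P}_{k,i-1}=\emptyset$, and suppose toward contradiction that $\sigma\cap K_{k,i}(Z_{k,i})\ne\emptyset$. For each doctor $d\in Z_{k,i}$, either $\sigma(d)\in K_{k,i}(d)$ (the top of $d$'s preference on $E(d)\setminus {\sf P}_{k,i-1}$), in which case I set $\tau(d):=\sigma(d)$, or $\sigma(d)\notin K_{k,i}(d)$, in which case every edge of $K_{k,i}(d)$ strongly blocks $\sigma$ on $d$. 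Strong stability then forces such an edge to not weakly block $\sigma$ on ${\bf H}$: it must lie in ${\sf sp}_{\bf H}(\sigma)$ and be strictly dominated under $\succ_H$ by every element of its fundamental circuit in $\sigma$. Using this slack, together with Lemmas~\ref{lemma:IriT76_sequence} and~\ref{lemma:IriT76_auxiliary}, I would select $\tau(d)\in K_{k,i}(d)$ in the second case so that $\{\tau(d):d\in Z_{k,i}\}$ is a common independent set of ${\bf D}|K_{k,i}$ and ${\bf H}\langle K_{k,i}\rangle$ of size $|Z_{k,i}|$ contained in $K_{k,i}(Z_{k,i})$. This contradicts $\rho_{K_{k,i}}(Z_{k,i})<0$ via Lemma~\ref{lemma:R42}. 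The minimality of $Z_{k,i}$ is used to exclude the degenerate possibility that a doctor in $Z_{k,i}$ cannot contribute a $\tau(d)$ at all, since otherwise $Z_{k,i}$ minus that doctor would give a strictly smaller minimizer.

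For the second equality I would argue by contradiction: suppose some $f\in {\sf tail}_H({\sf C}_{\bf H}(b_k,\kappa_{k,i_k}))$ lies in $\sigma$. Writing $b_k=(d_k,h_k)$, the definitions of ${\sf block}_H(\kappa_{k,i_k})$ and ${\sf tail}_H$ guarantee that either $b_k\succ_{d_k}\kappa_{k,i_k}(d_k)$ with $b_k\succsim_H f$, or $b_k\sim_{d_k}\kappa_{k,i_k}(d_k)$ with $b_k\succ_H f$. Since $b_k\in {\sf P}_{k,i_k}$ and we have just shown $\sigma\cap {\sf P}_{k,i_k}=\emptyset$, we know $b_k\notin\sigma$. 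I would then invoke the fundamental-circuit structure (Lemma~\ref{lemma:new_matroid_circuit} and an exchange via Lemma~\ref{lemma:elimination}) to show that $b_k$ itself, or a suitable edge produced by a single matroid exchange that moves $f$ out of $\sigma$, is a blocking edge for $\sigma$, contradicting strong stability.

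The main obstacle is the construction of $\tau(\cdot)$ in the inner step. Membership of $\sigma\cap K_{k,i}(Z_{k,i})$ in ${\bf H}$ is automatic, but the argument requires independence in the strictly stronger matroid ${\bf H}\langle K_{k,i}\rangle$, so the construction must respect the preference-layered structure $T_1,T_2,\dots$ generated by Algorithm~\ref{alg:sub_matroid}. It is precisely the strong blocking condition, together with the strict $\succ_H$ domination on fundamental circuits that strong stability supplies, that provides the inequalities needed to apply Lemma~\ref{lemma:IriT76_auxiliary} within each layer $T_j$; a weaker stability notion would not suffice.
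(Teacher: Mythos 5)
Your overall architecture (minimal counterexample over the iterations, splitting the analysis into edges removed via $K_{k,i}(Z_{k,i})$ and edges removed via ${\sf tail}_H({\sf C}_{\bf H}(b_k,\kappa_{k,i_k}))$) matches the paper's, but both halves contain gaps that are fatal as written.

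First, in the inner step you aim to contradict $\rho_{K}(Z)<0$ by exhibiting a common independent set of ${\bf D}|K$ and ${\bf H}\langle K\rangle$ of size $|Z|$ inside $K(Z)$, i.e.\ a representative $\tau(d)$ for \emph{every} $d\in Z$. This is the wrong target: $\rho_K(Z)<0$ is a true fact (it is exactly why the algorithm entered the branch that computes $Z$), so no such transversal can exist, whether or not $\sigma$ meets $K(Z)$. The doctors $d\in Z$ with $\sigma(d)\notin K(Z)$ may be collectively deficient, for instance two such doctors whose only $K$-edges form a circuit of ${\bf H}\langle K\rangle$, and strong stability gives you no way to assign them independent representatives. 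The contradiction has to be with the \emph{minimality} of $Z$: letting $L$ be the set of doctors $d\in Z$ with $\sigma(d)\in K(Z)$ (nonempty by assumption), one shows ${\sf r}_{{\bf H}\langle K\rangle}(K(Z))\ge {\sf r}_{{\bf H}\langle K\rangle}(K(Z\setminus L))+|L|$, whence $Z\setminus L$ is again a minimizer of $\rho_K$, a strictly smaller one. Proving that rank inequality is where the circuit machinery (Lemmas~\ref{lemma:new_matroid_circuit} and \ref{lemma:circuit_union}) and strong stability enter: if some edge $\eta$ of $\sigma(L)$ could not be added to a base of ${\bf H}\langle K\rangle|K(Z\setminus L)$, then replacing each non-$\sigma$ edge of its fundamental circuit by that edge's fundamental circuit with respect to $\sigma$ would produce a circuit contained in $\sigma$. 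Your single-doctor ``degenerate possibility'' remark does not capture this: deleting one doctor from $Z$ lowers $|Z|$ by one but need not lower the rank, so it does not by itself produce a smaller minimizer.

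Second, the ${\sf tail}_H$ half is far from a ``single matroid exchange''. To run the circuit-union argument against ${\sf C}_{\bf H}(b_k,\kappa_{k,i_k})$ you need every non-$\sigma$ edge $e=(d,h)$ of that circuit to \emph{strictly} block $\sigma$ on $d$, so that strong stability forces $f\succ_H e$ for all $f\in{\sf D}_{\bf H}(e,\sigma)$ and the chosen tail edge stays out of all the fundamental circuits. But $\kappa_{k,i_k}$ may contain edges with $e\sim_d\sigma(d)$ and $e\notin\sigma$, for which you only get $f\succsim_H e$, which is not enough. The paper spends most of Case~B (Lemma~\ref{lemma_2:case_2}) repairing exactly this: it iteratively swaps such edges for the corresponding $\sigma$-edges along a shortcut-free directed cycle in the exchange graph of ${\bf D}|K$ and ${\bf H}\langle K\rangle$ (Lemma~\ref{lemma:IriT76_intersection}), and then verifies via a three-case analysis with Lemma~\ref{lemma:IriT76_auxiliary} that after each swap the tail of the fundamental circuit of $b_k$ still meets $\sigma$ and is still dominated by $\xi$. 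Without this construction the contradiction does not go through, so this step needs to be developed rather than asserted.
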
 

We prove Lemma~\ref{lemma:reject} in Section~\ref{section:lemma:reject}.

\begin{lemma} \label{lemma:null_1}
If Algorithm~\ref{alg:main} 
outputs {\bf null} in Step~10, then 
there does not exist a strongly stable matching in $G$.
\end{lemma}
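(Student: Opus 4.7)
The plan is a short proof by contradiction. Suppose a strongly stable matching $\sigma$ in $G$ exists at the moment Algorithm~\ref{alg:main} halts via Step~10, and let $t,i$ denote the values of the loop counters at that moment, so that the triggering inequality
\begin{equation*}
{\sf r}_{\bf H}(K_{t,i}) > |D[E \setminus {\sf P}_{t,i-1}]|
\end{equation*}
holds and ${\sf R}_t$ has just been assigned the value ${\sf P}_{t,i-1}$. Applying Lemma~\ref{lemma:reject} with $k = t$ to $\sigma$ yields $\sigma \cap {\sf P}_{t,i-1} = \emptyset$, whence $\sigma \subseteq E \setminus {\sf P}_{t,i-1}$. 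Consequently every doctor matched by $\sigma$ belongs to $D[E \setminus {\sf P}_{t,i-1}]$, and (M1) gives $|\sigma| \le |D[E \setminus {\sf P}_{t,i-1}]|$.

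Next I would invoke matroid exchange in ${\bf H}$. Let $B$ be a base of ${\bf H}|K_{t,i}$, so
\begin{equation*}
|B| = {\sf r}_{\bf H}(K_{t,i}) > |D[E \setminus {\sf P}_{t,i-1}]| \ge |\sigma|.
\end{equation*}
Both $\sigma$ (by (M2)) and $B$ are independent sets of ${\bf H}$, so axiom (I2) produces an edge $e \in B \setminus \sigma$ with $\sigma + e \in \mathcal{F}$.

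To finish, I would verify that this $e$ is a blocking edge for $\sigma$, contradicting strong stability. Writing $e = (d,h)$, membership $e \in K_{t,i} = {\rm Ch}_D(E \setminus {\sf P}_{t,i-1})$ means $e \in {\sf head}_d(E(d) \setminus {\sf P}_{t,i-1})$. Because $\sigma \cap {\sf P}_{t,i-1} = \emptyset$, we have $\sigma(d) \in (E(d) \setminus {\sf P}_{t,i-1}) \cup \{\emptyset\}$, and using the assumption $e \succ_d \emptyset$ to cover the unmatched case we conclude $e \succsim_d \sigma(d)$, so $e$ weakly blocks $\sigma$ on $d$. At the same time $\sigma + e \in \mathcal{F}$ is exactly (H1), which makes $e$ both weakly and strongly block $\sigma$ on ${\bf H}$. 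Thus $e$ satisfies (B1) and (B2), so $e$ is a blocking edge for $\sigma$, yielding the desired contradiction.

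The only mildly subtle point is confirming that Lemma~\ref{lemma:reject} is applicable at the Step~10 halt: before the halt the algorithm does execute the assignment ${\sf R}_t := {\sf P}_{t,i-1}$, so the hypothesis of Lemma~\ref{lemma:reject} is met with $k = t$. Beyond this bookkeeping, the argument is just a single matroid augmentation combined with direct use of the blocking-edge definition.
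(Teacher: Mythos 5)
Your proof is correct and follows essentially the same route as the paper's: invoke Lemma~\ref{lemma:reject} to get $\sigma \subseteq E \setminus {\sf P}_{t,i-1}$, bound $|\sigma|$ by $|D[E\setminus {\sf P}_{t,i-1}]|$ via (M1), use (I2) to augment $\sigma$ by an edge $e$ of a base of ${\bf H}|K_{t,i}$, and observe that $e$ is a blocking edge. You merely spell out in more detail why $e \succsim_d \sigma(d)$ and why (B1)--(B2) hold, which the paper leaves implicit.
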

\begin{proof}
Assume that 
Algorithm~\ref{alg:main} 
outputs {\bf null} in Step~10 
when $t = k$ and $i = z$.
In addition, we assume that 
there exists a strongly stable matching 
$\sigma$ in $G$.
Then ${\sf r}_{\bf H}(K_{k,z}) > |D[E \setminus {\sf R}_{k}]|$.
Thus, since 
Lemma~\ref{lemma:reject} implies that 
$\sigma \subseteq E \setminus {\sf R}_{k}$, 
there exists an independent set 
$I$ of ${\bf H}$ such that 
$I \subseteq K_{k,z}$ and 
$|I| > |\sigma|$.
Since $\sigma \in \mathcal{F}$, 
(I2) implies that 
there exists an edge $e = (d,h) \in I \setminus \sigma$ such that 
$\sigma + e \in \mathcal{F}$. 
It follows from Lemma~\ref{lemma:reject} 
that 
$e \succsim_d \sigma(d)$. 
However, this contradicts the fact that 
$\sigma$ is strongly stable. 
\end{proof}

\begin{lemma} \label{lemma:null_2}
If Algorithm~\ref{alg:main} 
outputs {\bf null} in Step~30, then 
there does not exist a strongly stable matching in $G$.
\end{lemma}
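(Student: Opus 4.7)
The plan is to argue by contradiction: assume a strongly stable matching $\sigma$ exists in $G$, and combine Lemma~\ref{lemma:reject} with the exchange axiom (I2) applied to $\sigma$ and $\mu_{\rm o} + e_{\rm R}$ to extract an explicit blocking edge for $\sigma$. Abbreviate $K := K_{n, i_n}$. By Lemma~\ref{lemma:reject}, $\sigma \cap {\sf R}_n = \emptyset$; and since the inner loop terminates with ${\sf P}_{n, i_n} = {\sf P}_{n, i_n - 1}$ while $K \subseteq E \setminus {\sf P}_{n, i_n - 1}$, the set $K$ is disjoint from ${\sf R}_n$, so $e_{\rm R} \notin \mu_{\rm o}$ and $|\mu_{\rm o} + e_{\rm R}| = |\mu_{\rm o}| + 1$.

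Two preparatory facts carry the argument. First, $|\sigma| \le |\mu_{\rm o}|$: every doctor $d'$ with $\sigma(d') \neq \emptyset$ satisfies $\sigma(d') \notin {\sf R}_n \supseteq {\sf P}_{n, i_n - 1}$, so $D[\sigma] \subseteq D[E \setminus {\sf P}_{n, i_n - 1}] = D[K]$, and $|\mu_{\rm o}| = |D[K]|$ as observed in the discussion immediately after Algorithm~\ref{alg:main}. Second, every $e = (d, h) \in {\sf R}_n$ satisfies $e \succsim_d \sigma(d)$. To see this, let $t^*$ be the first phase at which $e$ enters ${\sf R}$: either $e$ was inserted into some ${\sf P}_{t^*, j^*}$ through the promotion ${\sf P}_{t^*, j^*} := {\sf P}_{t^*, j^*-1} \cup K_{t^*, j^*}(Z_{t^*, j^*})$, or $e$ joined only through the tail augmentation of ${\sf R}_{t^*}$, in which case $e \neq b_{t^*}$ (since $b_{t^*}$ already lies in ${\sf P}_{t^*, i_{t^*}}$) forces $e \in \kappa_{t^*, i_{t^*}} \subseteq K_{t^*, i_{t^*}}$. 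In either case $e \in K_{t^*, j}$ for some $j \le i_{t^*}$, hence $e \in {\sf head}_d(E(d) \setminus {\sf P}_{t^*, j-1}(d))$; because ${\sf P}_{t^*, j-1} \subseteq {\sf R}_n$ while $\sigma(d) \notin {\sf R}_n$, the value $\sigma(d)$ is either $\emptyset$ or itself lies in $E(d) \setminus {\sf P}_{t^*, j-1}(d)$, yielding $e \succsim_d \sigma(d)$.

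Now apply (I2) to $\sigma, \mu_{\rm o} + e_{\rm R} \in \mathcal{F}$ using $|\sigma| < |\mu_{\rm o} + e_{\rm R}|$: there is $e' \in (\mu_{\rm o} + e_{\rm R}) \setminus \sigma$ with $\sigma + e' \in \mathcal{F}$, so (H1) holds for $e'$ against $\sigma$ and $e'$ strongly blocks $\sigma$ on ${\bf H}$. If $e' = e_{\rm R} = (d, h)$, the second preparatory fact gives $e' \succsim_d \sigma(d)$. If $e' \in \mu_{\rm o}$, write $e' = (d', h') = \mu_{\rm o}(d')$; since $e' \in K$, $e' \in {\sf head}_{d'}(E(d') \setminus {\sf P}_{n, i_n - 1}(d'))$, and since $\sigma(d')$ is either $\emptyset$ or lies in the same set, $e' \succsim_{d'} \sigma(d')$. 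Either way $e'$ is a blocking edge for $\sigma$, contradicting its strong stability.

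I expect the main obstacle to be the second preparatory fact, because edges of ${\sf R}_n$ are produced through two distinct mechanisms (the promotion into ${\sf P}$ via $K_{t,j}(Z_{t,j})$ and the tail augmentation of ${\sf R}$), and the subcase $e = b_{t^*}$ has to be routed back into the promotion branch; everything else is a clean application of (I2) combined with the head-edge property of $\mu_{\rm o} \subseteq K$.
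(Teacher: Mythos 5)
Your proposal is correct and follows essentially the same route as the paper: bound $|\sigma| \le |\mu_{\rm o}|$ via Lemma~\ref{lemma:reject}, apply (I2) to $\sigma$ and $\mu_{\rm o} + e_{\rm R}$, and turn the resulting edge into a blocking edge for $\sigma$. The only difference is that you spell out in full why every edge of ${\sf R}_n$ (in particular $e_{\rm R}$) is weakly preferred by its doctor to $\sigma$, a step the paper asserts more tersely via $\mu_+ \subseteq K_{n,i_n} \cup {\sf R}_n$ and Lemma~\ref{lemma:reject}.
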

\begin{proof}
Assume that 
Algorithm~\ref{alg:main} 
outputs {\bf null} in Step~30, and 
there exists a strongly stable matching 
$\sigma$ in $G$. 
Lemma~\ref{lemma:reject} implies that 
$\sigma \subseteq E \setminus {\sf R}_n$. 
Notice that $|\mu_{\rm o}| = |D[E \setminus {\sf R}_n]|$. 
Thus, $|\sigma| \le |\mu_{\rm o}|$. 
We define $\mu_+ := \mu_{\rm o} + e_{\rm R}$. 
Then since $\mu_+ \subseteq K_{n,i_n} \cup {\sf R}_n$
and $\mu_+ \in \mathcal{F}$,
there exists an edge $e = (d,h) \in \mu_+ \setminus \sigma$ such that 
$\sigma +  e \in \mathcal{F}$ and 
$e \succsim_d \sigma(d)$.  
This contradicts the fact that 
$\sigma$ is strongly stable. 
\end{proof}

\begin{proof}[Proof of Theorem~\ref{theorem:main}]
This theorem follows from Lemmas~\ref{lemma:output},  
\ref{lemma:null_1}, and \ref{lemma:null_2}. 
\end{proof} 

\subsection{Proof of Lemma~\ref{lemma:reject}}
\label{section:lemma:reject}

In this subsection, we prove Lemma~\ref{lemma:reject}. 
An edge $e \in {\sf R}_k$ is called a \emph{bad edge} if 
there exists a strongly stable matching $\sigma$ in $G$ such that 
$e \in \sigma$. 
If there does not exist a bad edge in ${\sf R}_k$, then 
the proof is done. 
Thus, we assume that there exists a bad edge in ${\sf R}_k$. 
We denote by $\Delta$ the set of integers 
$c \in [k]$ such that 
there exists a bad edge in ${\sf R}_c \setminus {\sf R}_{c-1}$. 
Define $c$ as the minimum integer in $\Delta$. 
We divide the proof into the following two cases. 
\begin{description}
\item[Case~A.] 
There exists an integer $z \in [i_c]$
such that 
${\sf P}_{c,z} \setminus {\sf P}_{c,z-1}$ 
contains a bad edge. 
\item[Case~B.]
${\sf P}_{c,i_c} \setminus {\sf R}_{c-1}$
does not 
contain a bad edge. 
\end{description} 

\subsubsection{Case~A}

First, we assume that 
there exists an integer $z \in [i_c]$
such that 
${\sf P}_{c,z} \setminus {\sf P}_{c,z-1}$ 
contains a bad edge.
Furthermore, we assume that $z$ is 
the minimum integer in $[i_c]$ 
satisfying this condition. 
In this case, there does not exist a bad edge in 
${\sf P}_{c,z-1}$. 
Let $\xi$ be a bad edge in ${\sf P}_{c,z} \setminus {\sf P}_{c,z-1}$. 
Then $\xi \in K_{c,z}(Z_{c,z})$. 
Let $\sigma$ be a strongly stable matching in $G$ 
such that $\xi \in \sigma$.
For notational simplicity, 
we define $K := K_{c,z}$, 
$Z := Z_{c,z}$, and ${\sf r}^{\ast} := {\sf r}_{{\bf H} \langle K \rangle}$. 

Define $L$ as the set of doctors $d \in Z$ such that 
$\sigma(d) \in K(Z)$. 
Then since $\xi \in \sigma \cap K(Z)$, 
we have $L \neq \emptyset$. 

\begin{lemma} \label{lemma_1:case_2}
For every doctor $d \in Z \setminus L$ and 
every edge $e \in K(d)$, we have 
$e \succ_d \sigma(d)$. 
\end{lemma}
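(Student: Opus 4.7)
The plan is to derive the preference comparison directly from the definition of $K$ as the per-doctor head of $E \setminus {\sf P}_{c,z-1}$, once I have established that $\sigma$ avoids ${\sf P}_{c,z-1}$ entirely.

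First I would verify that $\sigma \cap {\sf P}_{c,z-1} = \emptyset$. By the minimality of $c$ in $\Delta$, no edge of ${\sf R}_{c-1}$ is a bad edge, and by the minimality of $z$ under the assumption of Case~A, no edge of ${\sf P}_{c,z'} \setminus {\sf P}_{c,z'-1}$ is a bad edge for any $z' < z$. Writing ${\sf P}_{c,z-1}$ as the union of ${\sf R}_{c-1} = {\sf P}_{c,0}$ with these increments, ${\sf P}_{c,z-1}$ contains no bad edge. Because ${\sf P}_{c,z-1} \subseteq {\sf R}_c \subseteq {\sf R}_k$, any edge of $\sigma$ lying in ${\sf P}_{c,z-1}$ would fit the definition of a bad edge; hence $\sigma \cap {\sf P}_{c,z-1} = \emptyset$.

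Now fix $d \in Z \setminus L$ and $e \in K(d)$. By the definition of $K = {\rm Ch}_D(E \setminus {\sf P}_{c,z-1})$ we have $K(d) = {\sf head}_d(E(d) \setminus {\sf P}_{c,z-1})$. If $\sigma(d) = \emptyset$, the standing assumption $e \succ_d \emptyset$ finishes the case. Otherwise $\sigma(d) \in E(d)$, and by the first step $\sigma(d) \in E(d) \setminus {\sf P}_{c,z-1}$. I would then rule out $\sigma(d) \in K(d)$: since $d \in Z$, this would give $\sigma(d) \in K(Z)$, putting $d \in L$ and contradicting $d \in Z \setminus L$. Therefore $\sigma(d) \notin {\sf head}_d(E(d) \setminus {\sf P}_{c,z-1})$, so there exists $f \in E(d) \setminus {\sf P}_{c,z-1}$ with $f \succ_d \sigma(d)$. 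Combining $e \succsim_d f$ (since $e \in {\sf head}_d(E(d) \setminus {\sf P}_{c,z-1})$) with $f \succ_d \sigma(d)$ and the transitivity and totality of $\succsim_d$, I conclude $e \succ_d \sigma(d)$.

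The only real obstacle is the bookkeeping needed to see that the nested minimality choices (first of $c$, then of $z$) propagate cleanly into $\sigma \cap {\sf P}_{c,z-1} = \emptyset$; once that is in hand, the rest is a short manipulation of the head operator and the total preorder $\succsim_d$.
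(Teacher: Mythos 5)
Your proof is correct and follows essentially the same route as the paper's: both hinge on the fact that ${\sf P}_{c,z-1}$ contains no bad edge (so $\sigma(d) \notin {\sf P}_{c,z-1}$) together with $\sigma(d) \notin K(d) = {\sf head}_d(E(d) \setminus {\sf P}_{c,z-1})$, and then conclude via the head property of $e$. The paper phrases it as a contradiction argument while you argue directly (and handle $\sigma(d) = \emptyset$ explicitly), but the content is the same.
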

\begin{proof}
Assume that there exist a 
doctor $d \in Z \setminus L$ and 
an edge $e \in K(d)$ such that 
$\sigma(d) \succsim_d e$. 
Since $d \notin L$, 
$\sigma(d) \notin K(d)$. 
Thus, since $K(d) = {\sf head}_d(E(d) \setminus {\sf P}_{c,z-1})$ 
and $e \in K(d)$,  
we have 
$\sigma(d) \in {\sf P}_{c,z-1}$. 
This implies that 
$\sigma(d)$ is a bad edge in ${\sf P}_{c,z-1}$.
However, this is a contradiction. 
This completes the proof.
\end{proof} 

In what follows, we prove that 
$Z \setminus L$ is a minimizer of 
$\rho_{K}$. 
This contradicts the fact that 
$Z$ is the inclusion-wise minimal 
minimizer of 
$\rho_{K}$.   
To this end,  
it is sufficient to prove that 
\begin{equation*}
\begin{split}
{\sf r}^{\ast}(K(Z)) 
- 
{\sf r}^{\ast}(K(Z \setminus L))
\ge |L|. 
\end{split}
\end{equation*}

Let $B$ be a base of 
${\bf H} \langle K \rangle | K(Z \setminus L)$. 
If 
$|B| \le {\sf r}^{\ast}(K(Z)) - |L|$, then 
the proof is done. 
Thus, we assume that 
$|B| > {\sf r}^{\ast}(K(Z)) - |L|$. 
Since $\sigma(L) \subseteq K(Z)$, 
we can obtain an independent set $I$ of 
${\bf H} \langle K \rangle | K(Z)$ 
satisfying the following conditions by greedily adding 
edges in $\sigma(L)$ to $B$. 
\begin{itemize}
\item
$B \subseteq I \subseteq B \cup \sigma(L)$.
\item
For every edge $e \in \sigma(L) \setminus I$, 
we have $e \in {\sf sp}_{{\bf H} \langle K \rangle | K(Z)}(I)$. 
\end{itemize}
Notice that the above assumption implies that 
$\sigma(L) \setminus I \neq \emptyset$. 
Let $\eta$ be an edge in $\sigma(L) \setminus I$. 
Define $C := {\sf C}_{{\bf H} \langle K \rangle| K(Z)}(\eta,I)$.
For every edge $e \in C$, 
$e \sim_H \eta$. 
Let $B^{\ast}$ be a base of ${\bf H} \langle K \rangle$
such that $I \subseteq B^{\ast}$.
Notice that (I2) guarantees the existence of 
such a base $B^{\ast}$. 
Then $\eta \notin B^{\ast}$, 
$\eta \in {\sf sp}_{{\bf H} \langle K \rangle}(B^{\ast})$, and 
Lemma~\ref{lemma:elimination} implies that 
${\sf C}_{{\bf H} \langle K \rangle}(\eta,B^{\ast}) = C$. 
We define $C^{\prime} := {\sf C}_{{\bf H}}(\eta,B^{\ast})$.
Then 
Lemma~\ref{lemma:new_matroid_circuit} implies that 
$f \succ_H e$ 
for every edge $e \in C$ and 
$f \in C^{\prime} \setminus C$. 
We prove that, 
for every edge $e \in C^{\prime} \setminus \sigma$, there 
exists a circuit $C_e$ of ${\bf H}$ such that 
$C_e - e \subseteq \sigma$ and 
$\eta \notin C_e$. 
If we can prove this, then Lemma~\ref{lemma:circuit_union}
implies that there exists a circuit $C^{\circ}$ of ${\bf H}$ 
such that
$C^{\circ} \subseteq \sigma$. 
This contradicts the fact that 
$\sigma \in \mathcal{F}$. 
This completes the proof. 

Let $e = (d,h)$ be an edge in $C \setminus \sigma$. 
Then $e \in B$ holds. 
Thus, since $B \subseteq K(Z \setminus L)$, 
$d \in Z \setminus L$. 
Lemma~\ref{lemma_1:case_2} implies that 
$e \succ_d \sigma(d)$.
Thus, since $\sigma$ is strongly stable, 
$e \in {\sf sp}_{\bf H}(\sigma)$ and 
$f \succ_H e$ for every edge $f \in {\sf D}_{\bf H}(e,\sigma)$. 
Since $\eta \sim_H e$ follows from the fact that 
$e \in C$, we have $\eta \notin {\sf D}_{\bf H}(e,\sigma)$. 
Thus, we can define $C_e := {\sf C}_{\bf H}(e,\sigma)$. 

Let $e = (d,h)$ be an edge in $C^{\prime} \setminus (C \cup \sigma)$.
Then since 
there does not exist a bad edge in ${\sf P}_{c,z-1}$, 
it follows from 
$e \in K$ that $e \succsim_d \sigma(d)$. 
Thus, since $\sigma$ is strongly stable, 
$e \in {\sf sp}_{\bf H}(\sigma)$ and 
$f \succsim_H e$ for every edge $f \in {\sf D}_{\bf H}(e,\sigma)$. 
Since $e \succ_d \eta$, we have $\eta \notin {\sf D}_{\bf H}(e,\sigma)$. 
Thus, we can define $C_e := {\sf C}_{\bf H}(e,\sigma)$. 
This completes the proof.

\subsubsection{Case~B} 

Next, we assume that 
${\sf P}_{c,i_c} \setminus {\sf R}_{c-1}$
does not contain a bad edge. 
Then ${\sf P}_{c,i_c}$ does not contain a bad edge. 
For notational simplicity, 
we define $b := b_c$,  
$\kappa := \kappa_{c,i_c}$, and 
$K := K_{c,i_c}$. 
In this case, ${\sf tail}_H({\sf C}_{\bf H}(b, \kappa)) - b$
contains a bad edge.
Let $\xi$ be a bad edge in ${\sf tail}_H({\sf C}_{\bf H}(b, \kappa)) - b$.
Let $\sigma$ be a strongly stable matching in $G$ 
such that $\xi \in \sigma$. 
Since $|\kappa| = {\sf r}_{{\bf H}\langle K \rangle}(K)$, 
$\kappa$ is a base of 
${\bf H}\langle K \rangle$.

\begin{lemma} \label{lemma_1:case_2}
$b \in {\sf sp}_{\bf H}(\sigma)$, and 
$e \succ_H \xi$ for 
every edge $e \in {\sf C}_{\bf H}(b, \sigma)$. 
\end{lemma}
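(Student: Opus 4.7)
The plan is to prove both statements by contradiction, leveraging the strong stability of $\sigma$ together with the Case~B hypothesis that ${\sf P}_{c,i_c}$ contains no bad edge. The key structural fact I would invoke is that $K_{c,i_c}(d) = {\sf head}_d(E(d) \setminus {\sf P}_{c,i_c-1})$ for every doctor $d \in D[K_{c,i_c}]$, and that $|\kappa| = |D[E \setminus {\sf P}_{c,i_c-1}]|$ (established in the text just before Lemma~\ref{lemma:iteration}), so $\kappa$ assigns each such doctor an edge in a $\succsim_d$-top equivalence class of $E(d) \setminus {\sf P}_{c,i_c-1}$.

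For the first assertion, suppose toward contradiction that $\sigma + b \in \mathcal{F}$. Then condition~(H1) holds, so $b$ weakly and strongly blocks $\sigma$ on ${\bf H}$; since $\sigma$ is strongly stable, $b$ cannot weakly block $\sigma$ on its doctor $d = d_b$, so $\sigma(d) \succ_d b$. Combining with $b \succsim_d \kappa(d)$ (which comes from $b \in {\sf block}_H(\kappa)$) gives $\sigma(d) \succ_d \kappa(d)$. A case split on whether $d \in D[E \setminus {\sf P}_{c,i_c-1}]$, using the head-description of $K_{c,i_c}(d)$, forces $\sigma(d) \in {\sf P}_{c,i_c-1} \subseteq {\sf P}_{c,i_c}$; but then $\sigma(d) \in \sigma$ is a bad edge in ${\sf P}_{c,i_c}$, contradicting Case~B.

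For the second assertion, fix $e \in {\sf C}_{\bf H}(b, \sigma)$ (which exists by part~(a)). If $e \in {\sf D}_{\bf H}(b, \sigma)$, assume toward contradiction $\xi \succsim_H e$; since $b \succsim_H \xi$ (because $\xi \in {\sf tail}_H({\sf C}_{\bf H}(b, \kappa))$ and $b \in {\sf C}_{\bf H}(b, \kappa)$), transitivity gives $b \succsim_H e$, so $b$ witnesses~(H2) with this $e$ and weakly blocks $\sigma$ on ${\bf H}$. Repeating the argument of part~(a) yields $\sigma(d) \succ_d b \succsim_d \kappa(d)$ and the same bad-edge contradiction. For $e = b$, I split according to the subcase of ${\sf block}_H(\kappa)$: in the $b \sim_d \kappa(d)$ subcase there exists $g \in {\sf D}_{\bf H}(b, \kappa)$ with $b \succ_H g$, and $g \succsim_H \xi$ by the tail condition, yielding $b \succ_H \xi$ directly; in the $b \succ_d \kappa(d)$ subcase I assume $b \sim_H \xi$ toward contradiction, observe that this forces every element of ${\sf C}_{\bf H}(b, \kappa)$ to be $\sim_H b \sim_H \xi$, and then use an edge of ${\sf D}_{\bf H}(b, \sigma)$ (nonempty by part~(a)) together with the same head-and-bad-edge argument to contradict strong stability of $\sigma$.

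The main obstacle is the tie analysis in the $b \succ_d \kappa(d)$ subcase of $e = b$: ruling out $b \sim_H \xi$ seems to require first locating $b$ (which must lie in ${\sf P}_{c,i_c-1}$ because $\kappa(d)$ is a $\succsim_d$-top element of $E(d) \setminus {\sf P}_{c,i_c-1}$ and $b \succ_d \kappa(d)$), and then using strong stability of $\sigma$ to produce an edge witnessing~(H2) against $\sigma$ whose presence again forces some $\sigma(d)$ into ${\sf P}_{c,i_c}$; I expect this to be the most delicate step and to mirror, in spirit, the circuit-chasing used in Case~A.
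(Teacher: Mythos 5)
Your part~(a) and your overall strategy (the absence of bad edges in ${\sf P}_{c,i_c}$ together with $K_{c,i_c}(d) = {\sf head}_d(E(d)\setminus{\sf P}_{c,i_c-1})$ gives $b \succsim_d \kappa(d) \succsim_d \sigma(d)$, after which strong stability of $\sigma$ does the work) are sound and match the paper's. But there are two concrete problems in part~(b). First, in the branch $e \in {\sf D}_{\bf H}(b,\sigma)$ you conclude $\sigma(d) \succ_d b$ by ``repeating the argument of part~(a)''; that argument needed $b$ to \emph{strongly} block $\sigma$ on ${\bf H}$, whereas $\xi \succsim_H e$ and $b \succsim_H \xi$ only give $b \succsim_H e$, i.e., a \emph{weak} block. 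When $b \sim_H e$, strong stability only forces $\sigma(d) \succsim_d b$, and $\sigma(d) \succsim_d \kappa(d)$ does not push $\sigma(d)$ into ${\sf P}_{c,i_c-1}$ --- it could lie in the same head class as $\kappa(d)$. The missing step is exactly the observation you deploy only in your $e=b$ branch: if $b \sim_H \xi$, then every edge of ${\sf C}_{\bf H}(b,\kappa)$ is $\succsim_H \xi \sim_H b$, so the second alternative in the definition of ${\sf block}_H(\kappa)$ is unavailable and the first must hold, giving $b \succ_d \kappa(d)$, hence $b \succ_d \sigma(d)$ and a genuine blocking edge. (This is how the paper argues: it records $b \succsim_H \xi$, splits on $b \succ_H \xi$ versus $b \sim_H \xi$, and in each case derives the appropriate strict or weak comparison between $b$ and $\sigma(d)$ before invoking stability once; both conclusions of the lemma then fall out together.)

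Second, the step you flag as delicate --- ruling out $b \sim_H \xi$ when $b \succ_d \kappa(d)$ --- cannot be carried out: nothing prevents $b \sim_H \xi$ in that situation, and then the asserted conclusion fails for the element $e = b$ of ${\sf C}_{\bf H}(b,\sigma)$. The paper's own proof in the $b \sim_H \xi$ case only establishes $e \succ_H b \sim_H \xi$ for $e \in {\sf D}_{\bf H}(b,\sigma)$, so the lemma as literally stated is an overstatement at $e = b$. This is harmless for the application --- the lemma is only invoked to show $\eta \notin {\sf C}_{\bf H}(b,\sigma)$ for an edge $\eta \in \sigma$ with $\xi \succsim_H \eta$, and since $b \notin \sigma$ the version restricted to ${\sf D}_{\bf H}(b,\sigma)$ suffices --- but you should prove the statement with ${\sf C}_{\bf H}(b,\sigma)$ replaced by ${\sf D}_{\bf H}(b,\sigma)$ rather than search for an argument that does not exist.
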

\begin{proof}
Assume that $b = (d, h)$.
Then since $\xi \in {\sf tail}_H({\sf C}_{\bf H}(b, \kappa))$, 
we have $b \succsim_H \xi$. 

Assume that $b \sim_H \xi$.
Then for every edge 
$e \in {\sf C}_{\bf H}(b, \kappa)$, 
$e \succsim_H b$. Thus, since 
$b \in {\sf block}_H(\kappa)$, 
$b \succ_{d} \kappa(d)$. 
Furthermore, since ${\sf P}_{c,i_c}$ does not contain a bad edge,
$b \notin \sigma$ and $\kappa(d) \succsim_d \sigma(d)$. 
Since 
$b \succ_{d} \kappa(d)$, 
$b \succ_{d} \sigma(d)$. 
Thus, since $\sigma$ is strongly stable, 
$b \in {\sf sp}_{\bf H}(\sigma)$ and 
$e \succ_H b$ for every edge $e \in {\sf D}_{\bf H}(b,\sigma)$. 
Since $b \sim_H \xi$, 
$e \succ_H \xi$ for 
every edge $e \in {\sf C}_{\bf H}(b, \sigma)$. 

Assume that $b \succ_H \xi$.
Since 
$b \in {\sf block}_H(\kappa)$, 
$b \succsim_{d} \kappa(d)$.
Since 
${\sf P}_{c,i_c}$ does not contain a bad edge, 
$b \notin \sigma$ and $b \succsim_d \kappa(d) \succsim_d \sigma(d)$. 
Since $\sigma$ is strongly stable, 
$b \in {\sf sp}_{\bf H}(\sigma)$ and 
$e \succsim_H b$ for every edge $e \in {\sf D}_{\bf H}(b,\sigma)$. 
Since $b \succ_H \xi$, 
$e \succ_H \xi$ for 
every edge $e \in {\sf C}_{\bf H}(b, \sigma)$.  
\end{proof}

In this proof, the following lemma 
plays an important role. 
Notice that, in the following lemma, 
since $|\mu| = |\kappa|$, 
$\mu$ is a base of ${\bf H}\langle K \rangle$. 

\begin{lemma} \label{lemma_2:case_2}
There exists a common independent set 
$\mu$ of ${\bf D}|K$ and 
${\bf H}\langle K \rangle$ such that 
$|\mu| = |\kappa|$, 
$\mu(d) \succsim_d \kappa(d)$ for every doctor $d \in D$, 
and 
$\mu$ satisfies the following conditions.
\begin{itemize}
\item[\bf (P1)]
$b \in {\sf sp}_{\bf H}(\mu)$. 
\item[\bf (P2)]
$\xi \succsim_H e$ for every edge $e \in {\sf tail}_H({\sf C}_{\bf H}(b,\mu))$. 
\item[\bf (P3)]
$\sigma \cap {\sf tail}_H({\sf C}_{\bf H}(b,\mu)) \neq \emptyset$. 
\item[\bf (P4)] 
For every edge $e = (d,h) \in \mu \setminus \sigma$, 
we have $e \succ_d \sigma(d)$. 
\end{itemize}
\end{lemma}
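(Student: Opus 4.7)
My plan is to start with $\mu = \kappa$, observe that $\kappa$ already satisfies (P1)--(P3), and if (P4) fails, modify $\mu$ by matroid-intersection cycle swaps in ${\sf G}_{{\bf D}|K,\,{\bf H}\langle K\rangle}$ to pull edges of $\sigma \cap K$ into $\mu$ without disturbing the other properties.

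First, $\kappa$ itself satisfies (P1)--(P3): (P1) holds because $b \in {\sf block}_H(\kappa) \subseteq {\sf sp}_{\bf H}(\kappa)$; (P2) and (P3) both follow from $\xi \in \sigma \cap {\sf tail}_H({\sf C}_{\bf H}(b,\kappa))$ combined with the fact that the tail consists of pairwise $\sim_H$-equivalent edges. Moreover, for any common independent set $\mu$ of ${\bf D}|K$ and ${\bf H}\langle K\rangle$ with $|\mu| = |\kappa|$, we have $\mu \subseteq K$ and $\mu(d) \in K(d)$ for every $d \in D[K]$; since $K(d) = {\sf head}_d(E(d) \setminus {\sf P}_{c, i_c-1})$ consists of $\sim_d$-equivalent edges, the condition $\mu(d) \succsim_d \kappa(d)$ is automatic. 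Combining these observations with the Case~B hypothesis that ${\sf P}_{c, i_c}$ contains no bad edges (so $\sigma(d) \notin {\sf P}_{c, i_c-1}$ for every $d \in D[K]$), a short case analysis shows that (P4) fails at $d$ exactly when $\sigma(d) \in K(d) \setminus \{\mu(d)\}$. Thus the task reduces to producing a common independent set $\mu$ of ${\bf D}|K$ and ${\bf H}\langle K\rangle$ of size $|\kappa|$, containing $\xi$, satisfying (P1)--(P3), with $\mu(d) = \sigma(d)$ for every $d$ in $S := \{d \in D : \sigma(d) \in K(d)\}$.

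For the construction, I would pick $\mu$ among all common independent sets of size $|\kappa|$ that contain $\xi$ and satisfy (P1)--(P3), maximizing $|\mu \cap \sigma|$, and show $\sigma \cap K \subseteq \mu$. If not, take $u = \sigma(d) \in (\sigma \cap K) \setminus \mu$; the arc $\mu(d)\,u$ lies in ${\sf G}_{{\bf D}|K,\,{\bf H}\langle K\rangle}(\mu)$ since $u$ and $\mu(d)$ share doctor $d$, and $u \in {\sf sp}_{{\bf H}\langle K\rangle}(\mu)$ because $\mu$ is a base. I would then exhibit a shortcut-free simple directed cycle through the arc $\mu(d)\,u$, and apply Lemma~\ref{lemma:IriT76_intersection} to its $\ominus$-operation to obtain a new common independent set $\mu'$ of the same size with $u \in \mu'$, contradicting maximality. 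The existence of such a cycle should come from alternating ${\bf H}\langle K\rangle$- and ${\bf D}|K$-arcs starting at $u$, threading through a chain of $\sigma$-edges in $K$ and returning to a base edge at a doctor whose $\sigma$-edge is not yet in $\mu$.

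The hard part will be to arrange that the cycle can be chosen so that (P1)--(P3) also survive on $\mu'$: that $b$ remains in ${\sf sp}_{\bf H}(\mu')$, that $\xi$ stays in $\mu'$, and that $\xi$ stays in ${\sf tail}_H({\sf C}_{\bf H}(b,\mu'))$. For this I would invoke Lemma~\ref{lemma:IriT76_auxiliary} to track how fundamental circuits of ${\bf H}$ transform under a single matroid exchange, combined with Lemma~\ref{lemma:new_matroid_circuit} and the layered $T_i$-structure of ${\bf H}\langle K\rangle$, in order to restrict attention to cycles whose exchanged edges all live in layers $\succ_H$-strictly above $\xi$ and do not include $\xi$ itself. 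With this restriction, the tail of ${\sf C}_{\bf H}(b,\mu)$ is unaffected by the swap, so $\xi$ stays in it and $b$ remains dependent on the new base.
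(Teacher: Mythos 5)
Your overall architecture is the same as the paper's: reduce (P4) to the statement that $\sigma \cap K \subseteq \mu$ (your reduction here is correct, and maximizing $|\mu \cap \sigma|$ over the sets satisfying (P1)--(P3) is equivalent to the paper's minimization of the number of ``deficient'' edges), then improve a non-optimal $\mu$ by a shortcut-free cycle exchange in ${\sf G}_{{\bf D}|K,{\bf H}\langle K \rangle}(\mu)$ via Lemma~\ref{lemma:IriT76_intersection}. But the first of the two steps you defer is not a routine verification, and it is missing entirely. Starting from $u = \sigma(d) \in (\sigma \cap K)\setminus\mu$, the chain continues from $u$ only into ${\sf D}_{{\bf H}\langle K \rangle}(u,\mu)$, and to close a cycle that strictly increases $|\mu\cap\sigma|$ you need this set to contain an edge $f = (s,p) \in \mu\setminus\sigma$ with $\sigma(s) \in K(s)$; otherwise the chain dead-ends at edges of $\mu\cap\sigma$ or at doctors whose $\sigma$-partner lies outside $K$, and no suitable cycle exists. (You cannot fall back on a generic exchange argument between $\mu$ and $\sigma\cap K$, since $\sigma\cap K$ need not be independent in the layered matroid ${\bf H}\langle K \rangle$.) This is precisely the paper's Lemma~\ref{lemma:dicycle}, whose proof is substantive: it uses the strong stability of $\sigma$ to show that every blocking candidate $f$ has a fundamental circuit ${\sf C}_{\bf H}(f,\sigma)$ avoiding $\sigma(d)$, and then applies Lemma~\ref{lemma:circuit_union} to produce a circuit inside $\sigma$, a contradiction.

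The second deferred step is sketched in a way that cannot be repaired as stated. You propose to choose the cycle so that all exchanged edges lie in layers strictly above $\xi$ in $\succsim_H$, so that ${\sf tail}_H({\sf C}_{\bf H}(b,\mu))$ is untouched. There is no freedom to arrange this: the deficient edges that Lemma~\ref{lemma:dicycle} forces into the cycle may lie in any layer $T_i$, including the layer of an edge $\eta \in \sigma \cap {\sf tail}_H({\sf C}_{\bf H}(b,\mu))$, and the removed edge in a single exchange may even be $\eta$ itself. The paper instead decomposes the cycle swap into single exchanges $\mu_{z+1} = \mu_z + e_{z+1} - f_{z+1}$, tracks ${\sf C}_{\bf H}(b,\mu_z)$ with Lemmas~\ref{lemma:IriT76_auxiliary} and~\ref{lemma:new_matroid_circuit}, and runs a three-way case analysis on the $\succsim_H$-relation between $f_{z+1}$ and $\eta$; in the case $f_{z+1} \sim_H \eta$ the tail genuinely changes, and one must argue that $\eta$ or $e_{z+1}$ (both in $\sigma$ and both $\precsim_H \xi$) survives into the new tail, which is what (P2) and (P3) require. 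So the proposal identifies the right skeleton but leaves both load-bearing arguments unsupplied, and the stated plan for the second one would fail.
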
 

Before proving Lemma~\ref{lemma_2:case_2}, we prove this case 
with Lemma~\ref{lemma_2:case_2}. 
Assume that there exists a common independent set 
$\mu$ of ${\bf D}|K$ and 
${\bf H}\langle K \rangle$ satisfying 
the conditions in Lemma~\ref{lemma_2:case_2}.
Define $C := {\sf C}_{\bf H}(b,\mu)$.
Let $\eta$ be an edge in 
$\sigma \cap {\sf tail}_H(C)$. 
For every edge $e \in C$, 
we have $e \succsim_H \eta$. 
If we can prove that, for every edge $e \in C \setminus \sigma$, 
there exists a circuit $C_e$ of ${\bf H}$ such that 
$\eta \notin C_e$ and $C_e - e \subseteq \sigma$, 
then Lemma~\ref{lemma:circuit_union} 
implies that there exists a circuit $C^{\circ}$
of ${\bf H}$ such that 
$C^{\circ} \subseteq \sigma$. 
This contradicts the fact that 
$\sigma \in \mathcal{F}$. 
This completes the proof. 

Define $C_b := {\sf C}_{\bf H}(b,\sigma)$. 
Since $\xi \succsim_H \eta$, 
Lemma~\ref{lemma_1:case_2} implies that 
$\eta \notin C_b$. 
Let $e = (d,h)$ be an edge in $C \setminus (\sigma + b)$. 
Then (P4) implies that 
$e \succ_H \sigma(d)$. 
Thus, since $\sigma$ is strongly stable, 
$e \in {\sf sp}_{\bf H}(\sigma)$ and 
$f \succ_H e$ for every edge $f \in {\sf D}_{\bf H}(e, \sigma)$. 
Since $e \succsim_H \eta$, 
$\eta \notin {\sf D}_{\bf H}(e, \sigma)$. 
Thus, we can define $C_e := {\sf C}_{\bf H}(e,\sigma)$. 
This completes the proof. 

\begin{proof}[Proof of Lemma~\ref{lemma_2:case_2}]
Define $\Phi$ as the set of 
common independent sets 
$\mu$ of ${\bf D}|K$ and 
${\bf H}\langle K \rangle$
such that $|\mu| = |\kappa|$,
$\mu(d) \succsim_d \kappa(d)$ for every doctor $d \in D$, and 
$\mu$ satisfies (P1), (P2), and (P3).
Since $\kappa \in \Phi$, 
$\Phi \neq \emptyset$. 
For each element $\mu \in \Phi$, 
we define ${\sf def}(\mu)$ as the 
number of edges $e = (d,h) \in \mu \setminus \sigma$
such that 
$e \not \succ_d \sigma(d)$. 
In what follows, 
let $\mu$ be an element in $\Phi$
minimizing ${\sf def}(\mu)$ 
among all the elements in $\Phi$. 
If ${\sf def}(\mu) = 0$, then 
the proof is done.
Thus, we assume that 
${\sf def}(\mu) > 0$. 
Then we prove that 
there exists an element $\varphi \in \Phi$ such that 
${\sf def}(\mu) > {\sf def}(\varphi)$. 
This contradicts the definition of $\mu$.

For every edge $e = (d,h) \in \mu \setminus \sigma$,
since 
${\sf P}_{c,i_c}$ does not contain a bad edge,
$e \succsim_d \kappa(d) \succsim_d \sigma(d)$.
Furthermore, if 
$e \not \succ_d \sigma(d)$
(that is, $e \sim_d \sigma(d)$), 
then $\sigma(d) \in K$.

\begin{lemma} \label{lemma:dicycle} 
Let $e = (d,h)$ be an edge 
in $\mu \setminus \sigma$ such that 
$e \sim_d \sigma(d)$. 
Then $\sigma(d) \in {\sf sp}_{{\bf H}\langle K \rangle}(\mu)$, 
and there exists an edge $f = (s,p) \in \mu \setminus \sigma$
such that 
$f \in {\sf D}_{{\bf H}\langle K \rangle}(\sigma(d), \mu)$
and $f \sim_s \sigma(s)$.  
\end{lemma}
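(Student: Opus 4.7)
The plan splits into the two conclusions of the lemma. The first (spanning) is almost immediate from the remarks preceding the statement; the second (existence of $f$) will be handled by contradiction against the minimality of ${\sf def}(\mu)$ in $\Phi$.

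For the spanning claim, I would invoke the observation immediately preceding the lemma: whenever $e=\mu(d)\in\mu\setminus\sigma$ satisfies $e\sim_d\sigma(d)$, one has $\sigma(d)\in K$. Since $\sigma(d)\in\sigma$ while $e\notin\sigma$, we have $\sigma(d)\neq e=\mu(d)$, so $\sigma(d)\in K\setminus\mu$. Because $\mu$ is a base of ${\bf H}\langle K\rangle$, this forces $\sigma(d)\in{\sf sp}_{{\bf H}\langle K\rangle}(\mu)$.

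For the existence of $f$, set $C:={\sf C}_{{\bf H}\langle K\rangle}(\sigma(d),\mu)$ and suppose for contradiction that no $f$ as stated exists. First I would dispose of the easy case $e\in C$: then $f:=e$ itself witnesses the conclusion, since $e\in\mu\setminus\sigma$, $e\in C-\sigma(d)={\sf D}_{{\bf H}\langle K\rangle}(\sigma(d),\mu)$, and $e\sim_d\sigma(d)$ reads as $f\sim_s\sigma(s)$ with $s:=d$. So I may assume $e\notin C$; then, since $\mu$ has at most one edge at doctor $d$, every element of $C-\sigma(d)$ lies at a doctor $\neq d$, and under the contradictory hypothesis every $f=(s,p)\in(C-\sigma(d))\cap(\mu\setminus\sigma)$ must satisfy $f\succ_s\sigma(s)$ (using that every edge in $\mu\setminus\sigma$ already has $f\succsim_s\sigma(s)$).

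To derive the contradiction I would construct $\varphi\in\Phi$ with ${\sf def}(\varphi)<{\sf def}(\mu)$, violating minimality of ${\sf def}(\mu)$. The candidate is the matroid-intersection swap $\varphi:=\mu\ominus P$ along a shortcut-free directed cycle $P$ in the exchange graph ${\sf G}_{{\bf D}|K,{\bf H}\langle K\rangle}(\mu)$: this graph already contains the arc $e\to\sigma(d)$ (via ${\bf D}|K$, since $\sigma(d)$ and $e=\mu(d)$ share doctor $d$) and the arcs $\sigma(d)\to v$ for all $v\in C-\sigma(d)$ (via ${\bf H}\langle K\rangle$). Lemma~\ref{lemma:IriT76_intersection} then gives that $\varphi$ is a common independent set of ${\bf D}|K$ and ${\bf H}\langle K\rangle$ of the same size as $\mu$, so once $P$ is chosen so that $e$ is removed and all added edges lie in $\sigma$, no new defect edges appear and ${\sf def}(\varphi)<{\sf def}(\mu)$.

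The hard part will be exhibiting the cycle $P$ and verifying $\varphi\in\Phi$. For the cycle, whenever the in-construction path encounters a new fundamental circuit $C'$ in ${\bf H}\langle K\rangle$, the contradictory hypothesis together with strong stability of $\sigma$ should force the continuation to route through an edge of $\sigma$: any candidate not in $\sigma$ that is $\sim_{s'}\sigma(s')$ would itself have produced a valid $f$, while candidates with $\succ_{s'}\sigma(s')$ are constrained by strong stability of $\sigma$ via Lemma~\ref{lemma:elimination}. For $\varphi\in\Phi$, the preference condition $\varphi(d')\succsim_{d'}\kappa(d')$ is preserved because every swapped-in edge lies in $K$, and conditions (P1)--(P3) are preserved using Lemma~\ref{lemma:new_matroid_circuit}: $C$ lies inside a single head $T_z$, so the $\succsim_H$-structure of ${\sf C}_{\bf H}(b,\varphi)$ relative to $\xi$ changes only within $T_z$ and its ${\sf tail}_H$ is unaffected below $\xi$.
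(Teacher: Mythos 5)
Your argument for the first conclusion is fine and matches the paper: $\sigma(d)\in K\setminus\mu$ and $\mu$ is a base of ${\bf H}\langle K\rangle$, hence $\sigma(d)\in{\sf sp}_{{\bf H}\langle K\rangle}(\mu)$. The route you propose for the second conclusion, however, has a genuine gap: it is circular. The shortcut-free directed cycle $P$ in ${\sf G}_{{\bf D}|K,{\bf H}\langle K\rangle}(\mu)$ whose alternate vertices are defect edges and their $\sigma$-partners exists in the paper precisely \emph{because} Lemma~\ref{lemma:dicycle} guarantees that from each defect edge $e$ one can step to $\sigma(d)$ and then, via ${\sf D}_{{\bf H}\langle K\rangle}(\sigma(d),\mu)$, to another defect edge; using such a cycle to prove the lemma inverts the dependency. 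Your plan for closing the cycle does not repair this: the contradictory hypothesis only constrains the single circuit $C={\sf C}_{{\bf H}\langle K\rangle}(\sigma(d),\mu)$, so when the path reaches a later fundamental circuit ${\sf C}_{{\bf H}\langle K\rangle}(\sigma(s'),\mu)$, a candidate $f'$ there with $f'\sim\sigma(\cdot)$ does \emph{not} ``produce a valid $f$'' for the lemma, since the lemma requires $f\in{\sf D}_{{\bf H}\langle K\rangle}(\sigma(d),\mu)$ for the original $d$. Moreover, verifying $\varphi=\mu\ominus P\in\Phi$ is not the one-liner you suggest: ${\sf tail}_H({\sf C}_{\bf H}(b,\cdot))$ can change under a single swap (this is exactly Case~(3) in the paper's proof of Lemma~\ref{lemma_2:case_2}, handled with Lemma~\ref{lemma:IriT76_auxiliary}), and in any event that whole machinery belongs to the step \emph{after} the present lemma.

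What the paper actually does is a local circuit-elimination argument needing only that $\sigma$ is strongly stable and that every $f=(s,p)\in\mu\setminus\sigma$ satisfies $f\succsim_s\sigma(s)$; minimality of ${\sf def}(\mu)$ plays no role. Suppose no $f$ as in the statement exists, and set $C:={\sf C}_{{\bf H}\langle K\rangle}(\sigma(d),\mu)\subseteq C':={\sf C}_{\bf H}(\sigma(d),\mu)$. For $f=(s,p)\in C\setminus\sigma$ the negation forces $f\succ_s\sigma(s)$, so strong stability gives $g\succ_H f$ for all $g\in{\sf D}_{\bf H}(f,\sigma)$; since $f\sim_H\sigma(d)$ (Lemma~\ref{lemma:new_matroid_circuit}), $\sigma(d)\notin{\sf C}_{\bf H}(f,\sigma)$. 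For $f\in C'\setminus(C\cup\sigma)$ one has $f\succ_H\sigma(d)$ and $f\succsim_s\sigma(s)$, and strong stability again yields $\sigma(d)\notin{\sf C}_{\bf H}(f,\sigma)$. Each such ${\sf C}_{\bf H}(f,\sigma)$ satisfies ${\sf C}_{\bf H}(f,\sigma)-f\subseteq\sigma$, so Lemma~\ref{lemma:circuit_union} (with $v=\sigma(d)$) produces a circuit of ${\bf H}$ contained in $\sigma$, contradicting $\sigma\in\mathcal{F}$. You should replace your cycle-based argument with this one.
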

\begin{proof}
Recall that 
$\mu$ is a base of ${\bf H}\langle K \rangle$. 
Thus, $\sigma(d) \in {\sf sp}_{{\bf H}\langle K \rangle}(\mu)$.  
Define $C := {\sf C}_{{\bf H}\langle K \rangle}(\sigma(d),\mu)$.
For every edge $f \in C$, 
$f \sim_H \sigma(d)$. 

Define $C^{\prime} := {\sf C}_{\bf H}(\sigma(d),\mu)$. 
Lemma~\ref{lemma:new_matroid_circuit} implies 
that $C \subseteq C^{\prime}$
and
$f \succ_H \sigma(d)$
for every edge $f \in C^{\prime} \setminus C$. 
For every edge $f = (s,p) \in C^{\prime} \setminus (C \cup \sigma)$, 
since 
$f \succsim_s \sigma(s)$ and 
$\sigma$ is strongly stable,  
$f \in {\sf sp}_{\bf H}(\sigma)$ and 
$g \succsim_H f$
for every edge $g \in {\sf D}_{\bf H}(f,\sigma)$. 
Thus, 
for every edge $f \in C^{\prime} \setminus (C \cup \sigma)$, 
since $f \succ_H \sigma(d)$, 
$\sigma(d) \notin {\sf C}_{\bf H}(f,\sigma)$. 

Assume that, 
for every edge $f = (s,p) \in C \setminus \sigma$, 
$f \succ_{s} \sigma(s)$. 
Then for every edge $f \in C \setminus \sigma$, 
since $\sigma$ is strongly stable, 
$f \in {\sf sp}_{\bf H}(\sigma)$ and 
$g \succ_H f$
for every edge $g \in {\sf D}_{\bf H}(f,\sigma)$.
Thus, 
for every edge $f \in C \setminus \sigma$, 
since $f \sim_H \sigma(d)$, 
$\sigma(d) \notin {\sf C}_{\bf H}(f,\sigma)$. 

These imply that 
there exists a circuit $C^{\circ}$ of ${\bf H}$ 
such that 
$C^{\circ} \subseteq \sigma$.
This contradicts the fact that $\sigma \in \mathcal{F}$. 
\end{proof} 

Lemma~\ref{lemma:dicycle} implies that there exists a simple directed 
cycle $P$ in 
${\sf G}_{{\bf D}|K,{\bf H}\langle K \rangle}(\mu)$
satisfying the following conditions. 
Let $X$ be the set of edges in $K$ (that is, vertices 
of ${\sf G}_{{\bf D}|K,{\bf H}\langle K \rangle}(\mu)$)
that 
$P$ passes through.  
\begin{itemize}
\item 
$X \subseteq (\sigma \setminus \mu) \cup (\mu \setminus \sigma)$.
\item
For every edge $e = (d,h) \in X \cap \mu$,
$\sigma(d) \in X$ and 
$e \sim_d \sigma(d)$. 
\item
There does not exist a shortcut arc in 
${\sf A}_{{\bf D}|K,{\bf H}\langle K \rangle}(\mu)$ 
for $P$. 
\end{itemize} 
Define $\varphi := \mu \ominus P$. 
Then Lemma~\ref{lemma:IriT76_intersection}
implies that 
$\varphi$ is a common independent set 
of ${\bf D}|K$ and 
${\bf H}\langle K \rangle$
such that $|\varphi| = |\mu| = |\kappa|$,  
$\varphi(d) \sim_d \mu(d) \succsim_d \kappa(d)$ for every doctor $d \in D$, 
and ${\sf def}(\mu) > {\sf def}(\varphi)$.  
What remains is to prove that 
$\varphi$ satisfies (P1), (P2), and (P3). 

We assume that 
$X \cap \sigma = \{e_1,e_2,\dots,e_{\ell}\}$, 
$X \cap \mu = \{f_1,f_2,\dots,f_{\ell}\}$, 
$e_if_i \in {\sf A}_{{\bf D}|K,{\bf H}\langle K \rangle}(\mu)$
for every integer $i \in [\ell]$, and 
$P$ passes through $e_1,e_2,\dots,e_{\ell}$ in this order. 
For every integer $i \in [\ell]$, 
since we have $f_i \in {\sf D}_{{\bf H}\langle K \rangle}(e_i,\mu)$, 
Lemma~\ref{lemma:new_matroid_circuit} implies that 
$e_i \sim_H f_i$. 

For each integer $i \in [\ell]$, 
we define 
\begin{equation*}
\mu_i := (\mu \cup \{e_1,e_2,\dots,e_i\}) \setminus \{f_1,f_2,\dots,f_i\}.
\end{equation*}
Define $\mu_0 := \mu$. 
Since $\mu$ is a base of ${\bf H}\langle K \rangle$, 
Lemma~\ref{lemma:IriT76_sequence} implies that, 
for every integer $i \in [\ell]$, 
$\mu_i$ is a base of 
${\bf H}\langle K \rangle$. 
Thus, 
for every integer $i \in \{0\} \cup [\ell-1]$, 
$e_{i+1} \in {\sf sp}_{{\bf H}\langle K \rangle}(\mu_i)$ and 
$f_{i+1} \in {\sf C}_{{\bf H}\langle K \rangle}(e_{i+1},\mu_i)$. 

We prove that, for every integer $i \in [\ell]$, 
$\mu_i$ satisfies (P1), (P2), and (P3).  
Notice that $\mu_{\ell} = \varphi$. 
Then $\mu_0$ satisfies (P1), (P2), and (P3). 
Assume that, for some integer $z \in \{0\} \cup [\ell-1]$, 
$\mu_{z}$ satisfies (P1), (P2), and (P3).
We prove that $\mu_{z+1} = \mu_{z} + e_{z+1} - f_{z+1}$
satisfies (P1), (P2), and (P3). 
Let $\eta$ be an edge in 
$\sigma \cap {\sf tail}_H({\sf C}_{\bf H}(b,\mu_{z}))$. 
Then (P2) implies that 
$\xi \succsim_H \eta$. 

We prove that $\mu_{z+1}$ satisfies (P1). 
Since $f_{z+1} \in {\sf C}_{{\bf H}\langle K \rangle}(e_{z+1},\mu_z)$, 
Lemma~\ref{lemma:new_matroid_circuit} implies that 
$f_{z+1} \in {\sf C}_{{\bf H}}(e_{z+1},\mu_z)$.
Thus, 
Lemma~\ref{lemma:IriT76_sequence} implies that 
since $b \in {\sf sp}_{\bf H}(\mu_z)$, 
it is sufficient to prove that 
$b \neq e_{z+1}$. 
If $b = e_{z+1}$, then $b \in \sigma$.
However, Lemma~\ref{lemma_1:case_2}
implies that $b \notin \sigma$. 

We divide the remaining proof into the following 
three cases. 
\begin{equation*}
\mbox{(1)} \ \ f_{z+1} \succ_H \eta, \ \ \ \ 
\mbox{(2)} \ \ \eta \succ_H f_{z+1}, \ \ \ \ 
\mbox{(3)} \ \ f_{z+1} \sim_H \eta. 
\end{equation*}
Recall that $e_{z+1} \sim_H f_{z+1}$. 
In what follows, we define $C_z := {\sf C}_{\bf H}(b,\mu_z)$ 
and $C_{z+1} := {\sf C}_{\bf H}(b,\mu_{z+1})$. 

{\bf Case (1).}
We prove that
$\{g \in C_z \mid \eta \succsim_H g\}
= 
\{g \in C_{z+1} \mid \eta \succsim_H g\}$. 
If we can prove this, then 
${\sf tail}_H(C_z) = {\sf tail}_H(C_{z+1})$.
Thus, $\mu_{z+1}$ satisfies (P2) and (P3). 
To this end, Lemma~\ref{lemma:IriT76_auxiliary}(2)
implies that it is sufficient to prove that
$g \notin {\sf C}_{\bf H}(e_{z+1}, \mu_z)$ 
for every edge 
$g \in \mu_z$ 
such that $\eta \succsim_H g$. 
Let $g$ be an edge in $\mu_z$ such that $\eta \succsim_H g$.
Then since $e_{z+1} \in K \setminus \mu_{z}$, 
Lemma~\ref{lemma:new_matroid_circuit} implies that
$g^{\prime} \succsim_H e_{z+1}$
for every edge $g^{\prime} \in {\sf C}_{\bf H}(e_{z+1},\mu_z)$. 
Thus, since $e_{z+1} \succ_H \eta \succsim_H g$, 
$g \notin {\sf C}_{\bf H}(e_{z+1},\mu_z)$. 

{\bf Case (2).} 
We prove that 
$C_z = C_{z+1}$. 
Clearly, this implies that 
$\mu_{z+1}$ satisfies (P2) and (P3). 
To this end, 
Lemma~\ref{lemma:IriT76_auxiliary}
implies that it is sufficient to prove that
$f_{z+1} \notin C_z$. 
This is because 
Lemma~\ref{lemma:IriT76_auxiliary}(1) 
implies that 
$e_{z+1} \notin C_{z+1}$, and 
Lemma~\ref{lemma:IriT76_auxiliary}(2)
implies that, for every 
edge $g \in \mu_z - f_{z+1}$, 
$g \in C_z$ if and only if 
$g \in C_{z+1}$. 
If $f_{z+1} \in C_z$, then 
since $\eta \succ_H f_{z+1}$, 
$\eta \notin {\sf tail}_H(C_z)$.
However, this contradicts the fact that 
$\eta \in {\sf tail}_H(C_z)$.

{\bf Case (3).} 
If $f_{z+1} \notin C_z$, then 
$f_{z+1} \neq \eta$ and 
Lemma~\ref{lemma:IriT76_auxiliary}(2) implies that 
$\eta \in C_{z+1}$. 
On the other hand, 
if $f_{z+1} \in C_z$, then 
Lemma~\ref{lemma:IriT76_auxiliary}(1)
implies that 
$e_{z+1} \in C_{z+1}$. 
Notice that $\xi \succsim_H \eta \sim_H e_{z+1}$. 
Furthermore, for every edge $g \in \mu_z$ such that 
$\eta \succ_H g$, 
Lemma~\ref{lemma:new_matroid_circuit} implies that
$g \notin {\sf C}_{\bf H}(e_{z+1},\mu_z)$, and 
Lemma~\ref{lemma:IriT76_auxiliary}(2)
implies that $g \notin C_{z+1}$. 
Thus, 
in both cases, 
at least one of 
$\eta \in {\sf tail}_H(C_{z+1})$ and  
$e_{z+1} \in {\sf tail}_H(C_{z+1})$ holds.  
Since $\eta, e_{z+1} \in \sigma$, 
$\mu_{z+1}$ satisfies (P2) and (P3).
\end{proof} 

\section{Conclusion} 

In this paper, we propose a polynomial-time algorithm for
finding a strongly stable matching under matroid constraints. 
It would be interesting to consider an extension of the result in this paper 
to the many-to-many setting. 

\appendix

\section{The Student-Project Allocation Problem with Ties} 
\label{appendix:spa}

Here we define 
the {\em student-project allocation problem with ties}~\cite{OM20}
with our terminology. 
More concretely, we explain how to define 
$\succsim_H$ and ${\bf H}$ in this problem. 
In this problem, we are given 
a finite set $R = [m]$ of {\em regions}, and 
$H$ is  partitioned into  
$H_1, H_2, \ldots, H_m$. 
For each subset $F \subseteq E$ and 
each region $r \in R$, 
we define $F(r) := \bigcup_{h \in H_r}F(h)$. 
Then for each region $r \in R$, 
we define $D_r$ as the set of doctors $d \in D$ such that 
$E(d) \cap E(r) \neq \emptyset$. 
For each region $r \in R$, we are given a 
transitive binary relation $\succsim_{r}$ on $D_r$
such that, 
for every pair of doctors $d,s \in D_r$, 
at least one of $d \succsim_{r} s$ and 
$s \succsim_{r} d$ holds. 
We are given capacity functions 
$c_H \colon H \to \mathbb{Z}_+$ and 
$c_R \colon R \to \mathbb{Z}_+$.
Then $\mathcal{F}$ is defined as the set of 
subsets $F \subseteq E$ 
such that 
$|F(h)| \le c_H(h)$ for every hospital $h \in H$, and 
$|F(r)| \le c_R(r)$ for every region $r \in R$.
It is not difficult to see that 
${\bf H} = (E, \mathcal{F})$ is a matroid. 
Furthermore, $\succsim_H$ is defined as follows. 
For each region $r \in R$ and each pair of 
edges $(d,h), (s,p) \in E(r)$,
$(d,h) \succsim_H (s,p)$ if and only if 
$d \succsim_r s$. 
For each pair of distinct regions $r,r^{\prime} \in R$ such that 
$r < r^{\prime}$ and each pair of edges $e \in E(r)$
and $f \in E(r^{\prime})$,  
$e \succ_H f$. 

Let $\mu$ be a matching in $G$.
Then in this setting, an edge $(d,h) \in E \setminus \mu$
weakly (resp.\ strongly) blocks $\mu$ on ${\bf H}$ if
one of the following conditions is satisfied. 
Let $r$ be the region in $R$ such that 
$h \in H_r$. 
\begin{itemize}
\item[\bf (S1)]
$|\mu(h)| < c_H(h)$
and $|\mu(r)| < c_R(r)$. 
\item[\bf (S2)]
$|\mu(h)| < c_H(h)$, 
$|\mu(r)| = c_R(r)$, and 
there is an edge  
$(s,p) \in \mu(r)$ such that 
$d \succsim_r s$ 
(resp.\ $d \succ_r s$).
\item[\bf (S3)] 
$|\mu(h)| = c_H(h)$, and 
there is an edge $(s,h) \in \mu(h)$
such that 
$d \succsim_r s$ (resp.\ $d \succ_r s$).
\end{itemize}

There exists a difference between 
our definition of a blocking edge 
and the definition in \cite{OM20}.
In (S1) and (S2) of 
the definition in \cite{OM20}, 
when $(d,h)$ strongly blocks $\mu$ on ${\bf H}$, 
we are given the additional condition that 
$\mu(d) \notin E(r)$. Thus, 
our definition of 
a blocking edge and the definition in \cite{OM20} 
are slightly different. 
(For example, the strongly stable matching in 
\cite[Section~3.3]{OM20} is not strongly stable 
in our definition because 
$(s_4,p_6)$ is a blocking pair.)

\section{Example} 
\label{section:example} 

Here we give an example of execution of 
Algorithm~\ref{alg:main}.
We consider the following instance, which is obtained 
by slightly modifying the instance in  \cite[Section~3.3]{OM20}. 
Basically, the following instance is obtained from 
the instance in  \cite[Section~3.3]{OM20} by changing 
$c_H(h_6)$ from $2$ to $1$. 
\begin{equation*}
\begin{split}
& D = \{d_1,d_2,\dots,d_8\}, \  
H = \{h_1,h_2,\dots,h_6\}, \ 
R = \{1,2,3\}, \\
& H_1 = \{h_1,h_2\}, \ 
H_2 = \{h_3,h_4\}, \ 
H_3 = \{h_5,h_6\}. 
\end{split}
\end{equation*}
The capacities are defined as follows. 
\begin{equation*}
\begin{split}
& c_H(h_1) = c_H(h_2) = 2, \
c_H(h_3) = c_H(h_4) = c_H(h_5) = c_H(h_6) = 1, \\
& c_R(1) = 3, \
c_R(2) = c_R(3) = 2. 
\end{split}
\end{equation*}
The preferences of doctors are defined as follows.
(The subscripts and the first elements of 
edges are omitted in the definition.)
\begin{equation*}
\begin{split}
& \succsim_{d_1} \colon h_1 \succ h_6, \ \
\succsim_{d_2} \colon h_1 \succ h_2, \ \
\succsim_{d_3} \colon h_1 \sim h_4, \ \ 
\succsim_{d_4} \colon h_2 \succ h_5 \sim h_6, \\ 
& \succsim_{d_5} \colon h_2 \sim h_3, \ \
\succsim_{d_6} \colon h_2 \sim h_4, \ \
\succsim_{d_7} \colon h_3 \succ h_1, \ \ 
\succsim_{d_8} \colon h_5 \succ h_1. 
\end{split}
\end{equation*}
The preferences of regions 
are defined as follows.
\begin{equation*}
\begin{split}
& \succsim_{1} \colon 
d_8 \succ_{1} d_7 \succ_{1} d_1 \sim_{1} d_2 \sim_{1} d_3 
\succ_{1} d_4 \sim_{1} d_5 \succ_{1} d_6,\\
& \succsim_{2} \colon 
d_6 \succ_{2} d_5 \succ_{2} d_7 \sim_{2} d_3,\\
& \succsim_{3} \colon 
d_1 \sim_{3} d_4 \succ_{3} d_8.
\end{split}
\end{equation*}

When $t = 1$ and $i = 1$, we have 
\begin{equation*}
\begin{split}
K_{1,1} & = \{(d_1,h_1), (d_2,h_1), (d_3,h_1), (d_3,h_4), (d_4,h_2),(d_5,h_2), \\
& \ \ \ \ \ \ \ 
(d_5,h_3), (d_6,h_2), (d_6,h_4), (d_7,h_3), (d_8,h_5)\},\\
\kappa_{1,1} & = \{(d_1,h_1), (d_2,h_1), (d_4,h_2), (d_5,h_3), (d_6,h_4), (d_8,h_5)\},\\
Z_{1,1} & = \{d_1,d_2,d_3,d_7\}, \ \ 
\rho_{K_{1,1}}(Z_{1,1}) = 2 - 4 = -2,\\
{\sf P}_{1,1} & = \{(d_1,h_1), (d_2,h_1), (d_3,h_1), (d_3,h_4), (d_7,h_3)\}.  
\end{split}
\end{equation*}

When $t = 1$ and $i = 2$, we have 
\begin{equation*}
\begin{split}
K_{1,2} & = \{(d_1,h_6), (d_2,h_2), (d_4,h_2), (d_5,h_2), (d_5,h_3), (d_6,h_2),\\
& \ \ \ \ \ \ \ 
(d_6,h_4), (d_7,h_1), (d_8,h_5)\},\\
\kappa_{1,2} & = \{(d_1,h_6), (d_2,h_2), (d_4,h_2), (d_5,h_3), (d_6,h_4), (d_7,h_1),\\
& \ \ \ \ \ \ \ 
(d_8,h_5)\}.
\end{split}
\end{equation*}
Then we have 
$b_1 = (d_1,h_1)$ and ${\sf R}_1 \setminus {\sf P}_{1,2} = \{(d_4,h_2)\}$. 

When $t = 2$ and $i = 1$, we have 
\begin{equation*}
\begin{split}
& K_{2,1} = \{(d_1,h_6), (d_2,h_2), (d_4,h_5), (d_4,h_6), (d_5,h_2), (d_5,h_3),\\
& \ \ \ \ \ \ \ \ \ \ \ \ \ 
(d_6,h_2), (d_6,h_4), (d_7,h_1), (d_8,h_5)\},\\
& \kappa_{2,1} = \{(d_1,h_6), (d_2,h_2), (d_4,h_5), (d_5,h_2), (d_6,h_4), (d_7,h_1)\},\\
& Z_{2,1} = \{d_8\}, \ \ 
\rho_{K_{2,1}}(Z_{2,1}) = 0 - 1 = -1,\\
& {\sf P}_{2,1} \setminus {\sf R}_1 = \{(d_8,h_5)\}.  
\end{split}
\end{equation*}

When $t = 2$ and $i = 2$, we have 
\begin{equation*}
\begin{split}
K_{2,2} & = \{(d_1,h_6), (d_2,h_2), (d_4,h_5), (d_4,h_6), (d_5,h_2), (d_5,h_3),\\
& \ \ \ \ \ \ \ 
(d_6,h_2), (d_6,h_4), (d_7,h_1), (d_8,h_1)\},\\
\kappa_{2,2} & = \{(d_1,h_6), (d_2,h_2), (d_4,h_5), (d_5, h_3), (d_6, h_4), (d_7,h_1),\\
& \ \ \ \ \ \ \ 
(d_8,h_1)\}.
\end{split}
\end{equation*}
Then $\kappa_{2,2}$ is a strongly stable matching in $G$. 

On the other hand, if $c_H(h_6) = 2$ and 
$c_R(3) = 3$, then 
${\sf r}_{\bf H}(K_{2,1}) > |D[E \setminus {\sf P}_{2,0}]|$, and 
Algorithm~\ref{alg:main} outputs {\bf null}. 
For example, $(d_4,h_6)$ is a blocking edge for 
$\kappa_{2,2}$.


\begin{thebibliography}{10}

\bibitem{Bli21}
Joakim Blikstad.
\newblock {Breaking {$O(nr)$} for Matroid Intersection}.
\newblock In Nikhil Bansal, Emanuela Merelli, and James Worrell, editors, {\em
  Proceedings of the 48th International Colloquium on Automata, Languages, and
  Programming}, volume 198 of {\em Leibniz International Proceedings in
  Informatics}, pages 31:1--31:17, Wadern, Germany, 2021. Schloss Dagstuhl --
  Leibniz-Zentrum f{\"u}r Informatik.

\bibitem{ChenG10}
Ning Chen and Arpita Ghosh.
\newblock Strongly stable assignment.
\newblock In Mark de~Berg and Ulrich Meyer, editors, {\em Proceedings of the
  18th Annual European Symposium on Algorithms, Part {II}}, volume 6347 of {\em
  Lecture Notes in Computer Science}, pages 147--158, Berlin, Heidelberg,
  Germany, 2010. Springer.

\bibitem{C86}
William~H. Cunningham.
\newblock Improved bounds for matroid partition and intersection algorithms.
\newblock {\em SIAM Journal on Computing}, 15(4):948--957, 1986.

\bibitem{F03}
Tam{\'{a}}s Fleiner.
\newblock A fixed-point approach to stable matchings and some applications.
\newblock {\em Mathematics of Operations Research}, 28(1):103--126, 2003.

\bibitem{FK16}
Tam{\'{a}}s Fleiner and Naoyuki Kamiyama.
\newblock A matroid approach to stable matchings with lower quotas.
\newblock {\em Mathematics of Operations Research}, 41(2):734--744, 2016.

\bibitem{FT07}
Satoru Fujishige and Akihisa Tamura.
\newblock A two-sided discrete-concave market with possibly bounded side
  payments: An approach by discrete convex analysis.
\newblock {\em Mathematics of Operations Research}, 32(1):136--155, 2007.

\bibitem{GS62}
David Gale and Lloyd~S. Shapley.
\newblock College admissions and the stability of marriage.
\newblock {\em The American Mathematical Monthly}, 69(1):9--15, 1962.

\bibitem{H10}
Chien{-}Chung Huang.
\newblock Classified stable matching.
\newblock In Moses Charikar, editor, {\em Proceedings of the 21st Annual
  {ACM-SIAM} Symposium on Discrete Algorithms}, pages 1235--1253, Philadelphia,
  PA, 2010. Society for Industrial and Applied Mathematics.

\bibitem{IriT76}
Masao Iri and Nobuaki Tomizawa.
\newblock An algorithm for finding an optimal {``Independent Assignment''}.
\newblock {\em Journal of The Operations Research Society of Japan}, 19:32--57,
  1976.

\bibitem{I94}
Robert~W. Irving.
\newblock Stable marriage and indifference.
\newblock {\em Discrete Applied Mathematics}, 48(3):261--272, 1994.

\bibitem{IMS00}
Robert~W. Irving, David~F. Manlove, and Sandy Scott.
\newblock The hospitals/residents problem with ties.
\newblock In Magn{\'{u}}s~M. Halld{\'{o}}rsson, editor, {\em Proceedings of the
  7th Scandinavian Workshop on Algorithm Theory}, volume 1851 of {\em Lecture
  Notes in Computer Science}, pages 259--271, Berlin, Heidelberg, Germany,
  2000. Springer.

\bibitem{IMS03}
Robert~W. Irving, David~F. Manlove, and Sandy Scott.
\newblock Strong stability in the hospitals/residents problem.
\newblock In Helmut Alt and Michel Habib, editors, {\em Proceedings of the 20th
  Annual Symposium on Theoretical Aspects of Computer Science}, volume 2607 of
  {\em Lecture Notes in Computer Science}, pages 439--450, Berlin, Heidelberg,
  Germany, 2003. Springer.

\bibitem{IMS08}
Robert~W. Irving, David~F. Manlove, and Sandy Scott.
\newblock The stable marriage problem with master preference lists.
\newblock {\em Discrete Applied Mathematics}, 156(15):2959--2977, 2008.

\bibitem{IM08}
Kazuo Iwama and Shuichi Miyazaki.
\newblock Stable marriage with ties and incomplete lists.
\newblock In Ming{-}Yang Kao, editor, {\em Encyclopedia of Algorithms}.
  Springer, Boston, MA, 2008 edition, 2008.

\bibitem{IFF01}
Satoru Iwata, Lisa Fleischer, and Satoru Fujishige.
\newblock A combinatorial strongly polynomial algorithm for minimizing
  submodular functions.
\newblock {\em Journal of the {ACM}}, 48(4):761--777, 2001.

\bibitem{IwataY20}
Satoru Iwata and Yu~Yokoi.
\newblock Finding a stable allocation in polymatroid intersection.
\newblock {\em Mathematics of Operations Research}, 45(1):63--85, 2020.

\bibitem{K15}
Naoyuki Kamiyama.
\newblock Stable matchings with ties, master preference lists, and matroid
  constraints.
\newblock In Martin Hoefer, editor, {\em Proceedings of the 8th International
  Symposium on Algorithmic Game Theory}, volume 9347 of {\em Lecture Notes in
  Computer Science}, pages 3--14, Berlin, Heidelberg, Germany, 2015. Springer.

\bibitem{K19}
Naoyuki Kamiyama.
\newblock Many-to-many stable matchings with ties, master preference lists, and
  matroid constraints.
\newblock In Edith Elkind, Manuela Veloso, Noa Agmon, and Matthew~E. Taylor,
  editors, {\em Proceedings of the 18th International Conference on Autonomous
  Agents and Multiagent Systems}, pages 583--591, Richland, SC, 2019.
  International Foundation for Autonomous Agents and Multiagent Systems.

\bibitem{K20}
Naoyuki Kamiyama.
\newblock On stable matchings with pairwise preferences and matroid
  constraints.
\newblock In Amal El~Fallah Seghrouchni, Gita Sukthankar, Bo~An, and Neil
  Yorke{-}Smith, editors, {\em Proceedings of the 19th International Conference
  on Autonomous Agents and Multiagent Systems}, pages 584--592, Richland, SC,
  2020. International Foundation for Autonomous Agents and Multiagent Systems.

\bibitem{K21}
Naoyuki Kamiyama.
\newblock Envy-free matchings with one-sided preferences and matroid
  constraints.
\newblock {\em Operations Research Letters}, 49(5):790--794, 2021.

\bibitem{K22}
Naoyuki Kamiyama.
\newblock A matroid generalization of the super-stable matching problem.
\newblock {\em {SIAM} Journal on Discrete Mathematics}, 36(2):1467--1482, 2022.

\bibitem{KMMP07}
Telikepalli Kavitha, Kurt Mehlhorn, Dimitrios Michail, and Katarzyna~E. Paluch.
\newblock Strongly stable matchings in time {$O(nm)$} and extension to the
  hospitals-residents problem.
\newblock {\em ACM Transactions on Algorithms}, 3(2):Article 15, 2007.

\bibitem{KTY18}
Fuhito Kojima, Akihisa Tamura, and Makoto Yokoo.
\newblock Designing matching mechanisms under constraints: An approach from
  discrete convex analysis.
\newblock {\em Journal of Economic Theory}, 176:803--833, 2018.

\bibitem{Kunysz18}
Adam Kunysz.
\newblock An algorithm for the maximum weight strongly stable matching problem.
\newblock In Wen{-}Lian Hsu, Der{-}Tsai Lee, and Chung{-}Shou Liao, editors,
  {\em Proceedings of the 29th International Symposium on Algorithms and
  Computation}, volume 123 of {\em Leibniz International Proceedings in
  Informatics}, pages 42:1--42:13, Wadern, Germany, 2018. Schloss Dagstuhl -
  Leibniz-Zentrum f{\"{u}}r Informatik.

\bibitem{Kunysz19}
Adam Kunysz.
\newblock A faster algorithm for the strongly stable {$b$}-matching problem.
\newblock In Pinar Heggernes, editor, {\em Proceedings of the 11th
  International Conference on Algorithms and Complexity}, volume 11485 of {\em
  Lecture Notes in Computer Science}, pages 299--310, Cham, Switzerland, 2019.
  Springer.

\bibitem{KunyszPG16}
Adam Kunysz, Katarzyna~E. Paluch, and Pratik Ghosal.
\newblock Characterisation of strongly stable matchings.
\newblock In Robert Krauthgamer, editor, {\em Proceedings of the 27th Annual
  {ACM-SIAM} Symposium on Discrete Algorithms}, pages 107--119, Philadelphia,
  PA, 2016. Society for Industrial and Applied Mathematics.

\bibitem{Malhotra04}
Varun~S. Malhotra.
\newblock On the stability of multiple partner stable marriages with ties.
\newblock In Susanne Albers and Tomasz Radzik, editors, {\em Proceedings of the
  12th Annual European Symposium on Algorithms}, volume 3221 of {\em Lecture
  Notes in Computer Science}, pages 508--519, Berlin, Heidelberg, Germany,
  2004. Springer.

\bibitem{M99}
David~F. Manlove.
\newblock Stable marriage with ties and unacceptable partners.
\newblock Technical Report TR-1999-29, The University of Glasgow, Department of
  Computing Science, 1999.

\bibitem{M13}
David~F. Manlove.
\newblock {\em Algorithmics of Matching under Preferences}.
\newblock World Scientific, Singapore, 2013.

\bibitem{M03}
Kazuo Murota.
\newblock {\em Discrete Convex Analysis}, volume~10 of {\em SIAM Monographs on
  Discrete Mathematics and Applications}.
\newblock Society for Industrial and Applied Mathematics, Philadelphia, PA,
  2003.

\bibitem{MY15}
Kazuo Murota and Yu~Yokoi.
\newblock On the lattice structure of stable allocations in a two-sided
  discrete-concave market.
\newblock {\em Mathematics of Operations Research}, 40(2):460--473, 2015.

\bibitem{OM20}
Sofiat Olaosebikan and David~F. Manlove.
\newblock An algorithm for strong stability in the student-project allocation
  problem with ties.
\newblock In Manoj Changat and Sandip Das, editors, {\em Proceedings of the 8th
  Annual International Conference on Algorithms and Discrete Applied
  Mathematics}, volume 12016 of {\em Lecture Notes in Computer Science}, pages
  384--399, Cham, Switzerland, 2020. Springer.

\bibitem{OlaosebikanM22}
Sofiat Olaosebikan and David~F. Manlove.
\newblock Super-stability in the student-project allocation problem with ties.
\newblock {\em Journal of Combinatorial Optimization}, 43(5):1203--1239, 2022.

\bibitem{O07}
Gregg O'Malley.
\newblock {\em Algorithmic Aspects of Stable Matching Problems}.
\newblock PhD thesis, The University of Glasgow, 2007.

\bibitem{O11}
James~G. Oxley.
\newblock {\em Matroid Theory}.
\newblock Oxford University Press, Oxford, UK, 2nd edition, 2011.

\bibitem{R42}
Richard Rado.
\newblock A theorem on independence relations.
\newblock {\em The Quarterly Journal of Mathematics}, 13(1):83--89, 1942.

\bibitem{S00}
Alexander Schrijver.
\newblock A combinatorial algorithm minimizing submodular functions in strongly
  polynomial time.
\newblock {\em Journal of Combinatorial Theory, Series {B}}, 80(2):346--355,
  2000.

\bibitem{S05}
Sandy Scott.
\newblock {\em A Study of Stable Marriage Problems with Ties}.
\newblock PhD thesis, The University of Glasgow, 2005.

\bibitem{Y17}
Yu~Yokoi.
\newblock A generalized polymatroid approach to stable matchings with lower
  quotas.
\newblock {\em Mathematics of Operations Research}, 42(1):238--255, 2017.

\end{thebibliography}
\end{document}